\documentclass[a4paper,danish]{article}
\usepackage[margin=3cm]{geometry}

\usepackage[utf8]{inputenc}
\usepackage[T1]{fontenc}
\usepackage{natbib, dsfont, enumitem, authblk,
  amssymb,soul,xcolor,amsmath,amsthm,graphicx,subcaption,verbatim,pgfplots,tikz,prodint}
\usetikzlibrary{calc,patterns,angles,quotes,automata, positioning,arrows,shapes}

\usepackage{titlesec}
\titlespacing*{\section}{0em}{2em}{0em}
\titlespacing*{\subsection}{0em}{2em}{0em}
\titlespacing*{\subsubsection}{0em}{2em}{0em}

\setlist[itemize]{topsep=0pt}
\setlist[enumerate]{topsep=0pt}

\usepackage[author=]{fixme}
\fxusetheme{color}
\definecolor{fxtarget}{rgb}{.5,.5,.5}
\definecolor{fxnote}{rgb}{.5,.5,.5}

\definecolor{linkcolor}{rgb}{0, 0, 0.54}
\usepackage[colorlinks=true,allcolors=linkcolor,linktocpage=true]{hyperref} 

\newcommand{\R}{\mathbb{R}}
\newcommand{\N}{\mathbb{N}}
\newcommand{\blank}{\makebox[1ex]{\textbf{$\cdot$}}}

\renewcommand{\phi}{\varphi}
\renewcommand{\epsilon}{\varepsilon}
\newcommand*\diff{\mathop{}\!\mathrm{d}}

\newcommand\smallO{\textit{o}}
\newcommand\bigO{\textit{O}}
\newcommand{\midd}{\; \middle|\;}
\newcommand{\1}{\mathds{1}}
\usepackage{ifthen} %% Empirical process with default argument

\DeclareMathOperator*{\argmin}{\arg\!\min}

\newcommand{\arrow}[1]{\xrightarrow{\; {#1} \;}}
\newcommand{\arrowP}{\xrightarrow{\; P \;}} % convergence in probability

\renewcommand{\L}[2]{\ensuremath{\mathcal{L}^{{#1}}{({#2})}}}
\newcommand{\LP}{\ensuremath{\L{2}{P}}} % Shortcut for l_P^2 spaces

\newcommand{\empmeas}{\ensuremath{\mathbb{P}_n}} % empirical measure
\newcommand{\E}{{\ensuremath{\mathop{{\mathbb{E}}}}}} % expectation

\newcommand{\pd}[1]{\ensuremath{\frac{\partial }{\partial #1}
    \Big\vert_{#1=0}}}

\theoremstyle{plain} % plain italic style
\newtheorem{theorem}{Theorem}
\newtheorem*{theorem*}{Theorem}

\newtheorem{lemma}[theorem]{Lemma}
\newtheorem{proposition}[theorem]{Proposition}

\theoremstyle{definition} % non-italic

\newtheorem{assumption}{Assumption}
\newtheorem*{assumption*}{Assumption}

\theoremstyle{remark}

\newlist{assumptionlist}{enumerate}{1}
\setlist[assumptionlist,1]{
  label=(\thetheorem\alph*),
  leftmargin=*,
  topsep=0pt,
  align=left,
  labelsep=0.7em % space between label and text
}

\fxsetup{status=draft}

\title{Nonparametric inference for sublevel-set probabilities of
  conditional average treatment effect functions}
\author[1]{Anders Munch}
\author[1]{Thomas A.~Gerds}

\affil[1]{Section of Biostatistics, University of Copenhagen}

\date{\today}

\begin{document}
\maketitle

\begin{abstract}

  The average treatment effect can obscure important heterogeneity
  when individuals respond differently to a treatment. While the
  conditional average treatment effect (CATE) function captures such
  heterogeneity, it is difficult to communicate when it depends on
  many covariates. Sublevels sets of a multivariate CATE function are
  equally complicated objects, but the probability of a sublevel set
  of a CATE function is a single number with a simple interpretation
  as the proportion of individuals whose expected treatment effect
  does not exceed a prespecified threshold. By varying the threshold,
  a univariate monotone curve appears which can be used to visualize
  the overall type and degree of heterogeneity in a population. We
  formalize this curve as a target parameter and show that it is not
  pathwise differentiable under a nonparametric model. To address this
  nonstandard estimation problem, we leverage recent advances in
  monotone function estimation and develop a Grenander-type estimator
  that incorporates machine learning. We also show that the best
  piecewise linear approximation to the curve of interest is a
  pathwise differentiable parameter, and we develop a debiased machine
  learning estimator of this approximation. We investigate our
  proposed estimators' finite sample performance in a sequence of
  numerical studies based on data synthesized from a randomized trial.
  The methods are illustrated in data from a randomized trial on
  diabetes medication.
  
\end{abstract}

\section{Introduction}
\label{sec:introduction}

The average treatment effect provides an easily interpretable
univariate summary of a treatment's overall effect in a population,
but it leaves out important information when some subjects respond
differently to treatment. An active area of research within causal
inference is therefore estimation of conditional average treatment
effects and the identification of subgroups with differential
treatment effects. The conditional average treatment effect (CATE)
function provides a detailed picture of how a treatment works, but it
can be difficult to summarize and visual when many variables are
conditioned on. We propose to summarize the degree of treatment
heterogeneity by visualizing the proportion of the population that
belongs to the sublevel set of the CATE function defined by the level
$\alpha$, as a function of $\alpha \in \R$. Equivalently, this
function can be defined as the cumulative distribution function of the
expected treatment effects in the population. Hence this estimand is a
univariate, monotone function which is easy to visualize and
interpret.

To motivate and illustrate our method, we provide an example based on
data from a randomized trial investigating the cardiovascular effect
of liraglutide \citep{marso2016liraglutide}; details about the data
are given in Section~\ref{sec:application}. We first estimate the CATE
function using a causal random forest with default settings
\citep{wager2018estimation,tibshirani24}. This is done in two separate
folds of the data and the fitted causal forests are then used to
predict expected treatment effects conditional on baseline
characteristics in the remaining data that were not used to fit the
causal forest. Finally, we compute the empirical distribution function
of the aggregated estimated expected treatment effects. This is a
version of the cross-fitted plug-in estimator defined formally in
Section~\ref{sec:plug-estimator}. The resulting estimator for this
data set is shown in the left panel of
Figure~\ref{fig:grf-plugin-leader} along with the average treatment
effect. We see that while the average treatment effect is positive,
the curve suggests that for around 35\% of the trial population the
treatment is not expected to be beneficial, while around 50\% has a
larger expected effect of the treatment than the average patient in
the trial. The right panel of Figure~\ref{fig:grf-plugin-leader} shows
a histogram of the estimated expected treatment effects, which also
indicate that some treatment heterogeneity is present in the trial
population.

\begin{figure}
  \centerline{\includegraphics[width=1\linewidth]{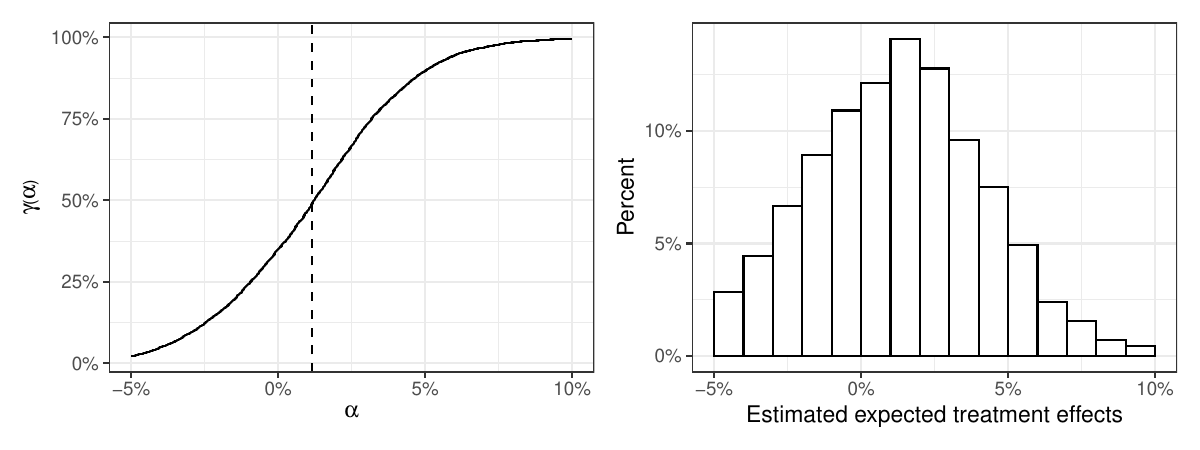}}
  \caption[]{Visualization of the estimated expected treatment effects
    based on data from a randomized trial. The left panel show the
    estimated cumulative distribution function of the expected
    treatment effects estimated by a causal random forest. The dashed
    line is the estimated average treatment effect obtained from the
    same causal random forest. The right panel is a histogram of the
    distribution of the estimated expected treatment effects.}
  \label{fig:grf-plugin-leader}
\end{figure}

In this article, we formally introduce the curve estimated in the left
panel of Figure~\ref{fig:grf-plugin-leader} as an estimand of interest
and illustrate some of its properties. We then examine the statistical
complexity of estimating this curve, showing in particular that it is
not a pathwise differentiable parameter. In addition to a simply
plug-in strategy, we propose two estimators relying on modern tools
from monotone curve estimation theory and debiased machine
learning. Both these estimation strategies rely on estimators of the
antiderivative(s) of the function of interest, and a key step is
showing that the pointwise evaluation of these antiderivatives are
pathwise differentiable and finding their efficient influence
functions.

The present work relies on the rich literature concerned with
estimation of the CATE function
\citep[e.g,][]{hill2011bayesian,lu2018estimating,kunzel2019metalearners,wager2018estimation,nie2021quasi,kennedy2023towards}. In
particular, our proposal is related to recent works that suggest
targeting low-dimensional features or projections of the CATE
function. For instance, \cite{semenova2021debiased} and
\cite{chernozhukovFisher} introduce estimation frameworks that target
summaries and approximations of the CATE function, and
\cite{levy2021fundamental,hines2022variable}, and
\cite{ziersen2025variable} define variable importance measures that
can be phrased as univariate summaries of the CATE function.

A closely related concept to the function-valued parameter of interest
suggested in the present paper is the sorted effects method introduced
by \cite{chernozhukov2018sorted}. The sorted effects method can be
expressed as estimating the inverse function of our parameter of
interest, i.e., the quantile function of the expected treatment
effects. \cite{chernozhukov2018sorted} consider parametric and
semi-parametric models and provide results that allow for the
construction of uniform confidence bands based on having an expression
for the asymptotic distribution of the CATE estimator used. In
contrast, we consider a fully nonparametric model and construct
estimators of the parameter of interest directly using tools from
monotone curve estimation theory. Our suggested estimators can employ
nonparametric data-adaptive methods such as machine learning to
estimate the CATE function. Allowing for flexible model estimation is
particularly important for the present purpose, because
(semi-)parametric models typically impose particular restrictions on
the type and degree of heterogeneity, e.g., through the number of
interaction terms and choice of link function in a generalized linear
model.

Another related concept is the targeting operator characteristic (TOC)
defined by \cite{yadlowsky2025evaluating}. For some treatment
prioritization rule \( S \colon \R^d \rightarrow \R \) and level
$u \in (0,1]$, the number \( \mathrm{TOC}(u; S) \) is the difference
between the average treatment effect in the population and the
treatment effect among subjects for which
\( S(W) \geq F_S^{-1}(1-u)\), where \( F_S \) denotes the cumulative
distribution function of \( S(W) \), and \( W \in \R^ d \) is a vector
of baseline covariates on which the prioritization of treatment is
based. \cite{yadlowsky2025evaluating} use various weighted averages of
\( u \mapsto \mathrm{TOC}(u; S) \) to quantify the performance of some
pre-specified prioritization rule \( S \). Compared to their proposal,
we consider a setting where the prioritization rule \( S \) is the
actual CATE function and has to be estimated from data. Also, our
effect measure is quantified in terms of the volume of the
sublevel sets (as measured by the marginal probability), instead of in
terms of the treatment effect within these sublevel sets.

The problem of estimating sublevel sets of densities and regression
functions has a long history
\citep[e.g.,][]{polonik1995measuring,tsybakov1997nonparametric,cavalier1997nonparametric,rigollet2009optimal,chen2017density}.
Recently, \cite{reeve2023optimal} and \cite{bonvini2023minimax}
considered estimation of sublevel sets of the CATE function. Our work
diverges from these by treating the sublevel set as a nuisance
parameter and letting instead its probability volume be the quantity
measure of interest. Also, we consider estimation for all threshold
values.

We introduce our notation and estimand in
Section~\ref{sec:notation-estimand}. In
Section~\ref{sec:non-pathw-diff}, we demonstrate that the estimand is
a non-standard estimation problem by proving that it is not a pathwise
differentiable parameter. In Section~\ref{sec:estim-thro-antid}, we
propose three different estimators and consider their asymptotic
properties. We consider a naive plug-in estimator, a Grenander-type
estimator, and an estimator targeting a first order spline
approximation of our estimand of
interest. Section~\ref{sec:numer-exper} contains the results of our
numerical experiments, while Section~\ref{sec:application} demonstrate
the results we obtain by employing our proposed estimators to the
liraglutide study. Section~\ref{sec:discussion} concludes with a
discussion of our findings. All proofs are deferred to the Appendix.

\section{Notation and estimand}
\label{sec:notation-estimand}

Following the Neyman-Rubin potential outcomes framework
\citep{neyman1923applications,rubin1974estimating,hernanRobinsWhatIf}, let \( Y^a \) denote the
potential outcome of \( Y \) under treatment \( A=a \). We assume that
\( Y \) and \( A \) are binary, and we let
\( W \in \mathcal{W} \subset \R^d \) denote an observed vector of
baseline covariates. Let \( (Y^1, Y^0, W) \sim Q \in \mathcal{Q} \),
where \( \mathcal{Q} \) denotes the collection of all probability
measures on \( \{0,1\}^2 \times \mathcal{W} \). The conditional
average treatment effect (CATE) function is the function
\( w \mapsto \E_Q{ \left[ Y^1 - Y^0 \midd W=w \right]}\), for
\( w \in \mathcal{W} \), where we use \( \E_{Q} \) to denote
expectation under the law \( Q \).

Instead of samples from some \( Q \), we observe samples of the form
\( O = (W, A, Y) \), where \( Y=A \, Y^1 + (1-A)\, Y^0 \) and the
conditional distribution of \( A \) given \( (Y^1, Y^0, W) \) is
determined by some conditional probability \( G \in \mathcal{G} \),
where \( \mathcal{G} \) denotes a collection of conditional
probability distributions. Any distribution for \( O \) is determined
by a \( Q \in \mathcal{Q} \) and \( G \in \mathcal{G} \), so we write
\( \mathcal{P}_{\mathcal{Q}, \mathcal{G}} \) for the model for the
observed data implied by \( \mathcal{Q} \) and \( \mathcal{G} \), and
we use \( P_{Q,G} \in \mathcal{P}_{\mathcal{Q}, \mathcal{G}} \) to
denote a distribution on the sample space
\( \mathcal{O} = \mathcal{W} \times \{0,1\} \times \{0,1\}  \) of the
observed data determined by \( Q \) and \( G \).  Assuming coarsening
at random (or conditional exchangeability) and positivity for the
family \( \mathcal{G} \), the CATE function is identifiable from any
distribution \( P_{Q,G} \in \mathcal{P}_{\mathcal{Q}, \mathcal{G}} \)
through the formula
\begin{equation*}
    \E_Q{ \left[ Y^1 - Y^0 \midd W=w  \right]} = \tau(P_{Q,G})(w),
  \quad \forall w \in \mathcal{W},
\end{equation*}
where
\begin{equation*}
  \tau(P)(w)  =
  \E_P{ \left[ Y \midd A=1, W=w  \right]}
  - \E_P{ \left[ Y \midd A=0, W=w  \right]}  ,
\end{equation*}
for all \( P \in \mathcal{P}_{\mathcal{Q}, \mathcal{G}} \)
\citep{robins1986new,gill1997coarsening,van2003unified}. 

We let \( \mathcal{P} \) denote a model for the observed data, and
introduce the following function-valued nuisance parameters defined on
\( \mathcal{P} \):
\begin{align}
  \label{eq:10}
  \pi(P)(w) & = P(A=1 \mid W=w),
  \\
  \label{eq:11}
  \mu(P)(a,w) & = \E_P{\left[ Y \mid A=a, W=w  \right]}.
\end{align}
Note that
\begin{equation}
  \label{eq:28}
  \tau(P)(w) = \mu(P)(1, w) - \mu(P)(0, w) .
\end{equation}
When the model for the counterfactual data \( \mathcal{Q} \) is
unspecified, coarsening at random entails no structural restrictions
on the observed data model
\( \mathcal{P}_{\mathcal{Q}, \mathcal{G}} \). The positivity
assumption implies a bound on $\pi$, which we make explicit with the
following stronger uniform positivity assumption.

\begin{assumption}
  \label{assum1}
  There exists a constant \( c >0 \) such that for all
  \( P \in \mathcal{P} \) and \( w \in \mathcal{W} \), it holds that
  \( c \leq \pi(P)(w) \leq 1-c \).
\end{assumption}

Assumption~\ref{assum1} implies that $\mu$ is well-defined. Define for
any \( P \in \mathcal{P} \), where \( \mathcal{P} \) fulfills
Assumption~\ref{assum1}, and \( \alpha \in [-1,1] \) the real-valued
parameter
\begin{equation}
  \label{eq:27}
  \gamma(P)(\alpha) =
  \int_{\mathcal{W}} \1{\{\tau(P)(w) \leq \alpha\}}  P(\diff w)
  % =
  % P{(\{ w \in \mathcal{W} : \tau(P)(w) \leq \alpha \})}
  =
  P(\tau(P)(W) \leq \alpha).
\end{equation}
Let \( \mathcal{D}_{[-1,1]} \) denote the space of càdlàg functions
with domain \( [-1,1] \). Our main parameter of interest is the
function-valued parameter
$\gamma \colon \mathcal{P} \rightarrow \mathcal{D}_{[-1,1]}$, where
the function \( \gamma(P) \colon [-1,1] \rightarrow \R \) is defined
as \( \alpha \mapsto \gamma(P)(\alpha) \). We refer to $\gamma$ as the
sublevel function.

To ease notation, we shall sometimes in the following suppress the
dependence on \( P \). For instance, we will sometimes write
$\mu(a,w)$ and $\gamma(\alpha)$, with the implicit understanding that
both values depend on \( P \), and we shall also write, e.g.,
$\gamma(\alpha) \colon \mathcal{P} \rightarrow \R$, which is formally
to be understood as the map
$\gamma(\blank)(\alpha) \colon \mathcal{P} \rightarrow \R$.

To see how treatment heterogeneity can be read off from the curve
$\gamma$, we visualize its shape for a simple data-generating
distribution under various parameter settings. We consider the
data-generating distribution determined by the relations
\begin{equation}
  \label{eq:simple-dgm}
  \begin{split}
    W_1 & \sim \mathrm{Unif}{([-1,1])},
          \\
    W_2&  \sim \mathrm{Bernoulli}(p) ,
    \\
    \E{\left[ Y \mid A, W_1, W_2 \right]}
        &
    = \mathrm{expit}
    \left(
    \beta_1 W_1 A  +  \beta_2 W_2 A 
    \right),
  \end{split}
\end{equation}
where \( A \in \{0,1\} \) indicates treatment. The covariates
\( W_1 \) and \( W_2 \) can be interpreted as, e.g., a continuous a
biomarker and gender, respectively. The parameters $\beta_1$ and
$\beta_2$ control the degree of interaction between the treatment and
each of the two covariates. In Figure~\ref{fig:illu-gamma}, we have
drawn the sublevel function \( \gamma \) defined in
equation~\eqref{eq:27} for different values of \( p \), \( \beta_1 \),
and \( \beta_2 \). The figure demonstrates that when there are no
interactions (\( \beta_1 = 0 \) and \( \beta_2 = 0 \)), \( \gamma \)
is a degenerate cumulative distribution with all point mass at the
average treatment effect. Introducing an interaction between treatment
and a continuous covariate (\( \beta_1 >0 \) and \( \beta_2 =0\))
makes $\gamma$ a continuous function. When there is only an
interaction between treatment and a discrete covariate
(\( \beta_1 =0 \) and \( \beta_2 =1\)), $\gamma$ has a plateau which
illustrates that there are two groups with differential treatment
effects. When there are interactions between both a continuous and a
discrete covariates (\( \beta_1 >0 \) and \( \beta_2 =1\)), this
plateau is smoothed out to some degree depending on the size of the
interaction terms. Finally, we see that when an interaction term
between treatment and \( W_2 \) is present, the shape of \( \gamma \)
also depends on the prevalence of \( W_2 \).

\begin{figure}
  \centerline{\includegraphics[width=1\linewidth]{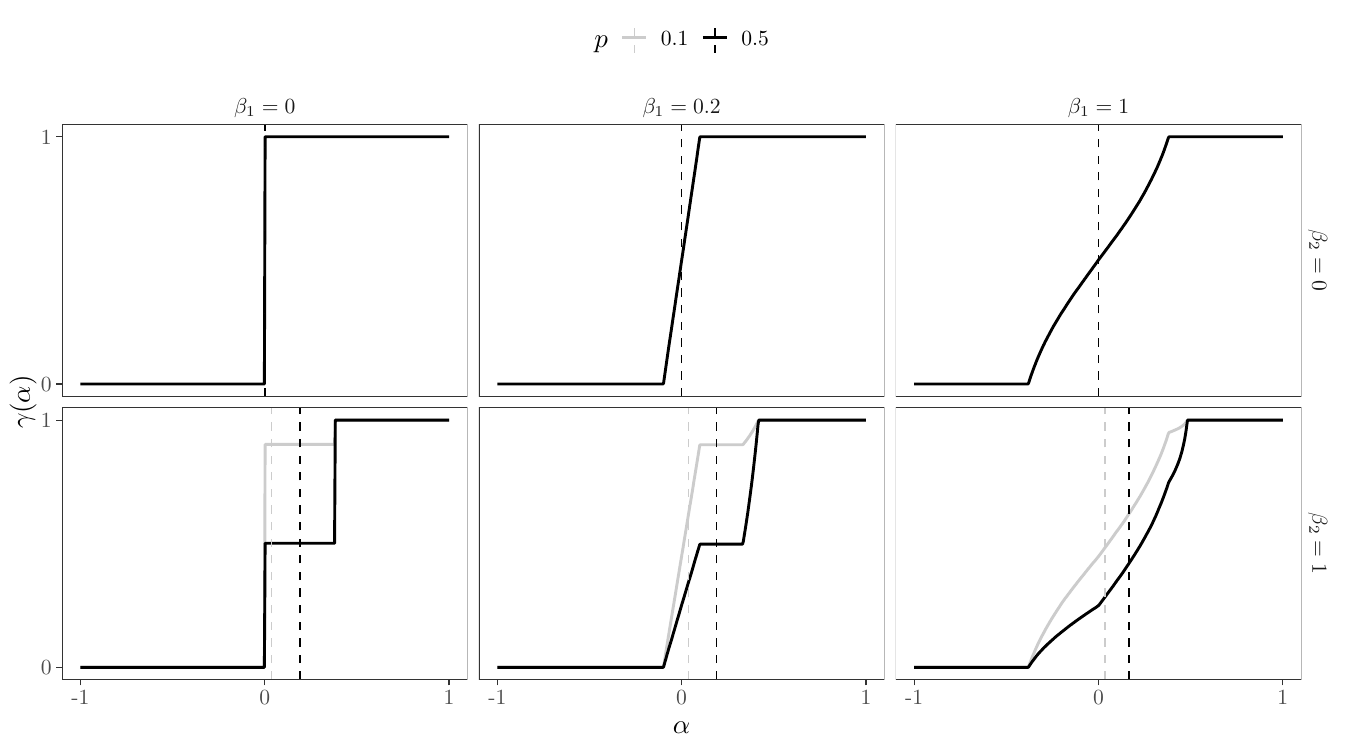}}
  \caption[]{The solid curve is the sublevel function \( \gamma \)
    defined in equation~\eqref{eq:27} under different data-generating
    distributions determined by the values \( p \), \( \beta_1 \), and
    \( \beta_2 \), c.f., equation~\eqref{eq:simple-dgm}. The dashed
    lines indicate the average treatment effect.}
  \label{fig:illu-gamma}
\end{figure}

Besides providing a visual summary of the degree and type of
heterogeneity in the population, each value $\gamma(\alpha)$ has a
clear interpretation as the proportion of the population with an
expected treatment effect below or equal to $\alpha$. For instance,
$\gamma(0)$ denotes the proportion of the population for which the
treatment effect is not expected to be beneficial. We emphasize that
this interpretation is relative to which set of conditioning
covariates is used to define the CATE function, as this defines how
fine or coarse the expected treatment effects are defined. Hence,
conditioning on different sets of covariates in the definition of the
CATE function can lead to different associated sublevel functions. An
attractive feature of a sublevel function $\gamma$ is that it will
always be a univariate function, no matter the dimension of the
covariates space \( \mathcal{W} \).

\section{A non-standard estimation problem}
\label{sec:non-pathw-diff}

Before constructing estimators of the sublevel function, it is of
interest to understand how difficult the estimation problem is. A
first step in this direction is to clarify whether $\gamma(\alpha)$ --
i.e., the pointwise evaluation of the sublevel function -- is a
pathwise differentiable parameter. The precise definition of this
condition is given in Appendix~\ref{sec:gener-result-pathw}. Pathwise
differentiability is a necessary condition for the existence of
regular asymptotically linear estimators
\citep{van1991differentiable}. In this section, we show that
$\gamma(\alpha)$ is not in general pathwise differentiable. This
implies that the usual influence function-based tools from
semiparametric efficiency theory is not immediately available
\citep{bickel1993efficient,van2000asymptotic,van2011targeted,kennedy2022semiparametric}.

To formally state the result of this section, we define the model
\( \mathcal{P}_S \) as follows. The model \( \mathcal{P}_S \) fulfills
Assumption~\ref{assum1}, and for all
\( P \in \mathcal{P}_{\mathcal{S}} \) the CATE function \( \tau(P) \)
is smooth (infinitely many times differentiable) and the marginal
distribution of \( W \) is dominated by Lebesgue measure.

\begin{theorem}
  \label{theorem:target-nondiff}
  Fix \( \alpha \in (-1,1) \) and \( P \in \mathcal{P}_{S} \), let
  $\nabla\tau(P)$ denote the gradient of \( w \mapsto \tau(P)(w) \),
  and let \( \tau(P)^{-1}(\alpha) \subset \mathcal{W} \) denote the
  $\alpha$-level-set of $\tau(P)$.  If either
  $\nabla\tau(P)(w) \not = 0$ for some
  \( w \in \tau(P)^{-1}(\alpha) \) or $\nabla\tau(P)(\mathcal{U}) = 0$
  for some open subset $\mathcal{U} \subset \mathcal{W}$ with
  $\alpha \in \tau(P)(\mathcal{U})$, then
  \( \gamma(\alpha) \colon \mathcal{P}_S \rightarrow \R \) is not
  pathwise differentiable at \( P \).
\end{theorem}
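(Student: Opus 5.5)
The plan is to exhibit, in each of the two cases, a smooth one-dimensional submodel $\{P_t\}\subset\mathcal{P}_S$ through $P$ with score $g\in\LP$ along which $t\mapsto\gamma(P_t)(\alpha)$ is not differentiable at $t=0$. For $\gamma(\alpha)$ to be pathwise differentiable, the derivative must exist along every such path and be a bounded linear functional of the score, namely $\E_P[\tilde\psi(O)g(O)]$ for some gradient $\tilde\psi$. A single path with a non-differentiable trajectory therefore suffices. In both cases I keep the marginal law of $W$ and the propensity $\pi$ fixed and perturb only the outcome regression. Because $Y$ is binary, I set $\mu_t(1,w)=\mu(1,w)+t\,h(w)$ with $h$ smooth and bounded, and I leave $\mu_t(0,w)=\mu(0,w)$. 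If needed to keep $\mu_t\in(0,1)$, I use $\mathrm{expit}(\mathrm{logit}\,\mu(1,w)+t\,h(w))$ instead. The resulting path stays in $\mathcal{P}_S$: Assumption~\ref{assum1} is untouched, $\tau_t$ is smooth, and $W$ is still Lebesgue dominated. Its score is $g(O)=A\,\{Y-\mu(1,W)\}\,h(W)/\{\mu(1,W)(1-\mu(1,W))\}$ or the analogous logistic score, which is bounded and has mean zero. Along this path $\tau_t=\tau+t\,h$, so
\[
\gamma(P_t)(\alpha)=P\bigl(\tau(W)+t\,h(W)\le\alpha\bigr).
\]

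\emph{Flat case} ($\nabla\tau=0$ on an open $\mathcal{U}$ with $\alpha\in\tau(\mathcal{U})$). I may take $\mathcal{U}$ connected, so $\tau\equiv\alpha$ on $\mathcal{U}$. The set $\mathcal{U}$ has positive $P$-mass; if its Lebesgue measure is positive but its $P$-mass is zero, I shrink to a subset carrying positive $P$-mass, or else I invoke that the marginal is dominated with a density positive somewhere on $\mathcal{U}$. This point needs care. I choose $h\ge0$ smooth, supported in $\mathcal{U}$, and positive on a set $B$ with $P(B)>0$. For $t<0$ the points of $B$ remain in the sublevel set, while for $t>0$ they leave it. Hence $\gamma(P_t)(\alpha)-\gamma(P)(\alpha)$ equals $0$ for $t<0$ and $-P(h>0)+o(1)$ for $t>0$, up to the contribution from points outside $\mathcal{U}$. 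Since $h$ is supported in $\mathcal{U}$, that contribution vanishes. The map $t\mapsto\gamma(P_t)(\alpha)$ therefore jumps at $0$ and is not even continuous, so it is certainly not differentiable.

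\emph{Regular case} ($\nabla\tau(w_0)\neq0$ for some $w_0$ with $\tau(w_0)=\alpha$). By the implicit function theorem, near $w_0$ the level set is a smooth hypersurface, and in local coordinates $w=(s,v)$ with $s=\tau(w)-\alpha$ the change of variables is a diffeomorphism. Here I must assume, or arrange by choosing $w_0$, that the density of $W$ is positive near $w_0$. The natural plan is to pick $h$ supported in a small ball around $w_0$ so that $\gamma(P_t)(\alpha)-\gamma(P)(\alpha)=\int\bigl[\1\{s\le -t h\}-\1\{s\le0\}\bigr]p(s,v)\,\diff s\,\diff v$ and to make this behave like $|t|$, or oscillate in $t$, rather than linearly. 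With a fixed $h$ the expansion is linear: $-t\int h(0,v)p(0,v)\,\diff v$. So a one-direction perturbation is not enough. Instead I take a path whose perturbation depends nonlinearly on $t$ yet stays smooth in $t$ in the $\LP$ sense. One option is $h_t(w)=|t|\,h(w)$ composed with a sign-dependent choice: use $h^+$ for $t>0$ and $h^-$ for $t<0$, with $h^\pm$ chosen to give the same score at $0$, for example equal values of $\mu$-scores in $L^2$ to first order, but different level-set integrals $\int h^\pm(0,v)p(0,v)\diff v$. That would give left and right derivatives that differ.

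The cleaner route, which I would try first, is to exploit that the score depends only on $h$ in $L^2(P)$, whereas the derivative depends on the trace of $h$ on the level set $\{s=0\}$, a $P$-null set. Take $h_n$ smooth with $\|h_n\|_{L^2(P)}\to0$ but $\int h_n(0,v)p(0,v)\diff v=1$, for example bumps concentrated in a shrinking slab around the level set. Then pathwise derivatives $\dot\gamma(h_n)\equiv -1$ while the scores $g_n\to0$ in $\LP$, contradicting boundedness of the linear map $g\mapsto\E[\tilde\psi g]$. The main obstacle is making this rigorous against the paper's definition (Appendix~\ref{sec:gener-result-pathw}), namely verifying that each $P_{t,n}$ lies in $\mathcal{P}_S$ and is differentiable in quadratic mean, and that the trace computation via the coarea formula holds with a density that may vanish. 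The flat case is essentially the same unboundedness phenomenon but more dramatic.
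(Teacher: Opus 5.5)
Your proposal is correct. The flat case is exactly the paper's argument: you perturb only $\mu(1,\cdot)$ by a nonnegative bump supported where $\tau\equiv\alpha$, so $\gamma(P_\epsilon)(\alpha)$ drops by $P(h(W)>0)$ for every $\epsilon>0$ and the trajectory is discontinuous at $0$.

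In the regular case you take a different route. The paper fixes one smooth $g$ supported in a box $B$ and computes the Gateaux derivative. It notes that this derivative depends on $g$ only through its values on $\check B=B\cap\tau^{-1}(\alpha)$. It then argues that a gradient must be orthogonal to all scores of perturbations supported in $B\setminus\check B$. Since $\check B$ is lower-dimensional, this forces $P[\psi_P\dot\ell]=0$ for every $g$ supported in $B$, which contradicts a $g$ with nonzero derivative.

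You instead show directly that the derivative, a weighted integral of the trace of $h$ on the level set, is not a bounded functional of the score in $\LP$. With $h_n(s,v)=\phi(ns)\chi(v)$ in level-set coordinates, the trace integral stays fixed while $\|h_n\|_P\to0$ by dominated convergence. Cauchy--Schwarz then gives $P[\tilde\psi g_n]\to0$, against a constant nonzero derivative. This is the dual form of the same obstruction, but it is shorter and sidesteps two steps the paper treats loosely:
\begin{itemize}
\item The scores only probe $\E_P[\psi_P\,A\{Y-\mu(1,W)\}\mid W]$, not $\psi_P$ itself. So the paper's claim that ``$\psi_P$ vanishes off $\check B$'' really needs density of $C_c^\infty(B\setminus\check B)$ in $L^2$ together with a projection argument.
\item In your argument, the nullity of $\check B$ enters automatically through dominated convergence, so no separate dimension argument is needed.
\end{itemize}
What the paper's version buys is that it uses only the family indexed by a single $g$ rather than a tailored sequence.

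The regularity caveats you flag are also silently needed in the paper's derivative formula and in its final choice of $g$, so they are not specific to your route. These are: continuity and positivity of the density of $W$ on the level set near $w_0$; $\mu(1,\cdot)$ bounded away from $0$ and $1$ on the support, so that $\|g_n\|_P\lesssim\|h_n\|_P$; and positive mass of $\mathcal{U}$.

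You should discard the sign-dependent $h^{\pm}$ idea. The paper's paths are one-sided ($\epsilon>0$). In any case, a two-sided DQM path has a single score. That score pins down $h^{\pm}$ $P$-a.e., hence on the level set by continuity and positive density, so the two one-sided derivatives would coincide.
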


The two cases in Theorem~\ref{theorem:target-nondiff} cover the
situation where the \( \alpha \)-level-set of the CATE function
contains a non-critical point, and the situation where the CATE
function is locally constant. A situation not covered by the theorem
is, for instance, values at which the CATE function has a
saddle-point. We conjecture that pathwise differentiability will also
fail in this case.

To gain some intuition for the why $\gamma(\alpha)$ fails to be
pathwise differentiable, consider the special case where
\( W \sim \mathrm{Unif}([0,1]) \) and the CATE function \( \tau \) is
bijective. In this case, \( \gamma(\alpha) = \tau^{-1}(\alpha) \),
where $\tau^{-1}$ denotes the inverse of $\tau$. The pointwise
evaluation of the CATE function is not pathwise differentiable in a
non-parametric model, and hence we would not expect its inverse to be
pathwise differentiable either.

The model \( \mathcal{P}_S \) imposes a high-degree of smoothness and
will thus be a submodel of most non-parametric models. This means that
Theorem~\ref{theorem:target-nondiff} implies that pathwise
differentiability also fails under more general non-parametric
models. On the other hand, estimation of $\gamma(\alpha)$ can be
related to a classification problem with decision set
\( \{ w : \tau(w) \leq \alpha\} \), and it is known that if \( \tau \)
fulfills a certain margin condition in the vicinity of the level
$\alpha$, then fast rates of convergence for the this classification
problem are achievable \citep{mammen1999smooth,audibert2007fast}. A
discontinuous CATE function can fulfill this margin condition, which
corresponds to a situation in which the sublevel function $\gamma$ is
locally flat. It is thus conceivable that the parameter
\( P \mapsto \gamma(P)(\alpha) \) is pathwise differentiable for
certain levels $\alpha$ under a model \( \mathcal{P} \) that assumes
$\tau$ can only take on a finite number of values, but we do not
investigate this further here.

\section{Estimators of the sublevel function}
\label{sec:estim-thro-antid}

In this section we propose three different estimators of $\gamma$
which are based on an iid.\ data set \( \{O_i\}_{i=1}^{n} \), with
\( O_i \sim P \). All of them rely on estimators of sublevel sets. The
sublevel set at level $\alpha$ is uniquely characterized by the
indicator function
\begin{equation}
  \label{eq:12}
  \eta_{\alpha}(P)(w) = \1{\{\tau(P)(w) \leq \alpha\}}.
\end{equation}
An estimator of $\eta_{\alpha}$ can be obtained by plugging an
estimator of $\tau$ into equation~\eqref{eq:12}, while an estimator of
$\tau$ can in turn be obtained by plugging an estimator of $\mu$ into
equation~\eqref{eq:28}. We shall assume that such plug-in estimators
of $\eta_{\alpha}$ and $\tau$, obtained from a given estimator
$\hat{\mu}_n$, are used. We discuss other strategies for estimating
\( \eta_{\alpha} \) and $\tau$ in Section~\ref{sec:discussion}. Some
of our proposed estimators in this section also rely on an estimator
\( \hat{\pi}_n \) of the propensity score $\pi$. We use $\hat{P}_n$ as
generic notation for a tuple of estimators
\( \hat{P}_n = (\hat{\mu}_n, \hat{\pi}_n, \bar{\mathbb{P}}_n) \),
where \( \bar{\mathbb{P}}_n \) is used to denote the marginal
empirical measure of the covariates \( \{W_i\}_{i=1}^n \). For all
proposed estimators of $\gamma$ we employ cross-fitting: For some
(small) \( K \in \N \), e.g., \( K=2 \) or \( K=5 \), we partition the
data set independently into \( K \) folds and let \( n_k \) denote the
number of observations in each fold. We let \( \mathbb{P}_n^k \)
denote the empirical measure of the \( k \)'th fold and use
\( \hat{P}_n^{-k} \) to denote an estimator fitted on all data except
the \( k \)'th fold; e.g., \( \hat{\tau}_n^{-k} \) denotes the
estimator $\hat{\tau}_n$ fitted on all data except the \( k \)'th
fold. We make the following mild assumption on the partition of the
data.

\begin{assumption}
  \label{assum2}
  It holds that \( n/n_k = K + \smallO_P{(1)} \).
\end{assumption}

Finally, we make the assumption that the propensity score estimator is
uniformly bounded away from 0 and 1.

\begin{assumption}
  \label{assum3}
  There exists a constant \( c >0 \) such that
  \( c \leq \hat{\pi}_n(w) \leq 1-c \) for all
  \( w \in \mathcal{W} \).
\end{assumption}

\subsection{Cross-fitted plug-in estimator}
\label{sec:plug-estimator}

The plug-in estimation strategy is based on inserting estimated values
into the definition of the sublevel function $\gamma$ given in
equation~\eqref{eq:27}. This is based on using an estimator
$\hat{\tau}_n$ of the CATE function to obtain the expected treatment
effects of all samples $\hat{\tau}_n(W_i)$, \( i=1, \dots, n \). We
then use the empirical distribution function of the estimated expected
treatment effects \( \{\hat{\tau}_n(W_i)\}_{i=1}^n \) as an estimator
of $\gamma$.  To avoid overfitting, we propose a cross-fitted version
of this estimator, which is defined as
\begin{equation}
  \label{eq:gamma-plug-in}
  \hat{\gamma}_n^{\text{pl}}(\alpha) = \frac{1}{K}\sum_{k=1}^{K}\mathbb{P}_n^k{
    [\hat{\eta}_{\alpha,n}^{-k}]},
  \quad \text{where} \quad
  \hat{\eta}_{\alpha,n}^{-k}(w) = \1{\{\hat{\tau}_n^{-k}(w) \leq \alpha\}},
  \quad
  \alpha \in [-1,1].
\end{equation}

Plug-in estimators are easy to use but it is often difficult to derive
their asymptotic distribution when data-adaptive methods like machine
learning algorithms are used. In our case, this can be seen from the
decomposition
\begin{equation*}
  \hat{\gamma}_n^{\text{pl}}(\alpha) - \gamma(\alpha)
  = \frac{1}{K}\sum_{k=1}^{K} 
  P{[\hat{\eta}_{\alpha,n}^{-k} - \eta_{\alpha}]}
  + \frac{1}{K}\sum_{k=1}^{K} 
  (\mathbb{P}_n^k- P){[\hat{\eta}_{\alpha,n}^{-k}]},
\end{equation*}
where we had added and subtracted
\( P{[\hat{\eta}_{\alpha,n}^{-k}]} \) to the difference
\( \hat{\gamma}_n^{\text{pl}}(\alpha) - \gamma(\alpha) \). As
\( \mathbb{P}_n^k \) and \( \hat{\eta}_{\alpha,n}^{-k} \) are
independent, standard arguments can be used to show that the second
term in the display converges at the parametric rate to a centered
Gaussian distribution with variance
\( \gamma(\alpha)(1-\gamma(\alpha)) \) if $\hat{\eta}_{\alpha,n}$ is a
consistent estimator of $\eta_{\alpha}$
\cite[e.g.,][]{chernozhukov2018double,kennedy2022semiparametric}. However,
\( \hat{\eta}_{\alpha,n}^{-k} \) itself will typically converge at a
slower than parametric rate, and hence the first term in the display
above will be dominating. The summands of the dominating term can be
bounded by
\begin{equation*}
  P{[|\hat{\eta}_{\alpha,n}^{-k} - \eta_{\alpha}|]}
  = P(\hat{\eta}_{\alpha,n}^{-k}(W) \not = \eta_{\alpha}(W)),
\end{equation*}
which is referred to as the disagreement probability or the
probability of symmetric difference between $\hat{\eta}_{\alpha,n}$
and $\eta_{\alpha}$. Under a smoothness condition on $\gamma$ at
$\alpha$, the disagreement probability can be bounded by the
supremum-norm of the difference between \( \hat{\tau}_n^{-k} \) and
\( \tau \) \citep{audibert2007fast,rigollet2009optimal}. We may thus
expect that the plug-in estimator \( \hat{\gamma}_n^{\text{pl}} \)
inherits the sup-norm convergence rate of the chosen CATE learner.

There exist several suggestions for the construction of confidence
regions using kernel-based plug-in estimators of regression functions
and densities
\citep{mason2009asymptotic,mammen2013confidence,chen2017density,qiao2019nonparametric}. Recently,
\cite{bonvini2023minimax} proposed a method for constructing
asymptotically valid confidence regions for sublevel sets of the CATE
function based on uniform confidence bands for the CATE function
itself. \cite{chernozhukov2018sorted} showed that the operator that
maps a CATE function to the quantile function $\gamma^{-1}$ is
Hadamard differentiable under some regularity conditions, and used
this to establish asymptotic results for estimators of $\gamma^{-1}$
based on the asymptotic distribution of the plugged in CATE
learner. These results could potentially be adapted to construct
confidence intervals for the plug-in estimator defined in
equation~\eqref{eq:gamma-plug-in}. In the following two subsections we
instead attempt to construct alternative estimators of $\gamma$
directly with tractable asymptotic distributions that do not depend on
the asymptotic distribution of the specific choice of CATE learner but
only on its rate of convergence.

\subsection{Grenander-type estimator}
\label{sec:gren-type-estim}

\cite{westling2020unified}, using ideas from
\cite{groeneboom1983density} and \cite{van2006estimating}, provide
asymptotic theory for so-called generalized Grenander-type
estimators. A Grenander-type estimator of a nondecreasing function is
based on the fact that the antiderivative of a nondecreasing function
is convex. This leads to the idea of estimating a nondecreasing
function of interest with the derivative of the greatest convex
minorant of an estimator of the function of interest's
antiderivative. As \( \gamma \) is a cumulative distribution function
and hence monotone, we can employ a Grenander-type estimator to
estimate $\gamma$.

We use $\Gamma$ to denote the antiderivative of $\gamma$, i.e., we
define the function-valued parameter $\Gamma \colon \mathcal{P}
\rightarrow \mathcal{D}_{[-1,1]}$ as
\begin{equation*}
  \Gamma(P)(\alpha) = \int_{-1}^{\alpha} \gamma(P)(u)  \diff u,
  \quad \alpha \in [-1,1].
\end{equation*}
The advantage of lifting the estimation problem from $\gamma$ to
$\Gamma$ is that estimation of an antiderivative is often easier than
estimation of the function itself. Indeed,
Theorem~\ref{theorem:anti-target-diff} demonstrates that
$\Gamma(\alpha)$ is a pathwise differentiable parameter if $\gamma$ is
continuous at $\alpha$. In the following, we use the notion of a
saturated tangent space, which means that the tangent space is the
collection of all \( P \)-zero mean functions in \( \LP \). This can
be taken as a definition of what it means for a model
\( \mathcal{P} \) to be fully nonparametric. We refer to
Appendix~\ref{sec:gener-result-pathw} for the precise definition of a
tangent space.

\begin{theorem}
  \label{theorem:anti-target-diff}
  Let \( \mathcal{P} \) be a model with saturated tangent space and
  fulfilling Assumption~\ref{assum1}, and let \( \alpha \in (-1, 1) \)
  be fixed. The parameter \( \Gamma({\alpha}) \) is pathwise
  differentiable at any \( P \in \mathcal{P} \) for which
  \( u \mapsto \gamma(P)(u) \) is continuous at $\alpha$. In this
  case, the efficient influence function of \( \Gamma({\alpha}) \) is
    \begin{equation}
    \label{eq:anti-target-can-grad}
    \upsilon_{\alpha}(P)(O) = \1{\{\tau(P)(W) \leq
      \alpha\}}(\alpha- \phi(P)(O)) - \Gamma(P)({\alpha}),
  \end{equation}
  where
  \begin{equation*}
    \phi(P)(O) =
    \tau(P)(O) +
    \frac{A}{\pi(P)(W)}\{Y - \mu(P)(1, W)\}
    - \frac{1-A}{1-\pi(P)(W)}\{Y - \mu(P)(0, W)\}.
  \end{equation*}
\end{theorem}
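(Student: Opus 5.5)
Our plan is to rewrite $\Gamma(\alpha)$ in a form whose pathwise derivative can be computed directly. By Fubini,
\begin{equation*}
  \Gamma(P)(\alpha) = \int_{-1}^{\alpha} P(\tau(W)\le u)\diff u
  = \E_P\big[(\alpha - \tau(P)(W))_+\big],
\end{equation*}
using $\tau\in[-1,1]$, so that $\int_{-1}^{\alpha}\1\{\tau\le u\}\diff u=(\alpha-\tau)_+$. The map $t\mapsto(\alpha-t)_+$ is Lipschitz and differentiable everywhere except at $t=\alpha$, with derivative $-\1\{t\le\alpha\}$ away from that point. Continuity of $\gamma$ at $\alpha$ means $P(\tau(W)=\alpha)=0$, so the kink is negligible. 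Hence we expect the parameter to behave like a smooth functional of $(\tau, P_W)$.

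Fix a regular one-dimensional submodel $P_t$ with score $s\in\LP$. We decompose
\begin{equation*}
  \Gamma(P_t)(\alpha)-\Gamma(P)(\alpha) = (P_t-P)[(\alpha-\tau)_+] + P_t[(\alpha-\tau_t)_+-(\alpha-\tau)_+].
\end{equation*}
The first term, divided by $t$, tends to $P[((\alpha-\tau)_+-\Gamma(\alpha))s]$ by standard arguments for bounded functions along differentiable paths in quadratic mean.

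For the second term, we write
\begin{equation*}
  (\alpha-\tau_t)_+-(\alpha-\tau)_+ = -\1\{\tau\le\alpha\}(\tau_t-\tau)+R_t,
\end{equation*}
where $|R_t|\le|\tau_t-\tau|\,\1\{|\tau-\alpha|\le|\tau_t-\tau|\}$. The linear part gives the usual CATE-type derivative. The derivative of $P[\1\{\tau\le\alpha\}\tau]$ with $\eta_\alpha$ held fixed is known from the ATE literature: the derivative of $t\mapsto P[h\,\tau_t]$ for bounded $h$ is $P[h(W)(\phi(P)(O)-\tau(W))s]$, with weights bounded by Assumption~\ref{assum1}. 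Replacing $P_t$ by $P$ in front of the linear term costs $o(t)$, since $\tau_t-\tau=O(t)$ in $L^2$. Collecting the pieces yields
\begin{equation*}
  (\alpha-\tau)_+ - \1\{\tau\le\alpha\}(\phi-\tau) - \Gamma = \1\{\tau\le\alpha\}(\alpha-\phi)-\Gamma,
\end{equation*}
which is exactly $\upsilon_\alpha$. It is mean zero and in $\LP$, and since the tangent space is saturated it is the efficient influence function.

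\textbf{Main obstacle: the remainder.} We must show $P_t[|R_t|]=o(t)$. To keep this step tractable, we plan to work with paths whose scores are bounded, so that $\sup_w|\tau_t-\tau|(w)\le Ct$ with densities ratios near one. This class is dense in the saturated tangent space, and pathwise differentiability with a continuous $\LP$-gradient extends to the closure. On such paths,
\begin{equation*}
  P_t|R_t|\lesssim Ct\,P(|\tau-\alpha|\le Ct) = t\,\{\gamma(\alpha+Ct)-\gamma((\alpha-Ct)-)\},
\end{equation*}
which is $o(t)$ precisely because $\gamma$ is continuous at $\alpha$. The delicate point is justifying the uniform bound $\|\tau_t-\tau\|_\infty=O(t)$: we plan to use the standard path $\diff P_t=(1+ts)\diff P$ with bounded $s$, together with the fact that conditional means under such paths are perturbed by $O(t)$ uniformly thanks to positivity.
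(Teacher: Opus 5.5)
Your proposal is correct and essentially follows the paper's proof. Your remainder is $R_t=(\eta_\alpha(P_t)-\eta_\alpha(P))(\alpha-\tau_t)$, so $P_t[R_t]=-B(t)$ from the paper's proof, and Lemma~\ref{lemma:luedtke-laan} controls $B$ by your same bound $t\,P(|\tau-\alpha|\le Ct)=o(t)$, using a uniform $O(t)$ bound on $\tau_t-\tau$ and a bounded density ratio. The only difference is the smooth part: the paper uses the exact von~Mises expansion of Lemma~\ref{lemma:gateaux} (whose product terms have zero derivative) together with Proposition~\ref{prop:expansion-gradient}, whereas you import the known derivative of $t\mapsto P[h\,\tau_t]$. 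The paper likewise restricts to regular paths (Assumption~\ref{assum-path}), so your bounded-score restriction is legitimate and you do not need the claim that differentiability ``extends to the closure''; that extension is not automatic for paths outside the class.
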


Theorem~\ref{theorem:anti-target-diff} implies that we can use
standard influence-function based techniques to construct estimators
of $\Gamma$ at points at which $\gamma$ is continuous. For instance,
the one-step estimator
\begin{equation}
  \label{eq:3}
  \hat{\Gamma}_n({\alpha})
  = \Gamma(\hat{P}_n)({\alpha}) +
  \empmeas{[\upsilon_{\alpha}(\hat{P}_n)]},
\end{equation}
will be regular, asymptotically linear and efficient under suitable
regularity conditions
\citep{pfanzagl1985contributions,kennedy2022semiparametric,hines2022demystifying}. Appendix~\ref{sec:proof-theorem-2}
provides results and regularity conditions needed for deriving the
asymptotic distribution of the one-step estimator of $\Gamma$, and we
return to these conditions in Section~\ref{sec:appr-estim}. Here, we
instead define the Grenander estimator derived from the
one-step estimator $\hat{\Gamma}_n(\alpha)$. Following
\cite{westling2020unified}, we define the greatest convex minorant
operator
\( \mathrm{GCM} \colon \mathcal{D}_{[-1,1]} \rightarrow
\mathcal{D}_{[-1,1]} \), where the function \( \mathrm{GCM}(G) \) is
the pointwise supremum over all convex functions $H$ such that
\( H \leq G \). The Grenander estimator of \( \gamma \) is then the
derivative of \( \mathrm{GCM}(\hat{\Gamma}_n) \), where
$\hat{\Gamma}_n$ is the function-valued one-step estimator defined
pointwise in equation~\eqref{eq:3}. We use a cross-fitted one-step
estimator, defined as
\begin{equation}
  \label{eq:cf-one-step}
  \hat{\Gamma}_n^{\bullet} = \frac{1}{K}\sum_{k=1}^{K}
  \hat{\Gamma}_n^{k},
  \quad \text{with} \quad
  \hat{\Gamma}_n^{k}(\alpha) =
  \Gamma(\hat{P}_n^{-k})({\alpha}) +
  \mathbb{P}_n^k{[\upsilon_{\alpha}(\hat{P}_n^{-k})]},
\end{equation}
and consider the cross-fitted Grenander estimator
\begin{equation*}
  \hat{\gamma}_n^{\text{Gr}}(\alpha) = 
  \frac{\partial }{\partial
    \alpha}  \mathrm{GCM}(\hat{\Gamma}_n^{\bullet})(\alpha).
\end{equation*}
We use results and condition from \citep{westling2020unified} and a
von~Mises expansion of $\Gamma(\alpha)$ (stated in
Lemma~\ref{lemma:gateaux} in Appendix~\ref{sec:proof-theorem-2}) to
derive the asymptotic distribution of
\( \hat{\gamma}_n^{\text{Gr}}(\alpha) \) under some conditions.

\begin{theorem}
  \label{theorem:gren-est-rates} Let \( \mathcal{P} \) be a model
  fulfilling Assumption~\ref{assum1}. Assume that the partition of the
  data used to construct the cross-fitted one-step estimator
  \( \hat{\Gamma}_n^{\bullet} \) defined in
  equation~\eqref{eq:cf-one-step} fulfills Assumption~\ref{assum2},
  and that the propensity estimator fulfills Assumption~\ref{assum3}
  for all \( P \in \mathcal{P} \). Assume that
  \( \hat{\tau}_n(w) = \hat{\mu}_n(1, w) - \hat{\mu}_n(0, w) \),
  \( \hat{\eta}_{\alpha, n}(w) = \1{\{\hat{\tau}_n(w) \leq
    \alpha\}}\), and that the nuisance parameter estimators
  \( \hat{\pi}_n \), \( \hat{\mu}_n, \) and \( \hat{\eta}_{a, n} \)
  fulfill the following for all \( P \in \mathcal{P} \).
  \begin{assumptionlist}
  \item\label{assump:gren2}
    \( \|\hat{\mu}_n - \mu(P) \|_P = \smallO_P{(1)} \).
  \item\label{assump:gren1} For each sufficiently small $\delta >0$,
    it holds that \( \mathcal{K}_n(\delta;P) = \smallO_P{(1)} \) where
    \( \mathcal{K}_n(\delta;P) \) is defined in
    equation~\eqref{eq:big-K-term} of
    Appendix~\ref{sec:proof-theor-refth}.
  \end{assumptionlist}
  For any \( P \in \mathcal{P} \) for which
  $\gamma(P)$ is differentiable in a neighborhood around $\alpha$, it
  holds that
  \begin{equation}
    \label{eq:44}
    n^{1/3}( \hat{\gamma}_n^{\text{Gr}}(\alpha) - \gamma(P))
    \rightsquigarrow (2 \sigma_P(\alpha) \gamma'(P)(\alpha))^{2/3} Z,
  \end{equation}
  where
  \begin{equation*}
    Z  = \argmin_{u \in \R}\{ X(u) + u^2\},
  \end{equation*}
  for \( X \) a standard two-sided Brownian motion with \( X(0) =0 \),
  \begin{equation*}
    \sigma_P^2(\alpha)   =
    \E_P{\left[ 
        \frac{\left( Y- \mu(P)(A,W)   \right)^2}{\pi(P)(W)^A (1-\pi(P)(W))^{1-A}}  
        \midd \tau(P)(W) = \alpha
      \right]},
  \end{equation*}
  and $\gamma'(P)(\alpha)$ is the derivative of \( \alpha \mapsto \gamma(P)(\alpha) \).
\end{theorem}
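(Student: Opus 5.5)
The plan is to reduce the statement to the general pointwise limit theorem for generalized Grenander-type estimators of \cite{westling2020unified}. That theorem says: let $\hat{\Gamma}_n$ estimate the primitive $\Gamma$ of a nondecreasing $\gamma$ that is differentiable near $\alpha$ with $\gamma'(\alpha)>0$. Suppose $\hat{\Gamma}_n(u)-\Gamma(u)=\mathbb{P}_n[\upsilon_u(P)]-P[\upsilon_u(P)]+R_n(u)$, the localized increments of the influence-function process behave like a Brownian motion with local variance rate $\sigma^2$, and the localized remainder is negligible. Then $n^{1/3}(\partial\,\mathrm{GCM}(\hat{\Gamma}_n)(\alpha)-\gamma(\alpha))$ converges to $(4\sigma^2\gamma'(\alpha))^{1/3}Z$, which equals $(2\sigma\gamma')^{2/3}Z$. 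It therefore suffices to verify three things for the cross-fitted one-step estimator $\hat{\Gamma}_n^{\bullet}$:
\begin{enumerate}
\item an asymptotically linear representation with influence functions $\upsilon_u(P)$ from Theorem~\ref{theorem:anti-target-diff} for $u$ near $\alpha$;
\item the local variance computation identifying $\sigma_P^2(\alpha)$;
\item negligibility of the remainder in the localized sense
\begin{equation*}
\sup_{|u-\alpha|\le\delta_n}|R_n(u)-R_n(\alpha)| = \smallO_P\bigl(n^{-2/3}\bigr) \quad \text{for } \delta_n\asymp n^{-1/3},
\end{equation*}
as required by \cite{westling2020unified}.
\end{enumerate}

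For step 1, write $\hat{\Gamma}_n^{k}(u)-\Gamma(u)=(\mathbb{P}_n^k-P)[\upsilon_u(P)]+(\mathbb{P}_n^k-P)[\upsilon_u(\hat{P}_n^{-k})-\upsilon_u(P)]+\{\Gamma(\hat{P}_n^{-k})(u)+P[\upsilon_u(\hat{P}_n^{-k})]-\Gamma(u)\}$. The last brace is the second-order von Mises remainder of Lemma~\ref{lemma:gateaux}. Averaging over $k$ and using Assumption~\ref{assum2} turns the first term into the full empirical process. The middle empirical-process term is handled conditionally on the training folds, since the data in fold $k$ are independent of $\hat{P}_n^{-k}$. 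The localized increment over $|u-\alpha|\le\delta$ then has conditional variance bounded by the $L^2(P)$-norm of the difference of increments, which is governed by $\|\hat{\mu}_n-\mu\|_P$ (Assumption~\ref{assump:gren2}) together with Assumption~\ref{assum3}. A maximal inequality for the VC class of indicators $\{\1\{\hat{\tau}\le u\}\}$ indexed by $u$ handles the supremum over $u$.

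For step 2, compute the variance of $\upsilon_u-\upsilon_\alpha$ for $u\downarrow\alpha$. The dominant piece is $\1\{\alpha<\tau\le u\}(\cdot-\phi)$. Because $\phi-\tau$ has conditional mean zero given $W$, its second moment equals $(u-\alpha)\gamma'(\alpha)\,\E[(\phi-\tau)^2\mid\tau=\alpha]+\smallO(u-\alpha)$. The conditional second moment of the augmentation term is exactly $\sigma_P^2(\alpha)$, using $\E[(Y-\mu)^2/\pi^{2A}(1-\pi)^{2(1-A)}\mid W]=\E[(Y-\mu)^2/(\pi^A(1-\pi)^{1-A})\mid W]$ in the appropriate sense. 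The remaining pieces $(u-\alpha)\1\{\tau\le\alpha\}$ and $\1\{\alpha<\tau\le u\}(u-\tau)$ are of smaller order, $O((u-\alpha)^2)$. The scaling in \cite{westling2020unified} then absorbs $\gamma'$ and $\sigma^2$ into the stated constant.

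Step 3 is the main obstacle. The von Mises remainder contains the term $P[(\hat{\eta}_u-\eta_u)(u-\tau)]$, which measures the disagreement of the estimated sublevel set weighted by its distance to the level. It also contains product terms $(\hat\pi-\pi)(\hat\mu-\mu)$ multiplied by indicators. The quantity $\mathcal{K}_n(\delta;P)$ in Assumption~\ref{assump:gren1} is presumably defined precisely as the normalized localized oscillation of this remainder. I would therefore first try to show that the localized increment $R_n(u)-R_n(\alpha)$, after multiplication by $n^{2/3}$ and the rescaling $u=\alpha+n^{-1/3}t$, is bounded by $\mathcal{K}_n(\delta;P)$ plus terms controlled by steps 1 and 2. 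The assumption then closes the argument directly. If the definition instead isolates only part of the remainder, for instance the disagreement term, I would bound the product terms by Cauchy--Schwarz on the shrinking set $\{|\tau-\alpha|\lesssim\delta\}$, using Assumptions~\ref{assum1} and \ref{assum3}.
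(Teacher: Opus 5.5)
Your route is the paper's: apply Theorem~4 of \cite{westling2020unified} to $\hat{\Gamma}_n^{\bullet}$ through the decomposition \eqref{eq:WC-display2}, compute the local covariance of $\upsilon_u$, and let Assumption~\ref{assump:gren1} absorb the localized error. Two steps, however, do not go through as you state them.

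\textbf{The identity at the end of your step~2 is false.} Since $A(1-A)=0$, we have $(\phi-\tau)^2=(Y-\mu(A,W))^2/\{\pi(W)^{2A}(1-\pi(W))^{2(1-A)}\}$, so
\[
\E[(\phi-\tau)^2\mid W]=\frac{\mathrm{Var}(Y\mid A=1,W)}{\pi(W)}+\frac{\mathrm{Var}(Y\mid A=0,W)}{1-\pi(W)}.
\]
By contrast, $\E[(Y-\mu(A,W))^2/\{\pi(W)^{A}(1-\pi(W))^{1-A}\}\mid W]=\mathrm{Var}(Y\mid A=1,W)+\mathrm{Var}(Y\mid A=0,W)$. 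With $\pi\equiv 1/2$ the first is twice the second. Up to that point your computation is sound. The local variance rate is $\kappa(\alpha)=\gamma'(\alpha)\E[(\phi-\tau)^2\mid\tau(W)=\alpha]$. Because the law of $\tau(W)$ has distribution function $\gamma$ itself, $\gamma'$ enters twice and the constant is $(4\kappa\gamma')^{1/3}$. Your display $(4\sigma^2\gamma')^{1/3}=(2\sigma\gamma')^{2/3}$ is only correct if $\sigma^2$ there means $\kappa$.

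The paper's proof makes exactly the same unjustified identification, $A(\alpha,\alpha,\alpha)=\sigma_P^2(\alpha)$, in its final display. So this is not something a different argument can repair. What the Westling--Carone route actually yields is the stated limit with $\sigma_P^2(\alpha)$ replaced by $\E_P[(Y-\mu(A,W))^2/\{\pi(W)^{2A}(1-\pi(W))^{2(1-A)}\}\mid\tau(W)=\alpha]$. You should say so rather than appeal to the identity ``in the appropriate sense''.

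\textbf{Your step~1 cannot deliver the localized bound.} Consistency of $\hat{\mu}_n$ plus a VC maximal inequality does not make the localized increments of $(\mathbb{P}_n^k-P)[\upsilon_u(\hat{P}_n^{-k})-\upsilon_u(P)]$ of order $\smallO_P(n^{-2/3})$ over windows $|u-\alpha|\le\delta n^{-1/3}$. The conditional variance of such an increment is driven by the symmetric difference of $\{\alpha<\hat{\tau}_n\le u\}$ and $\{\alpha<\tau\le u\}$. That difference must be $\smallO(n^{-1/3})$, which requires a rate on $\hat{\tau}_n$; the remark after the theorem mentions $\|\hat{\tau}_n-\tau\|_P=\smallO_P(n^{-1/3})$.

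The paper does not prove this. $\mathcal{K}_n(\delta;P)$ in \eqref{eq:big-K-term} is the localized oscillation of the whole $H_n$ in \eqref{eq:38}: the fold-size term, the cross-fit empirical-process term and the von Mises remainder. Assumption~\ref{assump:gren1} is therefore literally (WC.B4). Consistency of $\hat{\mu}_n$ is used only for the global condition (WC.A5), $\|\hat{\Gamma}_n^{\bullet}-\Gamma\|_\infty=\smallO_P(1)$, which follows from Glivenko--Cantelli arguments and Lemma~\ref{lemma:rem-bound}.

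Three conditions are missing from your list and need explicit checks: (WC.A5); the envelope bound $P[G_{\alpha,R}^2]=\bigO(R)$ in (WC.B2); and the uniform boundedness of $H_n$ used in place of (WC.B5).
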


Assumption~\ref{assump:gren1} is taken directly from
\cite{westling2020unified} as we were not able to derive simple
conditions that could be used to easily verify the condition. The
conditions involves bounding an empirical process term and controlling
a remainder term. A crude analysis (not shown here) demonstrates that
this condition would hold under the strong assumption that
\( \| \hat{\tau}_n - \tau(P) \|_P = \smallO_P{(n^{-1/3})} \). Under
this strong assumption, the naive plug-in estimator defined in
Section~\ref{sec:plug-estimator} would achieve a faster rate of
convergence than the Grenander estimator. However, even in this case,
Theorem~\ref{theorem:gren-est-rates} would be of interest as it
establishes a non-trivial asymptotic limit distribution for the
Grenander estimator, which can be used to construct asymptotically
valid confidence intervals. Furthermore, we conjecture that
Assumption~\ref{assump:gren1} can be established under weaker
assumptions, which is also indicated by our numerical experiments in
Section~\ref{sec:numer-exper}.

The distribution of \( Z \) in display~\eqref{eq:44} is referred to as
the standard Chernoff distribution, which is a well understood
distribution. In particular, its quantiles and percentiles have been
tabulated \citep{groeneboom2001computing} and hence, if we can
estimate $\sigma^2(\alpha)$ and $\gamma'(\alpha)$ consistently it is
straightforward to produce confidence intervals using
Theorem~\ref{theorem:gren-est-rates} and Slutsky's lemma. To estimate
$\gamma'(\alpha)$ and $\sigma^2(\alpha)$, we propose to use the
level-one data that is implicitly constructed when employing the
cross-fitted one step estimator. More formally, define the level-one
data
\begin{equation*}
  \left\{
    (\hat{\mu}_i, \hat{\pi}_i, \hat{\tau}_i)  =  (\hat{\mu}_n^{-k}(A_i, W_i), \hat{\pi}_n^{-k}(W_i),
    \hat{\tau}_n^{-k}(W_i)) :
    i \not \in \mathcal{D}_k, k = 1, \dots, K
  \right\},
\end{equation*}
where \( \mathcal{D}_k \) denotes the \( k \)'th fold of the data. We
propose to estimate $\sigma^2$ using a nearest neighbor algorithm
applied to the pseudo-outcome and pseudo-covariates
\begin{equation*}
  \left\{
    \left(
      \frac{\left( Y_i- \hat{\mu}_i   \right)^2}{\hat{\pi}_i^{A_i}
    (1-\hat{\pi}_i)^{1-A_i}} , \,
  \hat{\tau}_i
\right) : i = 1, \dots, n
\right\}.
\end{equation*}
Similar, we propose to estimate $\gamma'(\alpha)$ with a kernel
density estimator applied to
\( \{\hat{\tau}_i : i = 1, \dots, n \} \). The Chernoff distribution
is a symmetric mean-zero distribution with \( 97.5 \% \) percentile
approximately equal to \( q_{0.975} \approx 0.9982 \)
\citep{groeneboom2001computing}. A confidence interval for
$\gamma(\alpha)$ is hence obtained as
\begin{equation*}
  \hat{\gamma}_n^{\text{Gr}}(\alpha) - q_{0.975} 
  \left\{
    \frac{2
      \hat{\sigma}_n(\alpha) \hat{\gamma}_n'(\alpha)}{\sqrt{n}}
  \right\}^{2/3} ,
  \quad
  \hat{\gamma}_n^{\text{Gr}}(\alpha) + q_{0.975} 
  \left\{
    \frac{2
      \hat{\sigma}_n(\alpha) \hat{\gamma}_n'(\alpha)}{\sqrt{n}}
  \right\}^{2/3} ,
\end{equation*}
where \( \hat{\sigma}_n(\alpha)  \) and \( \hat{\gamma}_n'(\alpha) \)
are the kernel-based estimators fitted on the level-one data introduced above.

\subsection{Spline approximation estimators}
\label{sec:appr-estim}

By Theorem~\ref{theorem:target-nondiff}, estimation of the parameter
$\gamma$ is challenging. One strategy for estimation is to instead
settle for estimation of an approximation of
$\gamma$. % This is similar
% in spirit to the approach taken by, e.g., \cite{semenova2021debiased}
% and \cite{chernozhukovFisher}.
%%
% Also add other reference here, should be something about projections
% or similar in vdLaan+Rose 
For instance, we might consider estimating the best \( r \)-order
spline approximation of $\gamma$. Here we consider first order splines
with \( L \) predetermined inner knot-points, i.e., we consider
estimating the best piece-wise linear approximation of $\gamma$ over
\( L+1 \) predetermined intervals.  We allow for the possibility of
estimating the best linear approximation of $\gamma$ only over an
interval \( [a_0, a_{L+1}] \subseteq [-1, 1] \), and thus consider the
best linear approximation determined by the knot-points
\( -1 \leq a_0 < a_1 < \dots < a_{L+1} \leq 1 \). A basis for the
space of first order splines with these knot-points are the hat
functions
\begin{equation*}
  H_l(u) =
  \frac{u - a_{l-1}}{a_l - a_{l-1}} \1_{(a_{l-1}, a_l]}(u)
  +
  \frac{a_{l+1}-u}{a_{l+1} - a_{l}} \1_{(a_{l}, a_{l+1}]}(u),
\end{equation*}
for \( l=1, \dots, L \), and
\begin{equation*}
  H_0(u) =
  \frac{a_{1}-u}{a_{1} - a_{0}} \1_{(a_{0}, a_{1}]}(u),
  \quad \text{and} \quad
  H_{L+1}(u) =    \frac{u - a_{L}}{a_{L+1} - a_{L}} \1_{(a_{L}, a_{L+1}]}(u).
\end{equation*}
The best first order spline approximation of a function
\( \gamma \in \mathcal{D}_{[-1,1]} \) with respect to the
\( \L{2}{m} \)-norm is the function
\begin{equation}
  \label{eq:spline-estimand}
  \gamma^{\#}(\alpha) = \sum_{l=0}^{L+1} H_l(\alpha)
  \gamma^{\#}_l ,
\end{equation}
where the coefficient vector
\( (\gamma^{\#}_0, \dots, \gamma^{\#}_{L+1})^{\top} \in
\R^{L+2}\) is equal to
\begin{equation*}
  M^{-1} \zeta,
  \quad \text{where} \quad 
  M_{lj} = \langle H_l, H_j \rangle
  \quad \text{and} \quad
  \zeta_l = \langle H_l, \gamma \rangle,
  \quad \text{for } l,j = 0, \dots,  L+1,
\end{equation*}
with \( \langle \blank, \blank \rangle \) the \( \L{2}{m} \)-inner
product where \( m \) denotes Lebesgue measure
\citep{de1978practical}. Some examples of best linear approximations
$\gamma^{\#}$ of different sublevel functions $\gamma$ are shown in
Figure~\ref{fig:illu-gamma-spline}.

\begin{figure}
  \centerline{\includegraphics[width=1\linewidth]{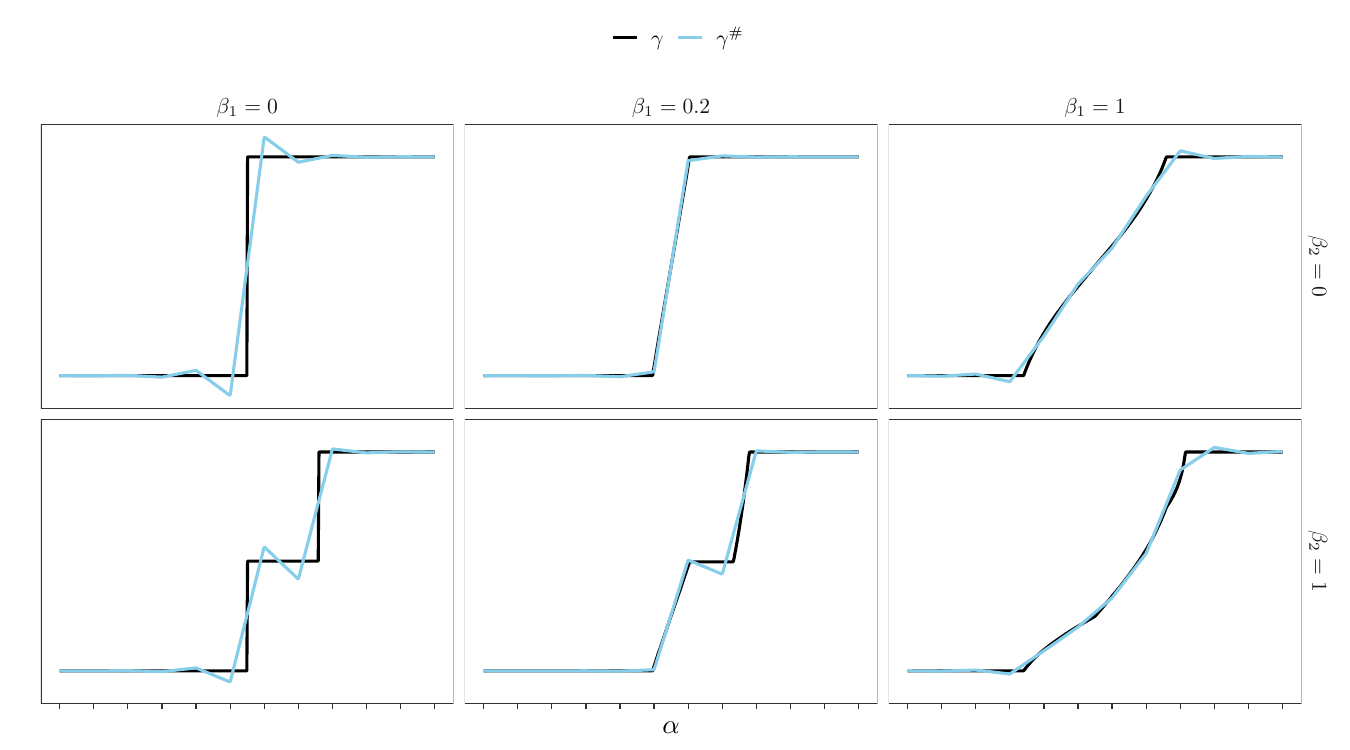}}
  \caption[]{The blue line show the best piece-wise linear
    approximations to the black curve ($\gamma$) for different
    data-generating mechanisms defined in
    equation~\eqref{eq:simple-dgm} in
    Section~\ref{sec:notation-estimand} with \( p=0.5 \). The ticks at
    the \( x \)-axis are the knot points used for the linear
    approximation.}
  \label{fig:illu-gamma-spline}
\end{figure}

The above calculations demonstrate that the function-valued parameter
\( \gamma^{\#} \) can be estimated by estimating the vector
$\zeta \in \R^{L+2}$.  We can use integration by parts to show that
\begin{equation}
  \label{eq:30}
  \begin{split}
    \zeta_0
    &= -\Gamma(a_0) +\frac{1}{a_{1}-a_0} \int_{a_0}^{a_{1}} \Gamma(u)
      \diff u,
    \\
    \zeta_l
    & =  \frac{-1}{a_{l}-a_{l-1}} \int_{a_{l-1}}^{a_{l}} \Gamma(u)  \diff
      u + \frac{1}{a_{l+1}-a_l} \int_{a_l}^{a_{l+1}} \Gamma(u)  \diff u,
      \quad \text{for} \quad   l=1, \dots, L,
    \\
    \zeta_{L+1}
    &= \Gamma(a_{L+1})   -  \frac{1}{a_{L+1}-a_{L}} \int_{a_{L}}^{a_{L+1}} \Gamma(u)  \diff u.
  \end{split}
\end{equation}
To estimate \( \zeta \), we use the cross-fitted one-step estimator of
\( \Gamma \) defined in equation~\eqref{eq:cf-one-step} and plug it
into the expressions in equation~\eqref{eq:30}. We show that, under
some rate conditions on the nuisance parameter estimators, this
provides an efficient, regular asymptotically linear estimator as long
as the knot-points \( a_1, \dots, a_L \) are placed at points at which
\( \gamma \) is continuous. In the following, we abuse notation
slightly and write \( \zeta \) as a function of \( \Gamma \).  We use
\( \| \blank \|_P\) to denote the \( \LP \)-norm.

\begin{theorem}
  \label{theorem:spline-coef}
  Let \( \mathcal{P} \) be a model with saturated tangent space and
  fulfilling Assumption~\ref{assum1}, and let
  \( -1\leq a_0 < a_1 \dots < a_{L+1} \leq 1 \) be fixed
  knot-points. Assume that the partition of the data used to construct
  the cross-fitted one-step estimator \( \hat{\Gamma}_n^{\bullet} \)
  defined in equation~\eqref{eq:cf-one-step} fulfills
  Assumption~\ref{assum2}, and that the propensity estimator fulfills
  Assumption~\ref{assum3} for all \( P \in \mathcal{P} \). Assume that
  \( \hat{\tau}_n(w) = \hat{\mu}_n(1, w) - \hat{\mu}_n(0, w) \),
  \( \hat{\eta}_{\alpha, n}(w) = \1{\{\hat{\tau}_n(w) \leq
    \alpha\}}\), and that the nuisance parameter estimators
  \( \hat{\pi}_n \), \( \hat{\mu}_n, \) and
  \( \hat{\eta}_{\alpha, n} \) fulfill the following for all
  \( P \in \mathcal{P} \).
  \begin{assumptionlist}
  \item\label{assump:spline1}
    \( \|\hat{\pi}_n - \pi(P) \|_P = \smallO_P{(1)} \), and
    \( \|\hat{\eta}_{a, n} - \eta_{a}(P) \|_P = \smallO_P{(1)} \) for
    all \( a \in \{a_0, \dots, a_{L+1}\} \).
  \item\label{assump:spline2}
    \( \| \hat{\pi}_{n} - \pi(P) \|_P \| \hat{\mu}_{n} - \mu(P) \|_P =
    \smallO_P{(n^{-1/2})}\) and
    \( \| \hat{\mu}_{n} - \mu(P) \|_P^2 = \smallO_P{(n^{-1/2})}\).
  \item\label{assump:spline3}
    \( P{[ |\hat{\eta}_{a, n} - \eta_{a}(P)| |\hat{\tau}_{n} - \tau(P)|]} = \smallO_P{(n^{-1/2})}\) for \( a \in \{a_0, a_{L+1}\} \).
  \end{assumptionlist}
  For any \( P \in \mathcal{P} \) for which
  $\alpha \mapsto \gamma(P)(\alpha)$ is continuous at each of the
  knot-points \( a_0, \dots , a_{L+1} \), the estimator
  \( \zeta(\hat{\Gamma}_n^{\bullet}) \) is locally efficient and
  regular and satisfies the asymptotic expansion
  \begin{equation}
    \label{eq:45}
    \zeta(\hat{\Gamma}_n^{\bullet}) - \zeta = \empmeas{[\dot{\zeta}]} + \smallO_P{(n^{-1/2})},
  \end{equation}
  where $\dot{\zeta}$ is the (vector-valued) efficient influence function of
  \( \zeta \),
  \begin{align*}
    \dot{\zeta}_0(O)
    &= -\upsilon_{a_0}(O) +\frac{1}{a_{1}-a_0} \int_{a_0}^{a_{1}} \upsilon_{u}(O)
      \diff u,
    \\
    \dot{\zeta}_l(O)
    & =  \frac{-1}{a_{l}-a_{1-1}} \int_{a_{l-1}}^{a_{l}} \upsilon_u(O)  \diff
      u + \frac{1}{a_{l+1}-a_l} \int_{a_l}^{a_{l+1}} \upsilon_u(O)  \diff u,
      \quad \text{for} \quad   l=1, \dots, L,
    \\
    \dot{\zeta}_{L+1}(O)
    &= \upsilon_{a_{L+1}}(O)   -  \frac{1}{a_{L+1}-a_{L}}
      \int_{a_{L}}^{a_{L+1}} \upsilon_u(O)  \diff u,
  \end{align*}
  where $\upsilon_{\alpha}$ was defined in
  equation~\eqref{eq:anti-target-can-grad}.
\end{theorem}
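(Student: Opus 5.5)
Since $\zeta$ is a fixed linear functional of $\Gamma$, namely a combination of the point evaluations $\Gamma(a_0)$, $\Gamma(a_{L+1})$ and the averages $(a_{l+1}-a_l)^{-1}\int_{a_l}^{a_{l+1}}\Gamma(u)\diff u$ in equation~\eqref{eq:30}, it suffices to prove an asymptotically linear expansion for each building block and then combine them. Concretely, I will show
\begin{equation*}
  \hat{\Gamma}_n^{\bullet}(a) - \Gamma(a) = \empmeas[\upsilon_a] + \smallO_P(n^{-1/2}),
  \quad a\in\{a_0,a_{L+1}\},
\end{equation*}
and
\begin{equation*}
  \int_{a_l}^{a_{l+1}} \{\hat{\Gamma}_n^{\bullet}(u) - \Gamma(u)\}\diff u
  = \empmeas\left[\int_{a_l}^{a_{l+1}}\upsilon_u\diff u\right] + \smallO_P(n^{-1/2}).
\end{equation*}
Linearity then gives \eqref{eq:45} with the displayed $\dot\zeta$. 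For pathwise differentiability and efficiency I will use Theorem~\ref{theorem:anti-target-diff} and Fubini. Integrating the derivative identity $\partial_t\Gamma(P_t)(u)|_{t=0}=P[\upsilon_u s]$ over $u$ gives the gradient of the integral functionals. This is justified because $\gamma$ has at most countably many discontinuities, so continuity holds for a.e.\ $u$, and $\upsilon_u$ is uniformly bounded under Assumption~\ref{assum1}, which permits dominated convergence. Since the tangent space is saturated, the gradient is the efficient influence function. An asymptotically linear estimator with the efficient influence function is regular and efficient by standard theory, so regularity and local efficiency follow from the expansion.

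For the expansion at a fixed fold $k$, I write
\begin{equation*}
  \hat{\Gamma}_n^k(a)-\Gamma(a) = (\mathbb{P}_n^k-P)[\upsilon_a(P)] + (\mathbb{P}_n^k-P)[\upsilon_a(\hat P_n^{-k})-\upsilon_a(P)] + R_a(\hat P_n^{-k},P),
\end{equation*}
where $R_a(\hat P,P)=\Gamma(\hat P)(a)+P[\upsilon_a(\hat P)]-\Gamma(a)$ is the von Mises remainder of Lemma~\ref{lemma:gateaux}.

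The empirical process term is handled by conditioning on the training folds. The conditional variance is bounded by $\|\upsilon_a(\hat P_n^{-k})-\upsilon_a(P)\|_P^2$, which is $\smallO_P(1)$ by Assumption~\ref{assump:spline1}, the consistency of $\hat\mu_n$ implied by \ref{assump:spline2}, and Assumption~\ref{assum3}. Chebyshev's inequality then makes the term $\smallO_P(n^{-1/2})$. Assumption~\ref{assum2} converts the fold averages into $\empmeas[\upsilon_a]$. For the integrated versions, I apply Fubini first and bound $\int\|\upsilon_u(\hat P)-\upsilon_u(P)\|_P^2\diff u$. This needs $\|\hat\eta_{u,n}-\eta_u\|_P\to0$ for a.e.\ $u$, not only at the knots. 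I will try to obtain it from consistency of $\hat\tau_n$, because $P(\hat\eta_u\neq\eta_u)\le P(|\tau-u|\le\epsilon)+P(|\hat\tau_n-\tau|>\epsilon)$, and the first term vanishes as $\epsilon\to0$ at continuity points of $\gamma$. Dominated convergence in $u$ then finishes this step.

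The main obstacle is the remainder. Computing $R_a$, I expect it to split into two parts. The first is a doubly robust product term $P[\eta_a(\hat P)(\hat\pi^{-1}-\pi^{-1})(\mu-\hat\mu)\,\pi]$, which is bounded by \ref{assump:spline2} via Cauchy--Schwarz. The second is the sublevel term $P[(\hat\eta_a-\eta_a)(a-\tau)]$, where a mixed term is absorbed using $\hat\tau-\tau$ and the $\|\hat\mu-\mu\|^2$ bound. On the disagreement set, $|a-\tau|\le|\hat\tau-\tau|$, so this term is bounded by $P[|\hat\eta_a-\eta_a||\hat\tau-\tau|]$. At the endpoints $a_0,a_{L+1}$ this is exactly \ref{assump:spline3}. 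For the interior integrals, \ref{assump:spline3} is not assumed, so a different route is needed. I will use
\begin{equation*}
  \int_{a_l}^{a_{l+1}}(\hat\eta_u-\eta_u)(u-\tau)\diff u,
\end{equation*}
which is computed pointwise in $w$. Its absolute value is at most $\tfrac12(\hat\tau-\tau)^2$, since the integrand is nonzero only for $u$ between $\tau$ and $\hat\tau$ and equals $|u-\tau|$ there. This gives a bound of $\tfrac12\|\hat\tau-\tau\|_P^2\lesssim\|\hat\mu-\mu\|_P^2=\smallO_P(n^{-1/2})$. This smoothing of the sublevel error by integration is the key step, and I will check that the truncation of the integral at $a_l,a_{l+1}$ only reduces the bound.
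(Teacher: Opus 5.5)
Your argument is correct and has the same skeleton as the paper's. You reduce $\zeta$ to $\Gamma(a_0)$, $\Gamma(a_{L+1})$ and $\theta_{a_l}^{a_{l+1}}=\int_{a_l}^{a_{l+1}}\Gamma(u)\diff u$. You then use the fold-wise three-term expansion, control the empirical-process term by conditional Chebyshev, and bound the remainder via Lemma~\ref{lemma:gateaux} (which in fact produces no mixed term), with \ref{assump:spline3} used only at the boundary knots. You differ from the paper in three sub-steps.

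(i) For pathwise differentiability of $\theta_a^b$, the paper integrates the von~Mises expansion over $u$. It then shows the integrated remainder has zero derivative, via a closed form and Lemma~\ref{lemma:luedtke-laan} with $r=2$. You instead integrate the derivative from Theorem~\ref{theorem:anti-target-diff} over the (a.e.) continuity points of $\gamma$. This is shorter, and it makes clear that continuity at the knots is only needed for the point evaluations at $a_0,a_{L+1}$. However, boundedness of $\upsilon_u$ is not what dominates the difference quotients. You need $\sup_u|\Gamma(P_t)(u)-\Gamma(P)(u)|=O(t)$, which follows from:
\begin{itemize}
\item the representation $\Gamma(P)(u)=P[(u-\tau(P))_+]$ and the $1$-Lipschitz property of $x\mapsto(u-x)_+$;
\item the bound $\|\tau(P_t)-\tau(P)\|_\infty\le Ct$ from Assumption~\ref{assum-path}~\ref{item:4};
\item the bound $\|p_t-p\|_{\nu,1}=O(t)$ from Lemma~\ref{lemma:dqm-l1-conv}.
\end{itemize}

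(ii) For $L^2$-consistency of the integrated influence function, the paper uses the closed form $\tilde\theta_a^b=\tfrac{\eta_b}{2}\{(b-\phi)^2-(a\vee\tau-\phi)^2\}$. This involves $\hat\eta$ only at the knot $b$, so \ref{assump:spline1} suffices directly. Your route through consistency of $\hat\eta_{u,n}$ at every continuity point also works. Route the dominated-convergence step in $u$ through expectations, since the integrands converge only in probability. As a by-product, your route shows that the $\eta$-part of \ref{assump:spline1} already follows from \ref{assump:spline2} at continuity points of $\gamma$.

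(iii) Your pointwise bound $\bigl|\int_a^b(\hat\eta_{u,n}-\eta_u)(u-\tau)\diff u\bigr|\le\tfrac12(\hat\tau_n-\tau)^2$ is the paper's idea in cleaner form. It also correctly handles the truncation at the lower limit $a$, which the paper's closed form for $B(\epsilon)$ silently drops.
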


Theorem~\ref{theorem:spline-coef} imposes conditions on the nuisance
parameter estimators $\hat{\mu}_n$, $\hat{\pi}_n$, and
$\hat{\eta}_n$. Besides consistency, certain rate of convergence
requirements (conditions~\ref{assump:spline2}
and~\ref{assump:spline3}) are imposed. Importantly, these conditions
involve product terms and squared norms, which means that the
conditions are weak enough to allow for estimators of $\mu$ and $\pi$
that converge at rates slower than the parametric rate. It is worth
noting that condition~\ref{assump:spline3} is only required at the two
boundary knot-points and is thus trivially satisfied if we consider
the best first-order spline approximation across the whole interval
\( [-1,1] \), because in this case
$\hat{\eta}_{-1, n} = \eta_{-1}(P) = 0$ and
$\hat{\eta}_{1, n} = \eta_{1}(P) = 1$. More generally, Hölder's
inequality and a local smoothness assumption on $\gamma$ can be used
to give a crude bound in terms of the supremum norm of
\( \tau - \hat{\tau}_n \).

\begin{lemma}
  \label{lemma:rate-local-sup-norm}
  If $\gamma(P)$ is differentiable in a neighborhood around
  \( \alpha \) then  \( P{[ |\hat{\eta}_{\alpha, n} - \eta_{\alpha}| |\hat{\tau}_{n} -
    \tau|]} = \bigO{(\| \hat{\tau}_{n} - \tau \|_{\infty}^2)} \) 
\end{lemma}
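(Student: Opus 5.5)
The approach is a margin-type argument: the disagreement event forces $\tau(W)$ to lie within $\|\hat{\tau}_n-\tau\|_\infty$ of $\alpha$, and local differentiability of $\gamma$ bounds the probability of that thin band. Throughout we condition on the data used to fit $\hat{\tau}_n$, so $\hat{\tau}_n$ is a fixed function. We write $\epsilon_n = \| \hat{\tau}_{n} - \tau \|_{\infty}$.

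The first step is to localize the disagreement set. If $\hat{\eta}_{\alpha,n}(w) \neq \eta_\alpha(w)$, then $\hat{\tau}_n(w)$ and $\tau(w)$ lie on opposite sides of $\alpha$. One of them is $\le\alpha$ and the other is $>\alpha$, so $|\tau(w) - \alpha| \le |\hat{\tau}_n(w) - \tau(w)| \le \epsilon_n$. Hence
\begin{equation*}
  |\hat{\eta}_{\alpha,n}(w) - \eta_\alpha(w)| \le \1\{\alpha - \epsilon_n \le \tau(w) \le \alpha + \epsilon_n\}.
\end{equation*}
Bounding the second factor $|\hat{\tau}_n - \tau|$ by $\epsilon_n$ then gives
\begin{equation*}
  P{[ |\hat{\eta}_{\alpha, n} - \eta_{\alpha}| |\hat{\tau}_{n} - \tau|]}
  \le \epsilon_n \, P(\alpha - \epsilon_n \le \tau(W) \le \alpha+\epsilon_n).
\end{equation*}
Equivalently, this is Hölder's inequality with exponents $(1,\infty)$, as the paper indicates.

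The second step bounds the band probability using the smoothness of $\gamma$. Since $\gamma$ is a distribution function, the band probability is at most $\gamma(\alpha+\epsilon_n) - \gamma((\alpha-\epsilon_n)-)$. Differentiability of $\gamma$ in a neighborhood $(\alpha-\delta,\alpha+\delta)$ makes $\gamma$ continuous there, so the left limit equals $\gamma(\alpha-\epsilon_n)$. When $\epsilon_n<\delta$, the mean value theorem gives
\begin{equation*}
  \gamma(\alpha+\epsilon_n)-\gamma(\alpha-\epsilon_n) = 2\epsilon_n \gamma'(\xi)
\end{equation*}
for some $\xi$ in the band. Combining with the first step yields a bound $2\sup_\xi \gamma'(\xi)\,\epsilon_n^2$.

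The main obstacle is making the constant uniform, since mere differentiability does not guarantee that $\gamma'$ is bounded on the neighborhood. I would handle this by shrinking the neighborhood to a closed interval $[\alpha-\delta/2,\alpha+\delta/2]$ and assuming, as is implicit in the statement, that $\gamma'$ is bounded there, for instance because it is continuous. Alternatively, one can note that differentiability at $\alpha$ alone gives $\gamma(\alpha\pm\epsilon)-\gamma(\alpha) = \pm\gamma'(\alpha)\epsilon + o(\epsilon)$, so the band probability is at most $(2\gamma'(\alpha)+1)\epsilon_n$ for all small $\epsilon_n$; this needs only differentiability at $\alpha$.

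The remaining case is $\epsilon_n\ge\delta'$ for the resulting threshold $\delta'$. There the trivial bound by $\epsilon_n \le \epsilon_n^2/\delta'$ suffices, because the band probability is at most one. Taking the constant as the maximum over the two cases gives the $\bigO{(\|\hat{\tau}_n-\tau\|_\infty^2)}$ bound deterministically in $\hat{\tau}_n$. Hence the bound also holds in the $\bigO_P$ sense when $\hat{\tau}_n$ is random.
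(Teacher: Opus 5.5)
Your proposal is correct and follows essentially the same route as the paper: you localize the disagreement set to $\{|\tau-\alpha|\le\|\hat{\tau}_n-\tau\|_\infty\}$, bound the probability of this band by a constant times $\|\hat{\tau}_n-\tau\|_\infty$ using differentiability of $\gamma$, and finish with H\"older's inequality. You get the constant from differentiability at $\alpha$ rather than from an unstated bound on $\gamma'$, and you treat separately the case where $\|\hat{\tau}_n-\tau\|_\infty$ is not small; this makes rigorous the step the paper simply asserts, namely that $\gamma(\alpha+d_n)-\gamma((\alpha-d_n)-)\le C d_n$.
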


Theorem~\ref{theorem:spline-coef} suggests the estimator
\begin{equation}
  \label{eq:spline-approx-est}
  \hat{\gamma}^{\#}_n(\alpha) = \sum_{l=0}^{L+1} H_l(\alpha)
  \hat{\gamma}_{l,n}^{\#},
  \quad \text{where} \quad
  \hat{\gamma}_{l,n}^{\#} = 
  \left(
    M^{-1} \zeta(\hat{\Gamma}_n^{\bullet})
  \right)_l.
\end{equation}
For at fixed point $\alpha$, the estimator
\( \hat{\gamma}^{\#}_n(\alpha) \) is a simple linear transformation of
the vector \( \zeta(\hat{\Gamma}_n^{\bullet}) \), and hence the
asymptotic expansion in equation~\eqref{eq:45} immediately allows us
to use the delta method to provide pointwise confidence intervals. The
asymptotic expansion can also be used to construct a simultaneously
valid confidence band with standard methods from density and
regression estimation \citep[e.g.,][section
5.7]{wasserman2006all}. Details on these confidence band are given in
Appendix~\ref{sec:conf-bands-splin}.

The function \( \gamma^{\#} \) is not guaranteed to be neither
monotone nor contained within \( [0,1] \), and can hence in some cases
provide an implausible proxy for a cumulative distribution
function. This is seen in some of the panels in
Figure~\ref{fig:illu-gamma-spline}. We can address this by considering
the best spline approximation of $\gamma$, under the constraints that
the spline should be monotone and bounded by \( 0 \) and \( 1 \). This
constrained spline is defined by the coefficient vector solving the
convex quadratic optimization problem
\begin{equation*}
  \min_{x \in [0,1]^{L+2}} x^{\top} G x - 2 \zeta^{\top} x
  \quad \text{such that} \quad
  x_{l}  \leq   x_{l+1}, \; l = 0, \dots, L.
\end{equation*}
Hence, an estimate of $\zeta$ also provides an estimate of the
constrained spline through the solution of this convex problem. The
asymptotic expansion from Theorem~\ref{theorem:spline-coef} could also
be use to obtain confidence intervals, but the constrained nature of
the problem makes this more involved
\citep{geyer1994asymptotics,shapiro2000asymptotics}. We consider here
a different approach which is based on monotonizing the spline
estimate from equation~\eqref{eq:spline-approx-est} and its associated
confidence bands directly using the rearrangement approach suggested
by \cite{chernozhukov2009improving}. In practice, this amounts so
reporting the sorted values of the \( \hat{\gamma}^{\#} \) evaluated
at a dense grid of points. We refer to this as a monotonized spline
estimator, which is formally defined as the function
\begin{equation}
  \label{eq:spline-mono}
  \alpha \longmapsto \inf_{x \in \R} {
    \left\{
      \int_{-1}^{1} \1{\{\hat{\gamma}^{\#}(u) \leq x\}} \diff u \geq \alpha
    \right\}}.
\end{equation}
The associated monotonized confidence bands are defined similarly.

\section{Numerical experiments}
\label{sec:numer-exper}

\begin{figure}[h]
  \centerline{\includegraphics[width=1\linewidth]{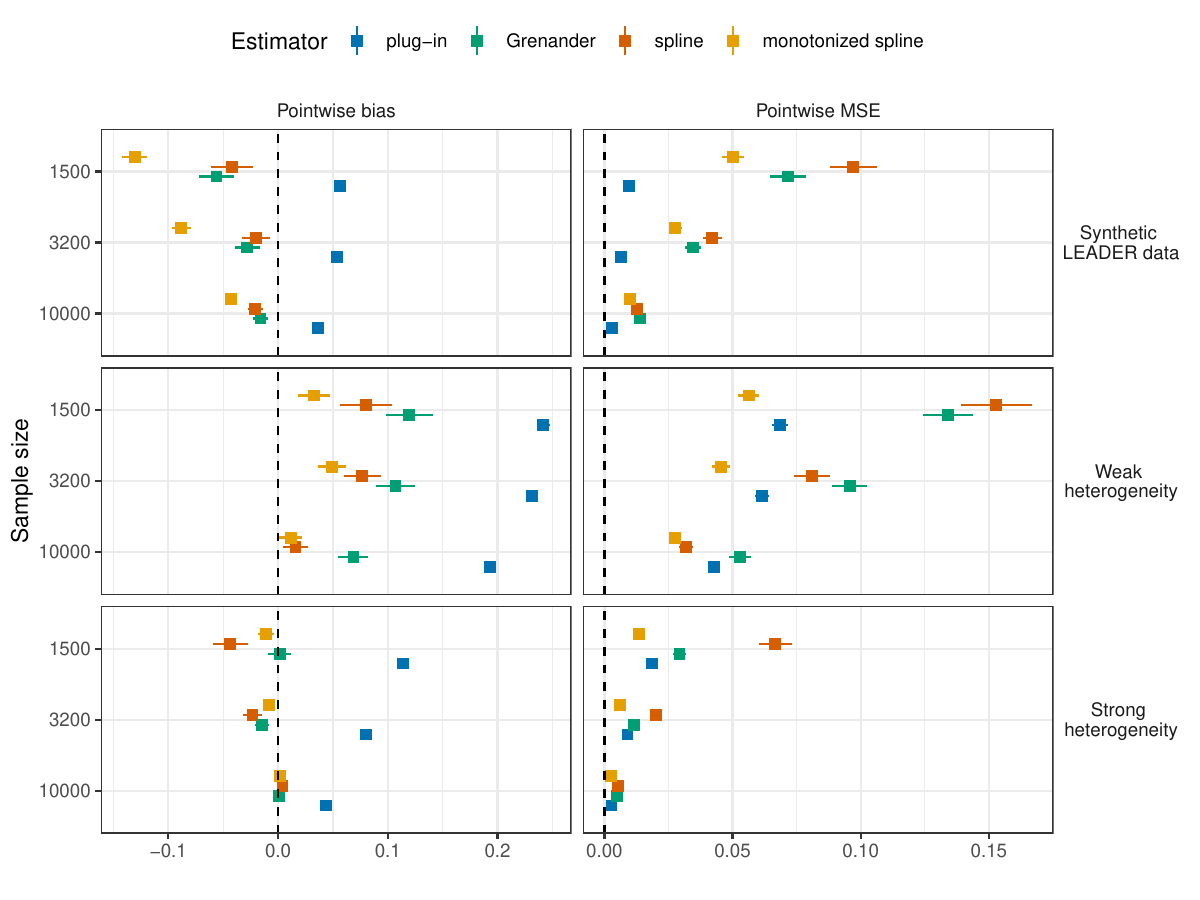}}
  \caption[]{The pointwise bias and mean squared error (MSE) of the
    four estimators of $\gamma$ defined in
    Section~\ref{sec:estim-thro-antid} for varying sample size and
    degree of heterogeneity evaluated at the value $\alpha=0.01$. The
    results are averages across 1000 simulated data sets and the error
    bars are used to quantify the Monte Carlo uncertainty.}
  \label{fig:bias-mse}
\end{figure}

\begin{figure}[h]
  \centerline{\includegraphics[width=1\linewidth]{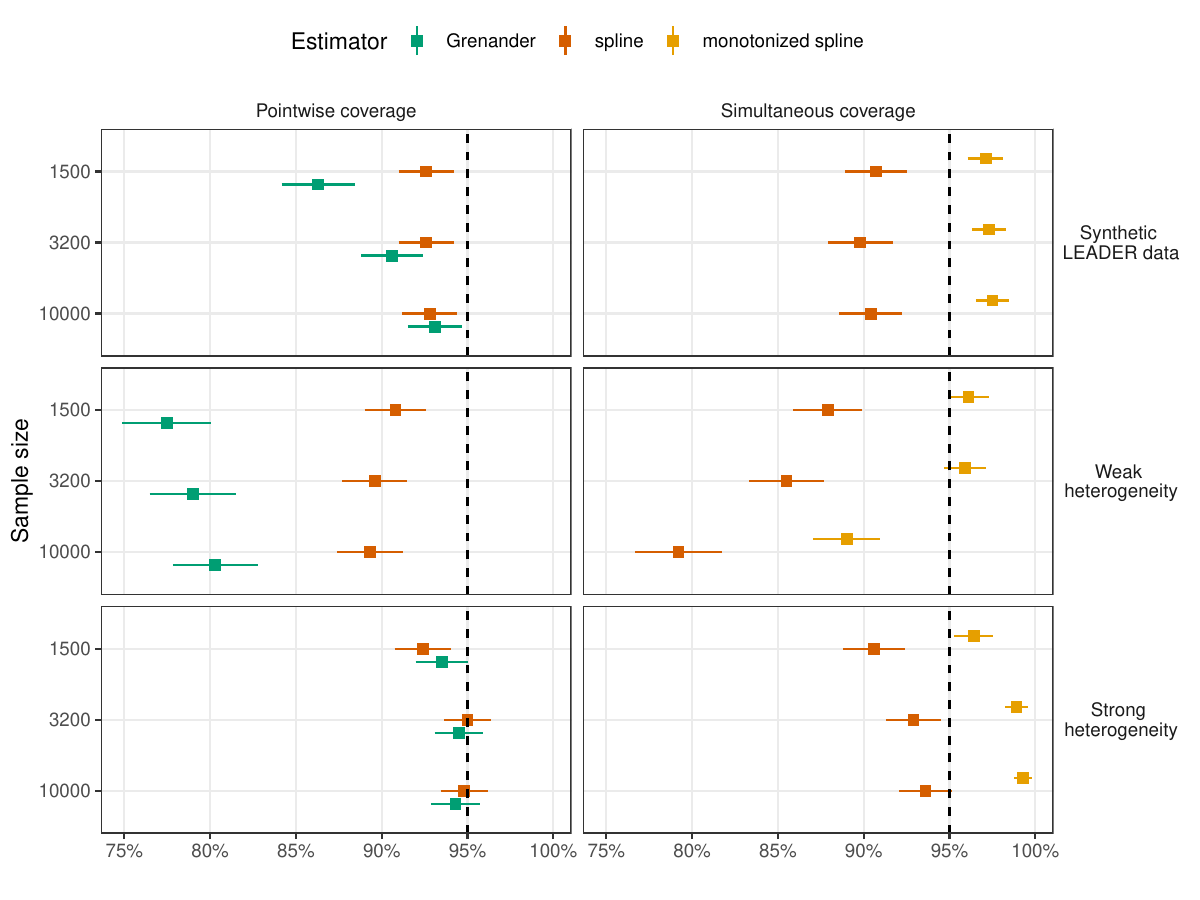}}
  \caption[]{The left panel shows the pointwise coverage for the
    pointwise confidence intervals provided by the Grenander and the
    spline estimator evaluated at the $\alpha=0.01$. The right panel
    shows the simultaneous coverage of the confidence bands for the
    raw and monotonized spline estimators. The results are averages
    across 1000 simulated data sets and the error bars are used to
    quantify the Monte Carlo uncertainty.}
  \label{fig:coeverage}
\end{figure}

To investigate the performance of the estimators introduced in
Section~\ref{sec:estim-thro-antid}, we conducted a numerical
experiment with a data-generating mechanism based on data from the
LEADER (Liraglutide Effect and Action in Diabetes: Evaluation of
Cardiovascular Outcome Results) trial \citep{marso2016liraglutide}. We
simulated five binary and three continuous baseline covariates, which
were generated sequentially using logistic regression and linear
normal models with main effects only, where the parameters used were
estimated from the trial data. Treatment was simulated
independently. The model for the outcome was constructed based on a
logistic regression with main effects and all second order treatment
interactions, with parameter values estimated from the trial data. We
conducted three numerical experiments by simulating data with this
setup, where we varied the parameters of the outcome model as follows:
In the first experiment (LEADER synthetic data) we used the parameter
values estimated from the trial without alterations; in the second
experiment (weak heterogeneity) we attenuated the main effect of
treatment and removed all interactions; and in the third experiment
(strong heterogeneity) we amplified both the main effect of treatment
and the interactions. The three sublevel functions $\gamma$ resulting
from these data generating mechanism are shown in
Figure~\ref{fig:sim-true-curves} in
Appendix~\ref{sec:addit-inform-about}.

In all experiments, the propensity model $\pi$ was estimated by the
marginal probability of receiving treatment. The outcome model $\mu$
was estimated using an elastic net \citep{zou2005regularization,friedman2010glmnet}
including all second- and third-order interactions between all
variables, with penalty parameter selected using cross-validation. The
CATE function $\tau$ and the sublevel indicator function
$\eta_{\alpha}$ were both estimated with plug-in estimators based on
the elastic net estimate of $\mu$. Based on these nuisance parameter
estimates, we calculated the following four estimators of $\gamma$:
The plug-in estimator (plug-in) defined
Section~\ref{sec:plug-estimator}; the Grenander estimator (Grenander)
defined in Section~\ref{sec:gren-type-estim}; the spline estimator
(spline) defined in equation~\eqref{eq:spline-approx-est} in
Section~\ref{sec:appr-estim}; and the monotonized spline estimator
(monotonized spline) defined in equation~\eqref{eq:spline-mono} in
Section~\ref{sec:appr-estim}. We estimated the sublevel function
$\gamma$ restricted to the interval \( [-0.05,0.1] \). For the spline
estimators we used an equidistant grid in the interval with 10 knot
points. We considered sample sizes of \( n \in \{1500, 3200,10000\} \)
and repeated each simulation \( 1000 \) times.

We evaluated pointwise performance of all four estimators in terms of
bias and mean squared error (MSE). For the Grenander estimator and the
raw (non-monotonized) spline estimator we also evaluated the coverage
of the pointwise confidence intervals provided by the asymptotic
distribution results provided by Theorems~\ref{theorem:gren-est-rates}
and~\ref{theorem:spline-coef}, respectively. The point of evaluation
was chosen to be $\alpha=0.01$, which was close to the estimated
average treatment effect in the original trial data. For the spline
estimator, we also evaluated the simultaneous coverage performance the
confidence band described in Appendix~\ref{sec:conf-bands-splin} and
its monotonized version introduced at the end of
Section~\ref{sec:appr-estim}.

The left panel of Figure~\ref{fig:bias-mse} demonstrates that the
Grenander estimator successfully removes bias when compared to the
plug-in estimator in all experiments. Both spline estimators similarly
demonstrate a clear debiasing effect. For the spline estimators, the
bias also depends on how close the best piece-wise linear
approximation is to the true sublevel function; in all our
experiments, the approximation is quite good for the point of
evaluation, c.f., Figure~\ref{fig:sim-true-curves} in
Appendix~\ref{sec:addit-inform-about}. Whether monotonizing the spline
improves bias depends on the data-generating distribution. While the
Grenander and the spline estimators improve bias, this comes at the
cost of an increase in variance as compared to the plug-in estimator,
which is evident from the MSE shown in the right panel of
Figure~\ref{fig:bias-mse}. The Grenander and raw spline estimator
exhibit larger MSE than the plug-in estimator in most settings and for
most sample sizes. As expected, monotonizing the spline improves
variance, and the monotonized spline have smaller MSE than the plug-in
estimator for both the weak and the strong heterogeneity settings.

Figure~\ref{fig:coeverage} demonstrates that the coverage of the
pointwise confidence intervals for the Grenander and the raw spline
estimator are fairly close to the nominal level for large sample sizes
and settings with non-weak interaction. In the setting with only a
weak interaction, the coverage is anti-conservative and far from the
nominal level even for large sample sizes. The simultaneous coverage
for the raw and the monotonized spline based confidence bands are in
most cases anti-conservative and convervative, respectively, except
for the weak interaction setting where performance deteriorates with
sample size and coverage becomes increasingly anti-conservative for
both confidence bands. Neither types of simultaneous confidence bands
are achieving exact nominal coverage, which is to be expected because
coverage is measured with respect to the true $\gamma$ function while
the splines estimators are targeting an approximation of $\gamma$.

\section{Application to the LEADER trial}
\label{sec:application}

\begin{figure}[h]
  \centerline{\includegraphics[width=1\linewidth]{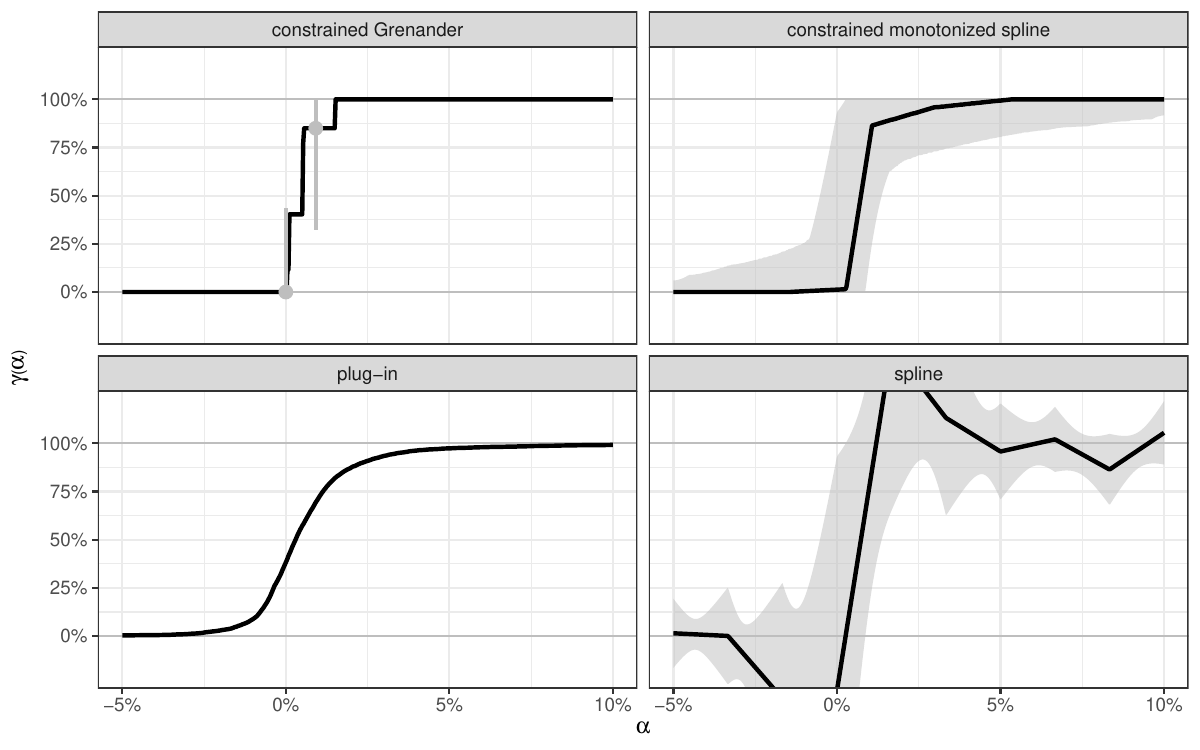}}
  \caption[]{Four different estimates of the sublevel function
    $\gamma$ applied to the LEADER data set. For the spline and the
    constrained monotonized spline estimators, the gray ribbons
    denotes simultaneously valid 95\%-confidence band. For the
    constrained Grenander estimator, the gray point ranges indicate
    pointwise valid 95\%-confidence intervals at $\alpha=0$ and
    \( \alpha=0.01 \).}
  \label{fig:leader-appl}
\end{figure}

To illustrate our method, we applied the suggested estimators to data
from a 1:1 randomized trial which ran from 2010 to 2015
\citep{marso2016liraglutide}. The trial investigated the
cardiovascular effect of liraglutide as compared to placebo. For
simplicity, we used a composite event consisting of nonfatal
myocardial infarction, nonfatal stroke, or death from any cause, and
we investigate the effects of being randomized to treatment compared
to being randomized to placebo, corresponding to an intention-to-treat
estimand. We defined the binary outcome to be whether the composite
event had occurred within 3 years of randomization. The study included
8,563 patients out of which 26 patient where censored before 3
years. For the sole purpose of illustration, these 26 censored
patients were removed from the analysis. As baseline information the
following variables were used: sex (male or female), age group (40-60,
60-80, or 80-90 years), diabetes duration (more or less than 11
years), current insulin use (yes or no), heart failure (yes or no),
estimated glomerular filtration rate (eGFR), body mass index (above or
below \( 30\text{kg}/\text{m}^2 \)), and HbA1c.

To estimate the outcome model $\mu$ defined in equation~\eqref{eq:11},
we used a super learner \citep{van2007super, breiman1996stacked} based
on a library consisting of a logistic regression with main effects, a
random forest \citep{breiman2001randomforests,wright2017ranger}, an
elastic net including all second- and third-order interactions between
all variables \citep{zou2005regularization,friedman2010glmnet}, and a
gradient-boosted tree algorithm \citep{chen2016xgboost}. We used 5
fold cross-validation to construct the super learner. Because
treatment was randomized at baseline, the propensity score was
estimated as the marginal probability of treatment. As estimators of
the sublevel function $\gamma$, we considered the plug-in estimator,
the spline estimator, and constrained versions of both the Grenander
and the monotonized spline estimator. The constrained versions
truncate the estimates and confidence intervals pointwise to be
contain in the interval \( [0,1] \).  We used 3 folds for
cross-fitting.

The results are shown in Figure~\ref{fig:leader-appl}. We see that all
estimators suggest that the expected treatment effects are centered
fairly closely around the average treatment effect (estimated to be
around \( 1\% \)), with the plug-in estimator suggesting the largest
amount of heterogeneity. The proportion of the trial population for
which no treatment benefit is expected is the number $\gamma(0)$, and
the spline, the constrained monotonized spline, and the constrained
Grenander estimator all estimate this close to zero. The pointwise
95\%-confidence interval for this number provided by the constrained
Grenander estimator is \( [0\%, 44\%] \). The pointwise confidence
interval for the value \( 1-\gamma(0.01) \), which approximately
corresponds to the number of patients with an expected treatment
effect above the average treatment effect, is \( [0\%, 68\%] \). The
confidence bands provided by the spline estimator and its constrained
monotonized version are compatible with an abruptly changing $\gamma$,
which indicates that most patient have a treatment effect close the
the average effect.

%% Should write something about being critical about these values
%% because the estimators can struggle in cases with low degree of
%% heterogeneity.

\section{Discussion}
\label{sec:discussion}

We have argued that the sublevel function introduced in this paper
is an interesting function-valued target parameter that has a clear
interpretation and encodes information about the type and degree of
heterogeneity in a population in a visually simple manner. We have
shown that estimating this parameter is challenging, in the sense that
the parameter of interest is not pathwise differentiable. A remaining
question is to quantify exactly how difficult the problem is in terms
of minimax optimal rates of convergence. In models where the sublevel
function $\gamma$ is differentiable, we conjecture that the
minimax rate of convergence is of the order \( n^{-1/3} \), while we
expect the problem to be as difficult as estimating the CATE function
itself in models where $\gamma$ is allowed to be discontinuous. The
works of \cite{kennedy2023towards} and \cite{bonvini2023minimax} would
be the starting points for answering this question.

We have suggested three different estimation strategies, two of which
were based on debiased machine learning techniques and provided a way
to construct confidence intervals. While we found these methods to
successfully decrease bias as compared to a plug-in estimator, this
gain was at the cost of an increase in variance so large that the MSE
was still in general the lowest for the plug-in estimator. We also
found that a large sample size and some degree of heterogeneity were
needed for our estimators to perform well. It is of interest to
investigate other estimation strategies, in particular, strategies
that would also perform well under the null hypothesis of no
heterogeneity, where the sublevel function is discontinues. Another
possible estimation strategy, that we have not included in this paper,
is to use the ideas of orthogonal statistical learning
\citep{foster2019orthogonal} and employ isotonic regression to a
suitable orthogonalized pseudo-outcome.

Another line of attack for the construction of reliable estimators is
to consider more sophisticated nuisance parameter
estimators. \cite{kennedy2023towards} showed that CATE function
estimation can be easier than estimation of the outcome regression
itself, and hence our suggested sublevel function estimators might
improve with the use of dedicated CATE learners. Furthermore, it has
been argued that classification is easier than regression
\citep[e.g.,][section~6.7]{devroye1996probabilistic}, and thus one
might consider alternative strategies for estimating the sublevel-set
indicator function \( \eta_{\alpha} \).

\section{Funding}
\label{sec:funding}

Funded by the European Union under contract number 101095556. Views
and opinions expressed are, however, those of the authors only and
do not necessarily reflect those of the European Union or European
Health and Digital Executive Agency (HADEA). Neither the European
Union nor the granting authority can be held responsible for
them. This work has received funding from the UK Research and
Innovation.

\section{Acknowledgements}
\label{sec:acknowledgements}

We would like to thank Novo Nordisk for allowing us to analyze the
LEADER data.

\appendix

\section{Some general pathwise differentiability results}
\label{sec:gener-result-pathw}

For any $\sigma$-finite measure $\nu$ on the sample space
\( \mathcal{O} = \mathcal{W} \times \{0,1\} \times \{0,1\} \) and
\( p \in [1, \infty) \), we use \( \L{p}{\nu} \) to denote the
collection of all measurable functions
\( f \colon \mathcal{O} \rightarrow \R \) such that
\( \nu{[f^p]} < \infty \). We use \( \| \blank \|_{\nu, p} \) to
denote the norm
\begin{equation*}
  \| f \|_{\nu, p} = 
  \left(
    \nu{[|f|^p]}
  \right)^{1/p},
\end{equation*}
and we use \( \| \blank \|_{\infty} \) to denote the supremum norm. We
let \( \mathcal{P} \) denote a collection of probability measures on
\( \mathcal{O} \), referred to as a model, which we assume is
dominated by some fixed $\sigma$-finite measure $\nu$. The notion of
pathwise differentiability of a functional
\( \Psi \colon \mathcal{P} \rightarrow \R \) is defined using
quadratic mean differentiable paths
\( \{P_{\epsilon} : \epsilon > 0 \} \subset \mathcal{P} \). A path
\( \{P_{\epsilon}\} \subset \mathcal{P} \) from
\( P \in \mathcal{P} \) is the said to be differentiable in quadratic
mean if there exists an element \( \dot{\ell} \in \LP \) such
\begin{equation*}
  \left\| 
    \frac{\sqrt{p_{\epsilon}} - \sqrt{p} }{\epsilon} -
    \frac{1}{2} \dot{\ell} \sqrt{p}
  \right\|_{\nu, 2}
  \longrightarrow 0,
\end{equation*}
where \( p \) and \( p_{\epsilon} \) are the $\nu$-densities of
\( P \) and \( P_{\epsilon} \), respectively. The function
\( \dot{\ell} \in \LP \) is called the score function of the path
\( \{P_{\epsilon}\} \). A collection \( \{\dot{\ell}\} \) of score
functions is called a tangent set, and the closed linear span of a
tangent set is called a tangent space. A tangent space is always a
subset of the space of all \( P \)-zero mean functions in \( \LP \),
and when a tangent space is equal to this space, we say that the
tangent space is saturated. We formalize the notion of a ``fully
nonparametric'' model \( \mathcal{P} \) as a model for which we can
construct a saturated tangent space using paths contained in
\( \mathcal{P} \). We refer to this simply as a model with a saturated
tangent space.

The statistical parameter
$\Psi \colon \mathcal{P} \rightarrow \R$ is pathwise differentiable at
\( P \in \mathcal{P} \) relative to a tangent set
\( \dot{\mathcal{P}}_P \) if there exists an element $\psi(P) \in \LP$
such that for all \( \dot{\ell} \in \dot{\mathcal{P}}_P \) and a path
\( \{P_{\epsilon}\} \) with score function \( \dot{\ell} \), it holds
that
\begin{equation}
  \label{eq:pathwise-diff-def}
  \frac{\partial }{\partial \epsilon} \Big|_{\epsilon = 0}\Psi(P_{\epsilon}) =
  P{[\psi(P) \dot{\ell}]}.
\end{equation}
Any function $\psi(P) \in \LP$ fulfilling
equation~\eqref{eq:pathwise-diff-def} is called a gradient for $\Psi$
at \( P \). The maximal tangent set is the collection of score
functions of all quadratic mean differentiable paths.  A parameter
will often not be pathwise differentiable relative to the maximal
tangent set, but only relative to a smaller tangent set. Here we
restrict attention to the collection of quadratic mean differentiable
paths fulfilling the regularity conditions stated below in
Assumption~\ref{assum-path}. Any density \( p \) (with respect to the
dominating measure \( \nu \)) for the distribution
\( P \in \mathcal{P} \) factorizes as
\begin{equation*}
  p(y,a,w) = \mu(P)(a,w)^y(1-\mu(P)(a,w)))^{y-1} \pi(P)(w)^a(1-\pi(P)(w))^{1-a} \rho(P)(w),
\end{equation*}
where $\mu$ and $\pi$ were defined in equations~\eqref{eq:10}
and~\eqref{eq:11}, and $\rho$ denotes the density of the marginal
distribution of \( W \). For a path
\( \{P_{\epsilon} : \epsilon > 0\} \) we use the shorthand notation
$\mu_{\epsilon} = \mu{(P_{\epsilon})}$,
$\pi_{\epsilon} = \pi{(P_{\epsilon})}$, and
$\rho_{\epsilon} = \rho{(P_{\epsilon})}$.
\begin{assumption}
  \label{assum-path}
  The exists a constant \( C \leq \infty \) such that for all
  elements of the path \( \{P_{\epsilon} : \epsilon >0\} \)
  from \( P \), the following holds.
  \begin{enumerate}[label=(\roman*), topsep=0pt]
  \item\label{item:3} The Radon-Nikodym derivative
    \( {\rho{(P_{\epsilon})}}/{\rho{(P)}} \) is uniformly bounded by
    \( C \) for all $\epsilon > 0$.
  \item\label{item:4} The partial derivatives of
    \( \mu_{\epsilon}(a, w) \), \( \pi_{\epsilon}(w) \), and
    $\rho_{\epsilon}(w)$ with respect to $\epsilon$ exist at
    all $\epsilon \geq 0$ for all $a \in \{0,1\}$ and
    \( w \in \mathcal{W} \) and are uniformly bounded by
    \( C \).
  \end{enumerate}
\end{assumption}
In the main article, pathwise differentiability is tacitly understood
to mean pathwise differentiability relative to the tangent set derived
from this specific subset of paths. This convention follows
\cite{van1991differentiable,van2000asymptotic,van2002semiparametric}.
The collection of paths fulfilling Assumption~\ref{assum-path}
include, for instance, all paths of the form
\( P_{\epsilon} = (1 + \epsilon h) \cdot P \), where
\( h \in \mathcal{L}_P^2 \) is an arbitrary uniformly bounded function
with \( P{[h]} = 0 \).  As the collection of all uniformly bounded
functions are dense in \( \mathcal{L}_P^2 \), it follows that the
collection of paths fulfilling Assumption~\ref{assum-path} provides a
the basis for a saturated tangent space. Hence, imposing
Assumption~\ref{assum-path} does not restrict a fully nonparametric
model as defined above.

We shall make use of the following result which is adapted from
Lemma~2 in \citep{kennedy2023semiparametric}.
\begin{proposition}
  \label{prop:expansion-gradient}
  Let $\Psi \colon \mathcal{P} \rightarrow \R$ be a
  statistical parameter and \( P \in \mathcal{P} \). Assume
  that $\Psi$ satisfies the expansion
  \begin{equation*}
    \Psi(P) - \Psi(P_{\epsilon}) = (P-P_{\epsilon})[ \psi(P)]
    + R_2(P, P_{\epsilon}),
  \end{equation*}
  for any mean differentiable path
  \( \{P_{\epsilon} : \epsilon > 0 \} \) from
  \( P \in \mathcal{P} \). If
  \( \| \psi(P) \|_{\infty} < \infty \) and
  \begin{equation*}
    \frac{\partial }{\partial \epsilon} \Big|_{\epsilon=0} R_2(P, P_{\epsilon}) = 0,
  \end{equation*}
  for all paths, then $\Psi$ is pathwise differentiable at
  \( P \) and \( \psi(P) \) is a gradient for $\Psi$ at
  \( P \).
\end{proposition}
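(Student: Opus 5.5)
The plan is to reduce pathwise differentiability to a statement about the linear term $(P_{\epsilon}-P)[\psi(P)]$, and to handle that term with the quadratic mean differentiability of the path. Fix a quadratic mean differentiable path $\{P_{\epsilon}\}$ from $P$ with score $\dot{\ell}$, and write $p_{\epsilon}$ and $p$ for the $\nu$-densities. The assumed expansion, rearranged, gives
\begin{equation*}
  \frac{\Psi(P_{\epsilon}) - \Psi(P)}{\epsilon}
  = \frac{(P_{\epsilon}-P)[\psi(P)]}{\epsilon} - \frac{R_2(P,P_{\epsilon})}{\epsilon}.
\end{equation*}
Evaluating the expansion at $P_{\epsilon}=P$ forces $R_2(P,P)=0$. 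Hence the last term is the difference quotient of $\epsilon \mapsto R_2(P,P_{\epsilon})$ at $0$, and by hypothesis it tends to $0$. So it suffices to show $\epsilon^{-1}(P_{\epsilon}-P)[\psi(P)] \to P[\psi(P)\dot{\ell}]$. This is exactly equation~\eqref{eq:pathwise-diff-def} with $\psi(P)$ as gradient; note that $\psi(P)\in\LP$ because it is bounded.

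For the linear term I would use the factorisation $p_{\epsilon}-p = (\sqrt{p_{\epsilon}}-\sqrt{p})(\sqrt{p_{\epsilon}}+\sqrt{p})$, which gives
\begin{equation*}
  \frac{(P_{\epsilon}-P)[\psi(P)]}{\epsilon}
  = \int \frac{\sqrt{p_{\epsilon}}-\sqrt{p}}{\epsilon}\,\psi(P)\,(\sqrt{p_{\epsilon}}+\sqrt{p}) \diff \nu .
\end{equation*}
By the definition of quadratic mean differentiability, the first factor converges in $\L{2}{\nu}$ to $\tfrac12\dot{\ell}\sqrt{p}$. This also gives $\|\sqrt{p_{\epsilon}}-\sqrt{p}\|_{\nu,2} = \bigO(\epsilon)$, so $\sqrt{p_{\epsilon}} \to \sqrt{p}$ in $\L{2}{\nu}$. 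Since $\|\psi(P)\|_{\infty}<\infty$, the second factor $\psi(P)(\sqrt{p_{\epsilon}}+\sqrt{p})$ then converges in $\L{2}{\nu}$ to $2\psi(P)\sqrt{p}$.

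The step requiring care is passing to the limit inside the integral. By Cauchy--Schwarz, if $f_{\epsilon}\to f$ and $g_{\epsilon}\to g$ in $\L{2}{\nu}$, then $f_{\epsilon}g_{\epsilon}\to fg$ in $\L{1}{\nu}$. Applying this, the integral converges to
\begin{equation*}
  \int \tfrac12 \dot{\ell}\sqrt{p}\cdot 2\psi(P)\sqrt{p}\diff\nu = P[\psi(P)\dot{\ell}].
\end{equation*}
This is precisely where boundedness of $\psi(P)$ is used: it lets $\psi(P)$ be absorbed into an $\L{2}{\nu}$-convergent factor without any further integrability condition.

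Combining the two parts, $\partial_{\epsilon}\Psi(P_{\epsilon})|_{\epsilon=0}=P[\psi(P)\dot{\ell}]$ for every path under consideration. Hence $\Psi$ is pathwise differentiable at $P$ with gradient $\psi(P)$.
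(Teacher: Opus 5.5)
Your proof is correct and is essentially the paper's argument. Both reduce the problem to the linear term using the vanishing derivative of the remainder. Both then use quadratic mean differentiability together with $\|\psi(P)\|_{\infty}<\infty$ to show $\epsilon^{-1}(P_{\epsilon}-P)[\psi(P)]\to P[\psi(P)\dot{\ell}]$. The only difference is packaging: the paper isolates the limit step as Lemma~\ref{lemma:dqm-l1-conv} ($\mathcal{L}^1(\nu)$-convergence of $(p_{\epsilon}-p)/\epsilon$ to $p\dot{\ell}$, followed by H\"older), whereas you pair two $\mathcal{L}^2(\nu)$-convergent factors directly.

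Your explicit remark that $R_2(P,P)=0$ and your sign bookkeeping with $P_{\epsilon}-P$ are cleaner than the paper's display~\eqref{eq:19}. There, $\epsilon^{-1}(p-p_{\epsilon})$ actually converges to $-p\dot{\ell}$ rather than $p\dot{\ell}$.
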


\begin{proof}
  Let \( \{P_{\epsilon}\} \) be a quadratic mean
  differentiable path from \( P \in \mathcal{P} \) with score
  function \( \dot{\ell} \). By the assumption that the derivative of
  \( R_2(P, P_{\epsilon}) \) is zero at \( \epsilon=0 \) and
  the definition of pathwise differentiability
  (equation~\eqref{eq:pathwise-diff-def}), the result follows
  if we can show that
  \begin{equation}
    \label{eq:13}
    \lim_{\epsilon \downarrow 0} \epsilon^{-1} (P-P_{\epsilon})[ \psi(P)] = P{[\psi(P) \dot{\ell}]}.
  \end{equation}
  Recall that we assume the model \( \mathcal{P} \) to be dominated by
  the $\sigma$-finite measure $\nu$. We may thus write
  \begin{equation}
    \label{eq:19}
    \begin{split}      
    \epsilon^{-1}(P-P_{\epsilon})[ \psi(P)]
    & =
      \nu{[\psi(P) \epsilon^{-1}(p - p_{\epsilon}) ]}
    \\
    & =
      \nu{[ \psi(P) p \dot{\ell}]}
      +
      \nu{[(\epsilon^{-1}(p - p_{\epsilon}) - p \dot{\ell}) \psi(P)]}
    \\
    & =
      P{[\psi(P) \dot{\ell}]}
      +
      \nu{[(\epsilon^{-1}(p - p_{\epsilon}) - p \dot{\ell}) \psi(P)]}.
    \end{split}
  \end{equation}
  By Hölder's inequality and the assumption about $\psi(P)$, we have
  that
  \begin{equation*}
    | \nu{[(\epsilon^{-1}(p - p_{\epsilon}) - p \dot{\ell}) \psi(P)]} |
    \leq \| \psi(P) \|_{\infty} \| \epsilon^{-1}(p -
    p_{\epsilon}) - p \dot{\ell} \|_{\nu, 1}.
  \end{equation*}
  Lemma~\ref{lemma:dqm-l1-conv} below shows that this last expression
  is \( \smallO{(1)} \) for $\epsilon \downarrow 0$. Hence,
  equation~\eqref{eq:19} can be written as
  \( \epsilon^{-1}(P-P_{\epsilon})[ \psi(P)] = P{[\psi(P) \dot{\ell}]} +
  \smallO{(1)} \) for $\epsilon \downarrow 0$, which shows
  equation~\eqref{eq:13}.
\end{proof}

The following lemma is based on arguments given in Appendix~A.5 of \citep{bickel1993efficient}.
\begin{lemma}
  \label{lemma:dqm-l1-conv}
  Let
  \( \{P_{\epsilon} : \epsilon > 0\} \subset \mathcal{P} \) be
  a quadratic mean differentiable path from
  \( P \in \mathcal{P} \) with score function \( \dot{\ell} \). Then
  \begin{equation*}
    \left\Vert
      \frac{p_{\epsilon} - p}{\epsilon} - p \dot{\ell}
    \right\Vert_{{\nu}, 1} \longrightarrow 0, \quad
    \text{for} \quad \epsilon \longrightarrow 0.
  \end{equation*}
\end{lemma}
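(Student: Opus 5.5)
The plan follows the standard argument that differentiability in quadratic mean gives $L^1(\nu)$ differentiability of densities, as in Appendix~A.5 of \cite{bickel1993efficient}. Write $s_\epsilon = \sqrt{p_\epsilon}$, $s=\sqrt{p}$, and $g = \tfrac12 \dot{\ell} s$, so that $\| (s_\epsilon - s)/\epsilon - g\|_{\nu,2} \to 0$ by assumption. Two preliminary facts are needed. First, $g \in \L{2}{\nu}$, since $\nu[g^2] = \tfrac14 P[\dot{\ell}^2] < \infty$ because $\dot{\ell}\in\LP$. Second, $\|s_\epsilon - s\|_{\nu,2} = \bigO(\epsilon) \to 0$, which follows directly from the quadratic mean differentiability and the triangle inequality.

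The key algebraic step is the factorization $p_\epsilon - p = (s_\epsilon - s)(s_\epsilon + s)$ together with $p\dot{\ell} = 2 s g = g\cdot 2s$. This gives the decomposition
\begin{equation*}
  \frac{p_\epsilon - p}{\epsilon} - p\dot{\ell}
  = \left(\frac{s_\epsilon - s}{\epsilon} - g\right)(s_\epsilon + s)
  + g\,(s_\epsilon - s).
\end{equation*}
Each term is then bounded in $\L{1}{\nu}$ by Cauchy--Schwarz.

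The first term is at most $\|(s_\epsilon - s)/\epsilon - g\|_{\nu,2}\,\|s_\epsilon + s\|_{\nu,2}$. Since $\|s_\epsilon\|_{\nu,2} = \|s\|_{\nu,2} = 1$ (both are square roots of probability densities), we have $\|s_\epsilon + s\|_{\nu,2}\le 2$. The first term is therefore bounded by twice a quantity tending to zero. The second term is at most $\|g\|_{\nu,2}\,\|s_\epsilon - s\|_{\nu,2}$. This is a finite constant times a quantity that is $\bigO(\epsilon)$, so it also tends to zero.

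Combining the two bounds gives the claim. No step is a real obstacle. The only point requiring care is to keep $g = \tfrac12\dot{\ell}\sqrt{p}$ as a single $\L{2}{\nu}$ element rather than splitting $\dot{\ell}$ and $\sqrt p$, since no integrability of $\dot{\ell}$ beyond $\LP$ is available.
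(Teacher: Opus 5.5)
Your proof is correct and is essentially the paper's argument: both rest on the factorization $p_\epsilon - p = (\sqrt{p_\epsilon}-\sqrt{p})(\sqrt{p_\epsilon}+\sqrt{p})$, the unit $\L{2}{\nu}$-norm of root densities, and Cauchy--Schwarz. The paper's split $2\sqrt{p}\,(\Delta_\epsilon - g) + \epsilon\Delta_\epsilon^2$, with $\Delta_\epsilon = (\sqrt{p_\epsilon}-\sqrt{p})/\epsilon$ and your $g = \tfrac12\dot{\ell}\sqrt{p}$, is just a regrouping of your two terms, and your bookkeeping is if anything cleaner: the paper's write-up has a sign slip in the expansion of $p_\epsilon - p$ and labels as $o(\epsilon)$ terms that are really $O(\epsilon)$ and $o(1)$, though none of this affects the conclusion.
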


\begin{proof}
  Firstly, it holds that \( p \dot{\ell} \in \mathcal{L}^1{({\nu})} \) by the
  Cauchy-Schwarz inequality, because
  \begin{equation*}
    \| p \dot{\ell} \|_{\nu, 1}
    \leq \|
    \sqrt{p} \|_{\nu, 2} \| \sqrt{p} \dot{\ell}
    \|_{\nu, 2} =  (\nu{[p]})^{1/2} (\nu{[p
      \dot{\ell}^2]})^{1/2}
    =
    1 (P{[\dot{\ell}^2]})^{1/2}
    =
    \| \dot{\ell} \|_{P, 2},
  \end{equation*}
  which is finite because \( \dot{\ell} \in \LP \). Next we write
  \begin{equation*}
    p_{\epsilon} - p = (\sqrt{p_{\epsilon}} - \sqrt{p})(\sqrt{p_{\epsilon}} + \sqrt{p})
    = 2 \sqrt{p}(\sqrt{p_{\epsilon}} - \sqrt{p}) - (\sqrt{p_{\epsilon}} - \sqrt{p})^2,
  \end{equation*}
 and so
  \begin{equation*}
    \frac{p_{\epsilon} - p}{\epsilon} - p \dot{\ell}
    =
    2 \sqrt{p}
    \left(
      \frac{\sqrt{p_{\epsilon}} - \sqrt{p}}{\epsilon}
      - \frac{1}{2} \sqrt{p}  \dot{\ell}
    \right) -
    \epsilon
    \left(
      \frac{\sqrt{p_{\epsilon}} - \sqrt{p}}{\epsilon}
    \right)^2.
  \end{equation*}
  Hence
  \begin{align*}
    \left\Vert \frac{p_{\epsilon} - p}{\epsilon} - p \dot{\ell}
    \right\Vert_{\nu, 1}
    & \leq
      2 \left\Vert
      \sqrt{p}
      \left(
      \frac{\sqrt{p_{\epsilon}} - \sqrt{p}}{\epsilon}
      - \frac{1}{2} \sqrt{p}  \dot{\ell}
      \right)
      \right\Vert_{\nu, 1}
    % \\
    % & \qquad
      +
      \epsilon \left\Vert
      \frac{\sqrt{p_{\epsilon}} - \sqrt{p}}{\epsilon}
      \right\Vert_{\nu, 2}^2
    \\
    & =
      2 \left\Vert
      \sqrt{p}
      \left(
      \frac{\sqrt{p_{\epsilon}} - \sqrt{p}}{\epsilon}
      - \frac{1}{2} \sqrt{p}  \dot{\ell}
      \right)
      \right\Vert_{\nu, 1}
      + \smallO{(\epsilon)},
  \end{align*}
  where the last equality follows from the assumption that the path is
  quadratic mean differentiable, as this states that
  \( \frac{\sqrt{p_{\epsilon}} - \sqrt{p}}{\epsilon} \) converges in
  \( \L{2}{\nu} \) to \( 1/2 \sqrt{p}\dot{\ell} \) which has finite
  \( \L{2}{\nu} \)-norm. The Cauchy-Schwarz inequality gives
  \begin{equation*}
    \left\Vert
      \sqrt{p}
      \left(
        \frac{\sqrt{p_{\epsilon}} - \sqrt{p}}{\epsilon}
        - \frac{1}{2} \sqrt{p}  \dot{\ell}
      \right)
    \right\Vert_{\nu, 1}
    \leq
    1 
    \left\Vert      
      \left(
        \frac{\sqrt{p_{\epsilon}} - \sqrt{p}}{\epsilon}
        - \frac{1}{2} \sqrt{p}  \dot{\ell}
      \right) 
    \right\Vert_{\nu, 2}
    = \smallO{(\epsilon)},
  \end{equation*}
  where the last equality follows because
  \( \{P_{\epsilon}\} \) is differentiable in quadratic mean
  at \( \epsilon=0 \).
\end{proof}

\section{Proofs}
\label{sec:proofs}

\subsection{Proof of Theorem~\ref{theorem:target-nondiff}}
\label{sec:proof-theorem-1}
  
By the smoothness assumption imposed on $\tau$ under the model
\( \mathcal{P}_S \), we can explicate the two cases assumed in the
theorem as follows.

\begin{enumerate}[label=(\alph*), topsep=0pt]
\item\label{item:1}
  There exists a \( j \in \{1, \dots, d\} \)
  and an open box \( B \subset \mathcal{W} \) such that
  \( \alpha \in \tau(B) \) and \( \tau_j'(w) \not = 0 \) for all \(
  w \in B \), where we define
  \begin{equation*}
    \tau_j' = \frac{\partial \tau}{\partial w_j}.
  \end{equation*}
\item\label{item:2}
  There exists a box
  \( B \subset \mathcal{W} \) such that $\tau(w)=\alpha$ for all
  \( w \in B \).
\end{enumerate}

We consider these two cases in turn. In case~\ref{item:1}, we show
that while $P \mapsto \gamma(P)(\alpha)$ is Gateaux differentiable, i.e., the limit
on the left-hand side of \eqref{eq:pathwise-diff-def} exists, this
limit cannot be written as an inner product with a fixed element of
\( \LP \). In case~\ref{item:2}, we show that the limit on the left-hand
side of \eqref{eq:pathwise-diff-def} does not exist.

\begin{enumerate}[label=(\alph*), topsep=0pt]
\item Let $g \colon \mathcal{W} \rightarrow \R$ be a smooth
  non-zero function with compact support contained in \( B \),
  e.g., a mollifier. Define a path
  \( \{P_{\epsilon} : \epsilon \in (0,\delta) \} \), for some
  $\delta>0$, by defining
  \begin{equation}
    \label{eq:9}
    \begin{split}
      \E_{P_{\epsilon}}{[Y \mid A=1, W=w]}
      & =
        \E_{P}{[Y \mid A=1, W=w]} + \epsilon g(w),
      \\
      \E_{P_{\epsilon}}{[Y \mid A=0, W=w]}
      & =
        \E_{P}{[Y \mid A=0, W=w]},
      \\
        P_{\epsilon}(\diff a, \diff w)
      & = P(\diff a, \diff w),
    \end{split}
  \end{equation}
  i.e., only the conditional outcome given treatment is modified. The
  CATE function of \( P_{\epsilon} \) is then
  \( \tau(P_{\epsilon}) = \tau + \epsilon g \). As \( g \) is smooth
  it follows that \( \{P_{\epsilon}\} \subset \mathcal{P}_S \). We use
  the shorthand notation $\tau_{\epsilon} = \tau(P_{\epsilon})$ and
  $\tau = \tau(P)$ in the following. We now write our target parameter
  as
    \begin{equation*}
      \gamma(P_{\epsilon})(\alpha) = P_{\epsilon}{(\tau_{\epsilon}(W) \leq
        \alpha)}
      = \int_{\mathcal{W}} \1{\{\tau_{\epsilon}(w) \leq \alpha
        \}}  P_{\epsilon}{(\diff w)}
      =
      \int_{\mathcal{W}} \1{\{\tau_{\epsilon}(w) \leq \alpha \}}  P{(\diff w)},
    \end{equation*}
    where we use that the marginal distribution of \( W \)
    stays the same along the path \( P_{\epsilon} \). Next we
    use that because $g$ has support in \( B \) we have
    \( \tau_{\epsilon} = \tau \) outside \( B \), and so
    \begin{equation}
      \label{eq:1}
      \begin{split}
        \gamma(P_{\epsilon})({\alpha})
        & =
          \int_{B} \1{\{\tau_{\epsilon}(w) \leq \alpha \}}
          P{(\diff w)}
          +
          \int_{B^c} \1{\{\tau(w) \leq \alpha \}}  P{(\diff w)}
        \\
        & =
          P(W \in B, \tau_{\epsilon}(W) \leq \alpha)
          +
          \int_{B^c} \1{\{\tau(w) \leq \alpha \}}  P{(\diff w)}.
      \end{split}
    \end{equation}
    Define
    \begin{equation*}
      \tau_{j,\epsilon}' = \frac{\partial \tau_{\epsilon}}{\partial w_j} 
    \end{equation*}
    Without loss of generality we assume that $\tau_j'>0$ on
    \( B \). Assume also, purely for notational convenience,
    that \( j=1 \) and write \( w = (w_j, w_{-j}) \) and
    \( B= B_j \times B_{-j} \). By picking $\delta$ small
    enough we can ensure that also
    \( \tau_{j,\epsilon}'(w) >0 \) for all \( w \in B \) and
    \( \epsilon \in (0, \delta) \). We can also pick $\delta$
    small enough to ensure that there is an open interval
    \( I_{\alpha} \subset \R \) such that
    \( \alpha \in I_{\alpha} \subset \tau_{\epsilon}(B) \) for all
    \( \epsilon \in (0, \delta) \). This implies that for all
    \( w_{-j} \in B_{-j} \) the function
    \( \tilde{w} \mapsto \tau_{\epsilon}(\tilde{w}, w_{-j}) \)
    is bijective on \( B_j \). Let
    $\tau_{j, \epsilon}^{-1}(\blank ; w_{-j})$ denote the
    inverse of this function, i.e.,
    \( \tau_{j, \epsilon}^{-1} \) is such that for all
    $\epsilon \in (0,\delta)$,
    \begin{equation}
      \label{eq:21}
      \tau_{\epsilon}(\tau_{j, \epsilon}^{-1}(z; w_{-j}),
      w_{-j})
      = z,
      \quad \text{for all }
      z \in  I_{\alpha},
      w_{-j} \in B_{-j}.
    \end{equation}
    We can now write
    \begin{align*}
      & P(W \in B, \tau_{\epsilon}(W) \leq \alpha)
      \\
      & = P(W_j \in B_j, W_{-j} \in B_{-j}, \tau_{\epsilon}(W_j, W_{-j}) \leq \alpha )
      \\
      & = \E_P{\left[
        \1{\{ W_{-j} \in B_{-j} \}}
        P(W_j \in B_j, \tau_{\epsilon}(W_j, W_{-j}) \leq \alpha \mid W_{-j})
        \right]}
      \\
      & = \E_P{\left[
        \1{\{ W_{-j} \in B_{-j} \}}
        P(W_j \in B_j, W_j \leq \tau_{j, \epsilon}^{-1}(\alpha; W_{-j}) \mid W_{-j})
        \right]}
      \\
      & = \E_P{\left[
        \1{\{ W_{-j} \in B_{-j} \}}
        P(l_j < W_j \leq \tau_{j, \epsilon}^{-1}(\alpha; W_{-j}) \mid W_{-j})
        \right]},
    \end{align*}
    where \( B_j = (l_j, r_j) \) and we use that we must have
    \( \tau_{j, \epsilon}^{-1}(\alpha; w_{-j}) \leq r_j \) for all
    \( w_{-j} \in B_{-j} \). Let \( F_j(\blank \mid w_{-j}) \)
    denote the conditional cumulative distribution function
    for \( W_j \) given \( W_{-j} = w_{-j} \), and let
    \( P_{-j} \) denote marginal distribution of \( W_{-j} \).
    Note that neither \( F_j \) nor \( P_{-j} \) depend on
    \( \epsilon \) because the distribution of \( W \) is left
    unchanged by $\epsilon$. With this notation we may then
    write
    \begin{equation}
      \label{eq:2}
      \begin{split}
        & P(W \in B, \tau_{\epsilon}(W) \leq \alpha)
        \\
        & = 
          \int_{B_{-j}} 
          \left\{
          F_j(\tau_{j, \epsilon}^{-1}(\alpha; w_{-j})
          \mid w_{-j})
          - F_j(l_j- \mid w_{-j})
          \right\}
          P_{-j}(\diff w_{-j})
      \end{split}
    \end{equation}     
    Equations~\eqref{eq:1} and~\eqref{eq:2} imply that
    \begin{equation}
      \label{eq:22}
      \frac{\partial }{\partial \epsilon}
      \Big|_{\epsilon=0}
      \gamma(P_{\epsilon})({\alpha})
      =
      \frac{\partial }{\partial \epsilon}      
      \Big|_{\epsilon=0}
      \int_{B_{-j}} 
      F_j(\tau_{j, \epsilon}^{-1}(\alpha; w_{-j}) \mid w_{-j})
      P_{-j}(\diff w_{-j}).
    \end{equation}
    %% TODO: Use a more elegant arugment here based on the implicit
    %% function theorem
    We now show that
    \( \epsilon \mapsto \tau_{j, \epsilon}^{-1}(\alpha; w_{-j}) \) is
    differentiable at $\epsilon=0$ for any \( w_{-j} \in B_{-j} \) and
    find the derivative. Let $\epsilon_n \downarrow 0$. For each
    $\epsilon_n$ there exists a (unique) \( w_n \) such that
    \( w_n = \tau_{j, \epsilon_n}^{-1}(\alpha; w_{-j}) \), and there
    is also a (unique) \( w_0 \) such that
    \( w_0 = \tau_{j}^{-1}(\alpha; w_{-j}) \), where we use
    \( \tau_{j}^{-1} \) to denote the inverse of
    \( \tau(\blank, w_{-j}) \). By the mean value theorem we have that
    \begin{align*}
      \tau_{\epsilon_n}(w_n, w_{-j})
      & = \tau(w_n, w_{-j}) + \epsilon_n g(w_n, w_{-j})
      \\
      & = \tau(w_0, w_{-j}) + (w_n-w_0) \tau_j'(\tilde{w}_n^{\tau},
        w_{-j})
      \\
      & \qquad
        +
        \epsilon_n 
        \left[
        g(w_0, w_{-j}) + (w_n-w_0) g_j'(\tilde{w}_n^{g}, w_{-j})
        \right]
    \end{align*}
    for some values
    \( \tilde{w}_n^{\tau}, \tilde{w}_n^{g} \in (w_0 \wedge
    w_n, w_0 \vee w_n)\), where we use \( g_j' \) to denote
    the partial derivative of $g$. It follows that
    \begin{align*}
      w_n 
      \left[
      \tau_j'(\tilde{w}_n^{\tau}, w_{-j})
      +
      \epsilon_n g_j'(\tilde{w}_n^{g}, w_{-j})
      \right]
      & = \tau_{\epsilon_n}(w_n, w_{-j}) -\tau(w_0, w_{-j}) +
        w_0 \tau_j'(\tilde{w}_n^{\tau}, w_{-j})
      \\
      & \qquad
        -
        \epsilon_n 
        \left[
        g(w_0, w_{-j}) -w_0 g_j'(\tilde{w}_n^{g}, w_{-j})
        \right]
      \\
      & = 
        w_0 
        \left[
        \tau_j'(\tilde{w}_n^{\tau}, w_{-j})
        +
        \epsilon_n g_j'(\tilde{w}_n^{g}, w_{-j})
        \right]
      \\
      & \qquad
        -
        \epsilon_n 
        g(w_0, w_{-j})
        ,
    \end{align*}
    we the last equality follows because
    \( \tau_{\epsilon_n}(w_n, w_{-j}) = \tau(w_0, w_{-j})=\alpha
    \), and so
    \begin{equation}
      \label{eq:25}
      w_n = w_0
      -\epsilon_n 
      \frac{g(w_0, w_{-j})}{\tau_j'(\tilde{w}_n^{\tau}, w_{-j})
        +
        \epsilon_n g_j'(\tilde{w}_n^{g}, w_{-j})}.
    \end{equation}
    By construction,
    \( \tau_{j, \epsilon_n}^{-1}(\alpha; w_{-j}) = w_n \) and
    \( \tau_{j}^{-1}(\alpha; w_{-j}) = w_0 \), and so 
    equation~\eqref{eq:25} gives
    \begin{equation}
      \label{eq:23}
      \frac{\tau_{j}^{-1}(\alpha; w_{-j}) - \tau_{j,
          \epsilon_n}^{-1}(\alpha; w_{-j})}{\epsilon_n}
      = \frac{w_0 - w_n}{\epsilon_n}
      = \frac{g(w_0, w_{-j})}{\tau_j'(\tilde{w}_n^{\tau}, w_{-j})
        +
        \epsilon_n g_j'(\tilde{w}_n^{g}, w_{-j})}
    \end{equation}
    By equation~\eqref{eq:25} and the smoothness of $\tau$ and
    $g$ we have that $w_n \rightarrow w_0$ for
    \( n \rightarrow \infty \), and so also
    \( \tilde{w}^{\tau}_n \rightarrow w_0 \).
    Equation~\eqref{eq:23} hence implies that
    \begin{equation}
      \label{eq:24}
      \frac{\partial }{\partial \epsilon}
      \Big|_{\epsilon=0}
      \tau_{j, \epsilon}^{-1}(\alpha; w_{-j})
      =
      \frac{g(w_0, w_{-j})}{\tau_j'(w_0, w_{-j})}
      =\frac{g(\tau_{j}^{-1}(\alpha; w_{-j}), w_{-j})}{\tau_j'(\tau_{j}^{-1}(\alpha; w_{-j}), w_{-j})}.
    \end{equation}
    Finally, equation~\eqref{eq:23} and the dominated
    convergence theorem imply that we may write
    equation~\eqref{eq:22} as
    \begin{equation}
      \label{eq:46}
      \begin{split}
        \frac{\partial }{\partial \epsilon}
        \Big|_{\epsilon=0}
        \gamma(P_{\epsilon})({\alpha})
        & =
          \int_{B_{-j}} 
          f_j(\tau_{j, \epsilon}^{-1}(\alpha; w_{-j}) \mid w_{-j})
          \frac{\partial }{\partial \epsilon}
          \Big|_{\epsilon=0}
          \tau_{j, \epsilon}^{-1}(\alpha; w_{-j})
          P_{-j}(\diff w_{-j})
        \\
        & =
          \int_{B_{-j}} 
          f_j(\tau_{j, \epsilon}^{-1}(\alpha; w_{-j}) \mid w_{-j})
          \frac{g(\tau_{j}^{-1}(\alpha; w_{-j}), w_{-j})}{\tau_j'(\tau_{j}^{-1}(\alpha; w_{-j}), w_{-j})}
          P_{-j}(\diff w_{-j}),
      \end{split}
    \end{equation}
    where we use \( f_j \) to denote the conditional density
    corresponding to \( F_j \), and the last equality follows
    from equation~\eqref{eq:24}.

    Equation~\eqref{eq:46} shows that
    \( P \mapsto \gamma(P)(\alpha) \) is Gateaux differentiable in the
    sense that the limit on the left-hand side of
    \eqref{eq:pathwise-diff-def} exists. We now show that the stronger
    requirement of pathwise differentiability fails. For
    $\gamma({\alpha})$ to be pathwise differentiable, we would need
    (c.f., equation~\eqref{eq:pathwise-diff-def}) to find a function
    $\psi_P \in \L{2}{P}$, such that
    \begin{equation}
      \label{eq:26}
      \int_{B_{-j}} 
      f_j(\tau_{j, \epsilon}^{-1}(\alpha; w_{-j}) \mid w_{-j})
      \frac{g(\tau_{j}^{-1}(\alpha; w_{-j}), w_{-j})}{\tau_j'(\tau_{j}^{-1}(\alpha; w_{-j}), w_{-j})}
      P_{-j}(\diff w_{-j})
      =
      P{[\psi_P \dot{\ell}]},
    \end{equation}
    for all sufficiently smooth functions \( g \), where
    \( \dot{\ell} \) is the score function of the path
    \( P_{\epsilon} \). We claim that such a function cannot
    exists. To see this, first define the set
    \begin{equation*}
      \check{B} = \left\{ w \in B : \tau(w) = \alpha \right\},
    \end{equation*}
    which is the intersection of \( B \) and the $\alpha$-level set of
    $\tau$. The left-hand side of equation~\eqref{eq:26} depends on
    $g$ only through the values that $g$ take on \( \check{B} \),
    because
    \( (\tau_{j}^{-1}(\alpha; w_{-j}), w_{-j}) \in \check{B} \) by
    definition of \( \tau_{j}^{-1} \), c.f.,
    equation~\eqref{eq:21}. Hence, given any $g$, we can modify
    \( g \) arbitrarily on $B \setminus \check{B}$ and leave the
    left-hand side of equation~\eqref{eq:26} unchanged.  The score
    function \( \dot{\ell} \) of the path we defined in
    equation~\eqref{eq:9} is
    \begin{equation*}
      \dot{\ell}(a, y, w) =
      a
      \left(
        y \frac{g(w)}{\mu_P(1, w)} - (1-y) \frac{g(w)}{1-\mu_P(1, w)}
      \right),
    \end{equation*}
    which will clearly in general be affected if $g$ is modified on
    $B \setminus \check{B}$. Hence, if the right-hand side of
    equation~\eqref{eq:26} should remain unchanged under such
    modifications of $g$, this implies that the function $\psi_P$
    would need to be zero on $B \setminus \check{B}$. By the implicit
    function theorem, \( \check{B} \) is a manifold of dimension
    strictly lower than \( d \), and so $\psi_P$ would need to be
    Lebesgue-almost surely zero on \( B \). This would imply that the
    right-hand side of equation~\eqref{eq:26} would be zero for any
    \( g \) with domain contained in \( B \). However, this cannot be
    the case, because we may pick $g$ to be, e.g., equal to one on a
    small open neighborhood \( \mathcal{N} \subset B \) and zero
    outside \( B \). For this choice of \( g \), the left-hand side of
    equation~\eqref{eq:26} will be non-zero. We conclude that there
    cannot exist a function $\psi_P \in \LP$ such that
    equation~\eqref{eq:26} holds, and hence $\gamma({\alpha})$ is not
    pathwise differentiable in case~\ref{item:1}.

  \item Define the path \( P_{\epsilon} \) in the same manner
    as in case~\ref{item:1}, but now pick a \( g \) such that
    \( P(g(W) > 0) > 0 \). By the same line of reasoning as
    above we have that
    \begin{equation*}
      \gamma(P_{\epsilon})({\alpha})
      =
      P(W \in B, \tau_{\epsilon}(W) \leq \alpha)
      +
      \int_{B^c} \1{\{\tau(w) \leq \alpha \}}  P{(\diff w)}.
    \end{equation*}
    In the present case, we have that
    \( \tau_{\epsilon}(W) = \alpha + \epsilon g(W) \) when
    \( W \in B \), and so for all \( \epsilon >0 \) we have
    \begin{equation*}
      P(W \in B, \tau_{\epsilon}(W) \leq \alpha)
      = P(W \in B,   \epsilon g(W) \leq 0)
      = P(W \in B,   g(W) \leq 0).
    \end{equation*}
    Because \( g \) is zero outside \( B \) and assumed to be
    strictly positive on a set of positive probability, we
    must have that
    \begin{equation}
      \label{eq:4}
      P(W \in B,   g(W) \leq 0) < P(W \in B).
    \end{equation}
    This show that
    \( \gamma(P)({\alpha}) - \gamma(P_{\epsilon})({\alpha}) = c \) for
    some \( c < 0 \). This implies that
    \( \epsilon \mapsto \gamma(P_{\epsilon})({\alpha}) \) is
    discontinuous at $\epsilon=0$ when \ref{item:2} holds, and
    thus $\gamma({\alpha})$ cannot be pathwise differentiable at any
    \( P \) for which \ref{item:2} holds.
  \end{enumerate}

\subsection{Proof of Theorem~\ref{theorem:anti-target-diff}}
\label{sec:proof-theorem-2}

We start by establishing a von~Mises expansion of the parameter
\( \Gamma(\alpha) \). The following calculations are heuristic, but we
justify them formally under some conditions in
Lemma~\ref{lemma:gateaux} below.

First write
\begin{equation}
  \label{eq:29}
  \begin{split}  
    \Gamma(P)({\alpha})
    &= \int_{-1}^{\alpha}
      \E_P{[\1{\{\tau(P)(W) \leq u\}}]} \diff u
    \\
    &= 
      \E_P{ 
      \left[
      \int_{-1}^{\alpha} \1{\{\tau(P)(W) \leq u\}} \diff u
      \right]}
    \\
    &= 
      \E_P{ 
      \left[
      \int \1{\{\tau(P)(W) \leq u \leq \alpha\}} \diff u
      \right]}
    \\
    &= 
      \E_P{ 
      \left[
      \1{\{\tau(P)(W) \leq \alpha\}}(\alpha- \tau(P)(W))
      \right]} .
  \end{split}
\end{equation}
Following the heuristic approach outlined by, e.g.,
\cite{hines2022demystifying} and \cite{kennedy2022semiparametric}, we
now take the derivative of \( \Gamma(P_{\epsilon})({\alpha}) \) with
respect to $\epsilon$, where
\( P_{\epsilon}= P + \epsilon(\delta_{O_i} - P) \) and
\( \delta_{O_i} \) is the Dirac-measure with mass at \( O_i \). One
application of the product rule gives us that
  \begin{equation}
    \label{eq:5}
    \begin{split}
    \pd{\epsilon} \Gamma(P_{\epsilon})({\alpha})
    & =
      \E_{(\delta_{O_i}  - P)}{ 
      \left[
      \1{\{\tau(P)(W) \leq \alpha\}}(\alpha- \tau(P)(W))
      \right]}
    \\
    & \quad +
      \E_{P}{ 
      \left[
      \pd{\epsilon}
      \1{\{\tau(P_{\epsilon})(W) \leq \alpha\}}(\alpha- \tau(P_{\epsilon})(W))
      \right]}
    \\
    & =
      \1{\{\tau(P)(W_i) \leq \alpha\}}(\alpha- \tau(P)(W_i))
      - \Gamma_{\alpha}(P)
    \\
    & \quad +
      \E_{P}{ 
      \left[
      \pd{\epsilon}
      \1{\{\tau(P_{\epsilon})(W) \leq \alpha\}}(\alpha- \tau(P_{\epsilon})(W))
      \right]}.
    \end{split}
  \end{equation}
  A second heuristic application of the product rule gives us
  that
  \begin{equation}
    \label{eq:6}
    \begin{split}
      & \E_{P}{ 
        \left[
        \pd{\epsilon}
        \1{\{\tau(P_{\epsilon})(W) \leq \alpha\}}(\alpha- \tau(P_{\epsilon})(W))
        \right]}
      \\
      & =
        \E_{P}{ 
        \left[
        \pd{\epsilon}
        \1{\{\tau(P_{\epsilon})(W) \leq \alpha\}}(\alpha- \tau(P)(W))
        \right]}
      \\
      & \quad
        -
        \E_{P}{ 
        \left[    
        \1{\{\tau(P)(W) \leq \alpha\}}\pd{\epsilon} \tau(P_{\epsilon})(W)
        \right]}.
    \end{split}
  \end{equation}
  Whenever \( (\alpha - \tau(P)(W)) \not = 0 \) the function
  $\epsilon \mapsto \1{\{\tau( P_{\epsilon})(W) \leq \alpha\}}$
  is constant, and hence we would expect that
  \begin{equation}
    \label{eq:7}
    \E_{P}{ 
      \left[
        \pd{\epsilon}
        \1{\{\tau(P_{\epsilon})(W) \leq \alpha\}}(\alpha- \tau(P)(W))
      \right]} = 0.
  \end{equation}
  Next, it is a standard (heuristic) result
  \citep[e.g.,][]{hines2022demystifying} that
  \begin{equation*}
    \pd{\epsilon} \tau(P_{\epsilon})(w)
    = \frac{\delta_{W_i}(w)}{\rho(P)(w)}
    \left[      
      \frac{A_i}{\pi(P)(w)}(Y_i - \mu(P)(1, w))
      - \frac{1-A_i}{1-\pi(P)(w)}(Y_i - \mu(P)(0, w))
    \right],
  \end{equation*}
  where $\rho$ is the marginal density of \( W \), and
  $\pi$ and $\mu$ were defined in equations~\eqref{eq:10}
  and~\eqref{eq:11}, respectively. Hence
  \begin{equation}
    \label{eq:8}
    \begin{split}
      &\E_{P}{ 
        \left[    
        \1{\{\tau(P)(W) \leq \alpha\}}\pd{\epsilon} \tau(P_{\epsilon})(W)
        \right]}
      \\
      & =
        \1{\{\tau(P)(W_i) \leq \alpha\}}
        \left[      
        \frac{A_i}{\pi(P)(W_i)}(Y_i - \mu(P)(1, W_i))
        - \frac{1-A_i}{1-\pi(P)(W_i)}(Y_i - \mu(P)(0, W_i))
        \right].
    \end{split}
  \end{equation}
  The heuristic calculations in
  equations~\eqref{eq:5}-\eqref{eq:8} suggest that
  \begin{equation*}
    \pd{\epsilon} \Gamma(P_{\epsilon})({\alpha})
    = \upsilon_{\alpha}(P)
    % = \1{\{\tau(P)(W_i) \leq \alpha\}}(\alpha- \phi(O_i;P)) -
    % \Gamma_{\alpha}(P)
    ,
  \end{equation*}
  where $\upsilon_{\alpha}$ was defined in
  equation~\eqref{eq:anti-target-can-grad}. We formalize this with the
  expansion given in Lemma~\ref{lemma:gateaux} below.

  \begin{lemma}
  \label{lemma:gateaux}
  Assume that \( \mathcal{P} \) fulfills Assumption~\ref{assum1}. For
  any \( P', P \in \mathcal{P} \) we may write
  \begin{equation*}
    \Gamma(P')({\alpha}) - \Gamma(P)({\alpha})
    = (P'- P){[\upsilon_{\alpha}(P')]}
    + \mathrm{Rem}_{\alpha}(P', P),
  \end{equation*}
  where the function $\upsilon_{\alpha}$ was defined in
  equation~\eqref{eq:anti-target-can-grad}, and
  \begin{equation}
    \label{eq:lemma-gateaux-rem}
    \begin{split}
      \mathrm{Rem}_{\alpha}(P', P)
      = &
          P{
          \left[
          \frac{\eta_{\alpha}(P')}{\pi(P')}
          (\pi(P) - \pi(P'))
          (\mu(P)(1,\blank) - \mu(P')(1,\blank))
          \right]}
      \\
        & +
          P{
          \left[
          \frac{\eta_{\alpha}(P')}{1-\pi(P')}
          (\pi(P') - \pi(P))
          (\mu(P)(0,\blank) - \mu(P')(0,\blank))
          \right]}
      \\
        & + P{[(\eta_{\alpha}(P')-\eta_{\alpha}(P))(\alpha - \tau(P))]},
    \end{split}
  \end{equation}
  with $\eta_{\alpha}$ defined in equation~\eqref{eq:12}.
\end{lemma}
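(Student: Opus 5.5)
The plan is a direct algebraic computation. There are no limits or paths, only the explicit form of $\upsilon_\alpha$ and iterated expectations. Everything rests on the representation derived in display~\eqref{eq:29},
\begin{equation*}
  \Gamma(P)(\alpha) = P{\left[\eta_{\alpha}(P)\,(\alpha-\tau(P))\right]},
\end{equation*}
which holds for every $P \in \mathcal{P}$ by Fubini. Assumption~\ref{assum1} guarantees that $\phi(P')$ is well defined and bounded, so all expectations below are finite. Throughout I abbreviate $\eta'=\eta_\alpha(P')$, $\tau'=\tau(P')$, $\pi'=\pi(P')$, $\mu'_a=\mu(P')(a,\blank)$, and similarly without primes for $P$.

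\emph{Step 1: the gradient is centred under its own law.} Conditioning on $W$ under $P'$, the two augmentation terms of $\phi(P')$ have conditional mean zero. Indeed, $\E_{P'}[A(Y-\mu'_1(W))\mid W]=\pi'(W)\{\mu'_1(W)-\mu'_1(W)\}=0$, and likewise for the $A=0$ term. Hence $P'[\eta'(\alpha-\phi(P'))]=P'[\eta'(\alpha-\tau')]=\Gamma(P')(\alpha)$, so $P'[\upsilon_\alpha(P')]=0$. The left-hand side minus the linear term therefore reduces to
\begin{equation*}
  \Gamma(P')(\alpha)-\Gamma(P)(\alpha)-(P'-P)[\upsilon_\alpha(P')]
  = P{\left[\eta'(\alpha-\phi(P'))\right]} - P{\left[\eta(\alpha-\tau)\right]}.
\end{equation*}
Here the constant $-\Gamma(P')(\alpha)$ in $\upsilon_\alpha(P')$ cancels against $\Gamma(P')(\alpha)$.

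\emph{Step 2: condition on $W$ under $P$.} Next I compute the conditional mean of $\phi(P')$ under $P$:
\begin{equation*}
  \E_P[\phi(P')\mid W] = \tau' + \frac{\pi}{\pi'}(\mu_1-\mu'_1) - \frac{1-\pi}{1-\pi'}(\mu_0-\mu'_0).
\end{equation*}
I then write $\alpha-\tau'=(\alpha-\tau)-(\tau'-\tau)$ with $\tau'-\tau=(\mu'_1-\mu_1)-(\mu'_0-\mu_0)$. The term $P[\eta'(\alpha-\tau)]-P[\eta(\alpha-\tau)]$ is exactly the third remainder term $P[(\eta_\alpha(P')-\eta_\alpha(P))(\alpha-\tau(P))]$.

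\emph{Step 3: collect the product terms.} What remains is
\begin{equation*}
  P{\left[\eta'\Big\{(\mu_1-\mu'_1)\Big(1-\frac{\pi}{\pi'}\Big)-(\mu_0-\mu'_0)\Big(1-\frac{1-\pi}{1-\pi'}\Big)\Big\}\right]}.
\end{equation*}
Using $1-\pi/\pi'=(\pi'-\pi)/\pi'$ and $1-(1-\pi)/(1-\pi')=(\pi-\pi')/(1-\pi')$, both pieces become second-order products of the propensity and outcome errors, weighted by $\eta'/\pi'$ and $\eta'/(1-\pi')$ respectively. This is the doubly robust structure of the first two lines of~\eqref{eq:lemma-gateaux-rem}.

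The step I expect to require care is only the sign bookkeeping in Step 3. My computation gives the first product as $(\pi'-\pi)(\mu_1-\mu'_1)$, i.e.\ $(\pi-\pi')(\mu'_1-\mu_1)$. I would recheck it against the stated form, where the ordering of differences in the first line appears to be flipped in one factor. I would then record the remainder in whichever orientation the computation yields. This does not affect any later use, since only absolute bounds via Cauchy--Schwarz, such as $\|\hat\pi_n-\pi\|_P\|\hat\mu_n-\mu\|_P$, are needed.
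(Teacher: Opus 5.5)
Your argument is correct and is essentially the paper's own proof: centre $\upsilon_\alpha(P')$ under $P'$, apply the tower property under $P$, and split $\alpha-\tau(P')=(\alpha-\tau(P))+(\tau(P)-\tau(P'))$. The sign discrepancy you flag is real and is not an error on your side. Combining $P[\eta_\alpha(P')(\mu(P)(1,\cdot)-\mu(P')(1,\cdot))(1-\pi(P)/\pi(P'))]$ gives the factor $\pi(P')-\pi(P)$ in the first line; the paper slips in its final ``Hence'' step, and this is harmless downstream because Lemma~\ref{lemma:rem-bound} and the pathwise-derivative arguments only use the product structure.
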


  \begin{proof}[Proof of Lemma~\ref{lemma:gateaux}] Define for
    any \( P \) and \( P' \) the remainder term
    \begin{equation}
      \label{eq:20}
      \mathrm{Rem}_{\alpha}(P', P) = \Gamma(P')({\alpha}) +
      P{[\upsilon_{\alpha}( P')]}  - \Gamma(P)({\alpha}).
    \end{equation}
    As \( P{[\upsilon_{\alpha}( P)]} =0 \) for any
    \( P \in \mathcal{P} \), the expansion
    \begin{equation*}
      \Gamma(P')({\alpha}) - \Gamma(P)({\alpha})
      = (P'-P){[\upsilon_{\alpha}( P')]} + \mathrm{Rem}_{\alpha}(P', P),
    \end{equation*}
    follows by the definition of \( \mathrm{Rem}_{\alpha}(P', P) \)
    given in equation~\eqref{eq:20}. We now show that
    \( \mathrm{Rem}_{\alpha}(P', P) \) is given as stated in
    equation~\eqref{eq:lemma-gateaux-rem} of
    Lemma~\ref{lemma:gateaux}. Using equations~\eqref{eq:29}
    and~\eqref{eq:20}, we see that
    \begin{align*}
      & \mathrm{Rem}_{\alpha}(P', P)
      \\
      & = \E_P{\left[
        \1{\{\tau(P')(W) \leq \alpha\}}
        \left\{
        \alpha - \tau(P')(W)
        \right\}
        \right]}
      \\
      & \quad -
        \E_P{\left[
        \1{\{\tau(P')(W) \leq \alpha\}}
        \left\{
        \frac{A}{\pi(P')(W)}(Y - \mu(P')(1, W))
        \right\}
        \right]}
      \\
      & \quad
        +
        \E_P{\left[
        \1{\{\tau(P')(W) \leq \alpha\}}
        \left\{
        \frac{1-A}{1-\pi(P')(W)}(Y - \mu(P')(0, W))
        \right\}
        \right]}
      \\
      & \quad
        - \E_P{\left[ \1{\{\tau(P)(W) \leq
        \alpha\}}(\alpha-\tau(P)(W)) \right]}
      \\
      & = \E_P{\left[
        \1{\{\tau(P')(W) \leq \alpha\}}
        \left\{
        \alpha - \tau(P')(W)
        \right\}
        \right]}
      \\
      & \quad -
        \E_P{\left[
        \1{\{\tau(P')(W) \leq \alpha\}}
        \left\{
        \frac{\pi(P)(W)}{\pi(P')(W)}(\mu({P})(1, W) - \mu(P')(1, W))
        \right\}
        \right]}
      \\
      & \quad +
        \E_P{\left[
        \1{\{\tau(P')(W) \leq \alpha\}}
        \left\{
        \frac{1-\pi(P)(W)}{1-\pi(P')(W)}(\mu({P})(0, W) - \mu(P')(0, W)) 
        \right\}
        \right]}            
      \\
      & \quad
        - \E_P{\left[ \1{\{\tau(P)(W) \leq \alpha\}}(\alpha-\tau(P)(W)) \right]},
    \end{align*}
    where the second equality follows by the tower property.  Using
    the definition of the nuisance parameter $\eta_{\alpha}$ (c.f.,
    equation~\eqref{eq:12}) we may write
    \begin{align*}
      & \mathrm{Rem}_{\alpha}(P', P) =
      \\& P{[\eta_{\alpha}(P')(\alpha - \tau(P')) -
      \eta_{\alpha}(P)(\alpha - \tau(P))]}
      \\
      & 
        - P{
        \left[
        \eta_{\alpha}(P')
        \left\{
        \frac{\pi(P)}{\pi(P')}(\mu({P})(1, \blank) - \mu(P')(1,
        \blank))
        \right\}
        \right]}
        +
        P{
        \left[
        \eta_{\alpha}(P')
        \left\{
        \frac{1-\pi(P)}{1-\pi(P')}(\mu({P})(0, \blank) - \mu(P')(0,
        \blank))
        \right\}
        \right]}.
    \end{align*}
    The first term on the right-hand side above can be written as
    \begin{align*}
      & P{[\eta_{\alpha}(P')(\alpha - \tau( P')) -
        \eta_{\alpha}(P)(\alpha - \tau( P))]}
      \\
      & = P{[\eta_{\alpha}(P')(\tau( P) - \tau(
        P'))]}
        +
        P{[(\eta_{\alpha}(P')-\eta_{\alpha}(P))(\alpha -
        \tau( P))]}
      \\
      & = P{[\eta_{\alpha}(P')
        [\mu(P)(1,\blank) - \mu(P')(1,\blank)]
        ]}
        -P{[\eta_{\alpha}(P')
        [\mu(P)(0,\blank) - \mu(P')(0,\blank)]
        ]}        
      \\
      & \quad +
        P{[(\eta_{\alpha}(P')-\eta_{\alpha}(P))(\alpha - \tau( P))]}.
    \end{align*}
    Hence
    \begin{align*}
      \mathrm{Rem}_{\alpha}(P', P)
      = &
          P{
          \left[
          \frac{\eta_{\alpha}(P')}{\pi(P')}
          (\pi(P) - \pi(P'))
          (\mu(P)(1,\blank) - \mu(P')(1,\blank))
          \right]}
      \\
        & +
          P{
          \left[
          \frac{\eta_{\alpha}(P')}{1-\pi(P')}
          (\pi(P') - \pi(P))
          (\mu(P)(0,\blank) - \mu(P')(0,\blank))
          \right]}
      \\
        & + P{[(\eta_{\alpha}(P')-\eta_{\alpha}(P))(\alpha - \tau( P))]}.
    \end{align*}
  \end{proof}

  Our proof of Theorem~\ref{theorem:anti-target-diff} relies on
  Lemmas~\ref{prop:expansion-gradient} and~\ref{lemma:gateaux} and the following technical lemma, which
  is an adaptation of Lemma~2 in
  \citep{vdLaanLuedtke2014optimalBepress}. 

\begin{lemma}
  \label{lemma:luedtke-laan}
  Let $\{P_{\epsilon} \} \subset \mathcal{P}$ be a quadratic mean
  differentiable path from \( P \in \mathcal{P} \) fulfilling
  Assumption~\ref{assum-path} and let \( r \geq 1 \). If
  $u \mapsto \gamma(P)(u)$ is continuous at $\alpha$ then
  \begin{equation*}
    \frac{\partial }{\partial \epsilon} \Big|_{\epsilon=0}
    P_{\epsilon}{[(\eta_{\alpha}(P) -
      \eta_{\alpha}(P_{\epsilon}))(\alpha - \tau(P_{\epsilon}))^r]}
    = 0.
  \end{equation*}
\end{lemma}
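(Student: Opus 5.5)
The plan is to bound the quantity directly by its difference quotient, without differentiating under the integral. Write $\tau_\epsilon = \tau(P_\epsilon)$ and $\eta_\epsilon = \eta_\alpha(P_\epsilon)$, and set
\begin{equation*}
  D(\epsilon) = P_{\epsilon}{[(\eta_{\alpha}(P) - \eta_{\epsilon})(\alpha - \tau_\epsilon)^r]}.
\end{equation*}
Then $D(0)=0$, and it suffices to show $D(\epsilon)/\epsilon \to 0$ as $\epsilon\downarrow 0$.

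\textbf{Step 1: pointwise bound on the integrand.} The factor $\eta_{\alpha}(P)(w) - \eta_\epsilon(w)$ is nonzero only on the set where $\tau(w)$ and $\tau_\epsilon(w)$ lie on opposite sides of $\alpha$. On that set, $|\alpha - \tau_\epsilon(w)| \le |\tau_\epsilon(w) - \tau(w)|$. By Assumption~\ref{assum-path}\ref{item:4} and the mean value theorem, $|\mu_\epsilon(a,w) - \mu(a,w)| \le C\epsilon$ for both $a$, so $\|\tau_\epsilon - \tau\|_\infty \le 2C\epsilon$. Consequently the disagreement set is contained in $\{w : |\tau(w) - \alpha| \le 2C\epsilon\}$, and on it $|\alpha - \tau_\epsilon|^r \le (2C\epsilon)^r$.

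\textbf{Step 2: change of measure and conclusion.} Assumption~\ref{assum-path}\ref{item:3} bounds the $W$-marginal of $P_\epsilon$ by $C$ times that of $P$. The integrand depends only on $w$, so
\begin{equation*}
  |D(\epsilon)| \le C (2C\epsilon)^r \, P(|\tau(P)(W) - \alpha| \le 2C\epsilon).
\end{equation*}
Since $r\ge 1$, it follows that $|D(\epsilon)|/\epsilon \le C(2C)^r \epsilon^{r-1} P(\alpha - 2C\epsilon \le \tau(W) \le \alpha + 2C\epsilon)$. This probability equals $\gamma(\alpha+2C\epsilon) - \gamma((\alpha-2C\epsilon)-)$, which tends to $P(\tau(W)=\alpha) = \gamma(\alpha)-\gamma(\alpha-)$. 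That jump is $0$ exactly because $\gamma(P)$ is continuous at $\alpha$. Hence $D(\epsilon)/\epsilon \to 0$, and together with $D(0)=0$ this gives the claimed zero derivative.

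\textbf{Main obstacle.} The only delicate point is the case $r=1$, where the factor $\epsilon^{r-1}$ gives no help. There the conclusion rests entirely on the vanishing of the atom $P(\tau(W)=\alpha)$, which is where continuity of $\gamma$ is essential; the general shrinking-neighbourhood argument alone would not suffice. I also need to confirm that the uniform derivative bound in Assumption~\ref{assum-path} holds on all of $[0,\epsilon]$, so that the mean value theorem applies uniformly in $w$. Finally, the path is only defined for $\epsilon>0$, so the derivative is one-sided, consistent with the definition used in Proposition~\ref{prop:expansion-gradient}.
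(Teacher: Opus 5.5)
Your proof is correct and follows essentially the same route as the paper: you confine the disagreement set to $\{|\tau(P)(W)-\alpha|\le 2C\epsilon\}$ using the uniform bound $\|\tau(P_\epsilon)-\tau(P)\|_\infty=O(\epsilon)$, change measure via the bounded density ratio in Assumption~\ref{assum-path}\ref{item:3}, and use continuity of $\gamma(P)$ at $\alpha$ to make $P(|\tau(P)(W)-\alpha|\le 2C\epsilon)$ vanish. The only difference is in handling the power $r$: the paper first reduces $|\alpha-\tau(P_\epsilon)|^r\le 2^{r-1}|\alpha-\tau(P_\epsilon)|$, whereas you keep the full factor $(2C\epsilon)^r$, which shows, as you observe, that continuity is actually needed only when $r=1$.
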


\begin{proof}
  Define the functions \( f_0(w) = \tau(P)(w) - \alpha\)
  and \( f_{\epsilon}(w) = \tau(P_{\epsilon})(w) - \alpha\).
  Then
  \begin{equation}
    \label{eq:15}
    \begin{split}
      \left\vert
      P_{\epsilon}{[(\eta_{\alpha}(P) -
      \eta_{\alpha}(P_{\epsilon}))(\alpha - \tau(P_{\epsilon}))^r]}
      \right\vert
      &
        = 
        \left\vert
        \int (\1{\{f_0 \leq 0 \}} - \1{\{f_{\epsilon} \leq 0 \}}) f_{\epsilon}^r  \diff P_{\epsilon}
        \right\vert
      \\
      &
        \leq     
        \int
        \left\vert
        (\1{\{f_0 \leq 0 \}} - \1{\{f_{\epsilon} \leq 0 \}})
        \right\vert
        |f_{\epsilon}|^r  \diff P_{\epsilon}
      \\
      &
        \leq
        2^{r-1}
        \int
        \left\vert
        (\1{\{f_0 \leq 0 \}} - \1{\{f_{\epsilon} \leq 0 \}})
        \right\vert
        |f_{\epsilon}|  \diff P_{\epsilon}
      \\
      &
        \leq
        2^{r-1}
        C
        \int
        \left\vert
        (\1{\{f_0 \leq 0 \}} - \1{\{f_{\epsilon} \leq 0 \}})
        \right\vert
        |f_{\epsilon}|  \diff P,
    \end{split}
  \end{equation}
  where the second-to-last inequality follows because
  \( |f_{\epsilon}|\leq 2 \), and the last inequality follows by
  Assumption~\ref{assum-path}~\ref{item:3}. By the mean value theorem
  and Assumption~\ref{assum-path}~\ref{item:4} we obtain the uniform
  bound
  \begin{equation}
    \label{eq:14}
    \sup_{w \in \mathcal{W}}| f_0(w) - f_{\epsilon}(w) | \leq C \epsilon,
  \end{equation}
  for some finite constant \( C \) and $\epsilon$ small
  enough. As
  \( \1{\{f_0(w) \leq 0 \}} - \1{\{f_{\epsilon}(w) \leq 0 \}}
  \) is non-zero if and only if either
  \( f_0(w) \leq 0 < f_{\epsilon}(w) \) or
  \( f_{\epsilon}(w) \leq 0 < f_{0}(w) \), the uniform bound in
  equation~\eqref{eq:14} implies that when
  \( \1{\{f_0(w) \leq 0 \}} - \1{\{f_{\epsilon}(w) \leq 0 \}}
  \) is non-zero, we must have that
  \( |f_0(w)| < C \epsilon \). Hence,
  \begin{equation}
    \label{eq:16}
    \int
    \left\vert
      (\1{\{f_0 \leq 0 \}} - \1{\{f_{\epsilon} \leq 0 \}})
    \right\vert
    |f_{\epsilon}|  \diff P
    \leq
    \int
    \1{\{|f_0| \leq C \epsilon \}}
    |f_{\epsilon}|  \diff P,
  \end{equation}
  and applying the uniform bound from equation~\eqref{eq:14}
  again, we obtain
  \begin{equation}
    \label{eq:17}
    \int
    \1{\{|f_0| \leq C \epsilon \}}
    |f_{\epsilon}|  \diff P
    \leq
    \int
    \1{\{|f_0| \leq C \epsilon \}}
    |f_0 + C \epsilon|  \diff P
    \leq
    2C\epsilon \int
    \1{\{|f_0| \leq C \epsilon \}}   \diff P
  \end{equation}
  Equations~\eqref{eq:15},~\eqref{eq:16}, and~\eqref{eq:17}
  thus give that
  \begin{equation}
    \label{eq:18}
    \left\vert
      P_{\epsilon}{[(\eta_{\alpha}(P) -
        \eta_{\alpha}(P_{\epsilon}))(\alpha - \tau(        P_{\epsilon}))^r]}
    \right\vert
    = \bigO{
      \left(
        \epsilon P{(|f_0(W)| \leq C \epsilon)}
      \right)}.
  \end{equation}
  Finally, using the definition of \( f_0 \) we have that
  \begin{align*}
    P{(|f_0(W)| \leq C \epsilon)}
    = P{(f_0(W) \leq C
    \epsilon)} - P{(f_0(W) < - C \epsilon)}
    = 
    \gamma(P)(\alpha+C\epsilon) - \gamma(P)(\{\alpha - C\epsilon\}-).
  \end{align*}
  Because \( u \mapsto \gamma(P)(u) \) is a cumulative distribution
  function it is right-continuous so
  \( \gamma(P)({\alpha+C\epsilon}) \rightarrow \gamma(P)({\alpha}) \),
  when $\epsilon\downarrow 0$. Also, because \( \gamma(P) \) has
  left-hand limits, the function \( u \mapsto \gamma(P)(u-) \) is
  left-continuous
  %% Proof is something like:
  %% 
  % Assume \( f \) has left-hand limits everywhere. Let
  % \( u_n \uparrow u \) and for each \( n \) pick \( u_n - n^{-1} < t_n < u_n \) such
  % that
  % \begin{equation*}
  %   | f(t_n) - f(u_n-) | < \epsilon,
  % \end{equation*}
  % which is possible because the limit \( f(u_n-) \) are assumed to exist.
  % Then
  % \begin{equation*}
  %   | f(u-) -f(u_n-) | \leq
  %   | f(u-) - f(t_n) | + 
  %   | f(t_n) - f(u_n-) |  < | f(u-) - f(t_n) | + \epsilon
  %   = O(\epsilon),
  % \end{equation*}
  % where the last equality follows because \( t_n \uparrow u \) by
  % construction and hence \( f(t_n) \rightarrow f(u) \) because \( f \)
  % has left-hand limits.
  %% 
  so
  \( \gamma(P)({\{\alpha - C\epsilon\}-}) \rightarrow
  \gamma(P)({\alpha-}) \). Hence
  \begin{equation*}
    P{(|f_0(W)| \leq C \epsilon)} \longrightarrow
    \gamma(P)({\alpha}) - \gamma(P)({\alpha-}) = 0,
  \end{equation*}
  where the last equality follows because
  $ u \mapsto \gamma(P)(u)$ assumed continuous at $\alpha$.
  Thus we conclude from equation~\eqref{eq:18} that
  \begin{equation*}
        \left\vert
      P_{\epsilon}{[(\eta_{\alpha}(P) -
        \eta_{\alpha}(P_{\epsilon}))(\alpha - \tau(P_{\epsilon}))^r]}
    \right\vert = \smallO{(\epsilon)},
  \end{equation*}
  which proves the result.
\end{proof}

\begin{proof}[Proof of Theorem~\ref{theorem:anti-target-diff}]
  By Lemma~\ref{lemma:gateaux}, we may write
  \begin{equation*}
    \Gamma(P)({\alpha}) - \Gamma(P_{\epsilon})({\alpha})
    = (P- P_{\epsilon}){[\upsilon_{\alpha}(P)]}
    + \mathrm{Rem}_{\alpha}(P, P_{\epsilon}),
  \end{equation*}
  for any path \( \{P_{\epsilon} : \epsilon >0\} \), where
  \( \mathrm{Rem}_{\alpha}(P, P_{\epsilon}) \) was defined in
  equation~\eqref{eq:lemma-gateaux-rem}. By
  Assumption~\ref{assum1}, the function
  \( \upsilon_{\alpha}(P) \) is uniformly bounded and
  hence it follows from Proposition~\ref{prop:expansion-gradient}
  that \( \Gamma({\alpha}) \) is pathwise differentiable and
  that \( \upsilon_{\alpha} \) is a gradient if we can show
  that the derivative of
  \( \epsilon \mapsto \mathrm{Rem}_{\alpha}(P, P_{\epsilon})
  \) is zero at $\epsilon=0$. Because the model has a
  saturated tangent space and
  \( P{[\upsilon_{\alpha}(P)]} = 0 \) this will also
  show that \( \upsilon_{\alpha} \) is the unique influence
  function and hence the efficient influence function of
  $\Gamma({\alpha})$. To show that the derivative of the
  remainder term is zero, we write
  \begin{equation*}
    \mathrm{Rem}_{\alpha}(P, P_{\epsilon})
    = A_1(\epsilon) + A_0(\epsilon) + B(\epsilon)
  \end{equation*}
  where
  \begin{align*}        
    A_1(\epsilon)
    & =
      \nu{ 
      \left[ \rho(P_{\epsilon})
      \frac{\eta_{\alpha}(P)}{\pi(P)}
      (\pi(P_{\epsilon}) - \pi(P))
      (\mu(P_{\epsilon})(1,\blank) - \mu({P})(1,\blank))
      \right]},
    \\
    A_0(\epsilon)
    & =
      \nu{
      \left[ \rho(P_{\epsilon})
      \frac{\eta_{\alpha}(P)}{1-\pi(P)}
      (\pi(P) - \pi(P_{\epsilon}))
      (\mu(P_{\epsilon})(0,\blank) - \mu({P})(0,\blank))
      \right]},
    \\
    B(\epsilon)
    & = P_{\epsilon}{[(\eta_{\alpha}(P)-\eta_{\alpha}(P_{\epsilon}))(\alpha - \tau(P_\epsilon))]},
  \end{align*}
  where we recall that the model \( \mathcal{P} \) is assumed bounded
  by the $\sigma$-finite measure $\nu$. By the dominated convergence
  theorem and Assumption~\ref{assum-path}~\ref{item:4} we may write
  \begin{align*}
    & \frac{\partial }{\partial \epsilon} \Big|_{\epsilon=0}
      A_1(\epsilon)
    \\
    & =
      \int_{\mathcal{W}}
      \left[
      \frac{\partial }{\partial \epsilon}
      \Big|_{\epsilon=0}
      \rho(P_{\epsilon})(w)
      \frac{\eta_{\alpha}(P)(w)}{\pi(P)(w)}
      (\pi(P_{\epsilon})(w) - \pi(P)(w))
      (\mu(P_{\epsilon})(1,w) - \mu({P})(1,w))
      \right]
      \nu(\diff w).
  \end{align*}
  It then follows from the product rule that the right-hand side is
  zero. The same argument applies to \( A_0(\epsilon) \). Finally, it
  follows from Lemma~\ref{lemma:luedtke-laan} with \( r=1 \) that the
  derivative of \( B(\epsilon) \) at \( \epsilon=0 \) is zero because
  of Assumption~\ref{assum-path}~\ref{item:4} and the assumption that
  $u \mapsto \gamma(P)(u)$ is continuous at $\alpha$.
\end{proof}

\subsection{Proof of Theorem~\ref{theorem:gren-est-rates}}
\label{sec:proof-theor-refth}

To derive the asymptotic distribution of the cross-fitted
Grenander-type estimator \( \hat{\gamma}^{\text{Gr}}_n \) we use
Theorem~4 from \cite{westling2020unified}. To do so, we verify that
the conditions of that theorem are satisfied under the assumptions of
our Theorem~\ref{theorem:gren-est-rates}. We refer to the conditions
of Theorem~4 in \citep{westling2020unified} with the prefix WC. One
assumption is that \( \gamma(P) \) is differentiable in a neighborhood
around \( \alpha \), which we also state explicitly as an
assumption. Next, we note that we do not employ a domain
transformation \( \Phi \), which means that condition (WC.A4) is
void. It remains to verify that the cross-fitted one-step estimator
$\hat{\Gamma}_n^{\bullet}$ satisfies display (2) in
\citep{westling2020unified} and that conditions (WC.A5) and
(WC.B1)-(WC.B5) are fulfilled. In a moment, we turn to verify each of
these conditions in turn. We comment on the conditions along the way,
but as each condition is technical and long to state, we refer to
\cite{westling2020unified} for the precise statements.

We start by establishing a decomposition, which demonstrate that
display (2) in \citep{westling2020unified} holds, and which we refer
to repeatedly in the following. By adding and subtracting terms we
obtain for each \( k=1, \dots, K \) the expansion
\begin{equation}
  \label{eq:37}
  \begin{split}
    \hat{\Gamma}_n^{k}(\alpha) - \Gamma(P)(\alpha)
    & = \Gamma(\hat{P}_n^{-k}) +
      \mathbb{P}_n^k{[\upsilon_{\alpha}(\hat{P}_n^{-k})]} - \Gamma(P)(\alpha)
    \\
    & =
      \Gamma(\hat{P}_n^{-k}) +
      \mathbb{P}_n^k{[\upsilon_{\alpha}(\hat{P}_n^{-k})]} - \Gamma(P)(\alpha)
      \pm (\mathbb{P}_n^k-P){[\upsilon_{\alpha}(P)]}
      \pm P{[\upsilon_{\alpha}(\hat{P}_n^{-k})]}
    \\
    & = (\mathbb{P}_n^k-P){[\upsilon_{\alpha}(P)]}
    \\
    & \quad +
      (\mathbb{P}_n^k-P){[\upsilon_{\alpha}(\hat{P}_n^{-k})-\upsilon_{\alpha}(P)]}
    \\
    & \quad +
      \Gamma(\hat{P}_n^{-k}) +
      P{[\upsilon_{\alpha}(\hat{P}_n^{-k})]} - \Gamma(P)(\alpha)
    \\
    &= (\mathbb{P}_n^k-P){[\upsilon_{\alpha}(P)]}
      +
      (\mathbb{P}_n^k-P){[\upsilon_{\alpha}(\hat{P}_n^{-k})-\upsilon_{\alpha}(P)]}
      +
      \mathrm{Rem}_{\alpha}(\hat{P}_n^{-k}, P),
  \end{split}
\end{equation}
where we use the definition of \( \mathrm{Rem}_{\alpha} \) given in
equation~\eqref{eq:20}. This gives that
\begin{equation}
  \label{eq:WC-display2}
  \hat{\Gamma}_n^{\bullet}(\alpha) - \Gamma(P)(\alpha)
  = (\mathbb{P}_n-P){[\upsilon_{\alpha}(P)]}
  + H_{\alpha,n},
\end{equation}
where
\begin{equation}
  \label{eq:38}
  \begin{split}
    & H_{n}(\alpha)
    \\
    &=
    \frac{1}{K}\sum_{k=1}^{K}\underbrace{(\mathbb{P}_n^k -
    \mathbb{P}_n){[\upsilon_{\alpha}(P)]}}_{H_{n,1}^k(\alpha)}
    +
    \frac{1}{K}\sum_{k=1}^{K}
    \underbrace{(\mathbb{P}_n^k-P){[\upsilon_{\alpha}(\hat{P}_n^{-k})-\upsilon_{\alpha}(P)]}}_{H_{n,2}^k(\alpha)}
    +
    \frac{1}{K}\sum_{k=1}^{K}
    \mathrm{Rem}_{\alpha}\underbrace{(\hat{P}_n^{-k}, P)}_{H_{n,3}^k(\alpha)}.
  \end{split}
\end{equation}
Equation~\eqref{eq:WC-display2} is display~(2) in
\citep{westling2020unified} for the estimator
\( \hat{\Gamma}_n^{\bullet} \).

\paragraph{(WC.A5)} This condition state that
\( \| \hat{\Gamma}_n^{\bullet} - \Gamma(P) \|_{\infty} =
\smallO_P{(1)}\), where \( \|\blank\|_{\infty} \) denotes the supremum
norm over the interval \( [-1,1] \). We first note that because
\( \{ \nu_{\alpha}(P) : \alpha \in I\} \) is a Glivenko-Cantelli
class, it follows that
\( \|(\mathbb{P}_n-P){[\nu_{\alpha}]}\|_{\infty} = \smallO_P{(1)}\),
and so equation~\eqref{eq:WC-display2} implies condition (WC.A5) if we
can show that \( \| H_n \|_{\infty} =\smallO_P{(1)} \). We do this by
showing the \( \| H_{n,l}^k \|_{\infty} =\smallO_P{(1)} \) for each
\( l=1,2,3 \) and any \( k \).

We may write
\begin{equation*}
  \frac{1}{K} \sum_{k=1}^{K} H_{n, 1}^{k}(\alpha)
  =
  \frac{1}{K} \sum_{k=1}^{K} 
  \left(
    \frac{1}{n_k} - \frac{K}{n}
  \right) \sum_{i \not \in \mathcal{D}_k} \nu_{\alpha}(O_i)
  =
  \frac{1}{K} \sum_{k=1}^{K} 
  \left(
    \frac{n}{n_k} - K
  \right) \frac{1}{n} \sum_{i \not \in \mathcal{D}_k} \nu_{\alpha}(O_i),
\end{equation*}
where \( \mathcal{D}_k \) denotes the \( k \)'th fold of the data. By
Assumption~\ref{assum2}, that the partition of the data is made such
that \( n/n_k - K =  \smallO_P{(1)} \), so
that\begin{equation*} \frac{1}{K} \sum_{k=1}^{K} H_{n, 1}^{k}(\alpha)
  = \smallO_P{(1)} \frac{1}{K} \sum_{k=1}^{K} \sum_{i \not \in
    \mathcal{D}_k} \nu_{\alpha}(O_i) = \smallO_P{(1)}
  \mathbb{P}_n[\nu_{\alpha}].
\end{equation*}
By Assumption~\ref{assum1}, \( \nu_\alpha(P) \) is bounded so
\( |\mathbb{P}_n{[\nu_\alpha(P)]}| \leq C \), for some finite \( C \)
uniformly in \( \alpha \). Hence
\( \| \frac{1}{K} \sum_{k=1}^{K} H_{n, 1}^{k} \|_{\infty} =
\smallO_P{(1)} \). Next, conditional on \( \hat{P}_n^{-k} \), the
class of functions
\( \{ \nu_{\alpha}(\hat{P}_n^{-k}) - \nu_{\alpha}(P) : \alpha \in
[-1,1]\} \) is a Glivenko-Cantelli class which means that
\begin{equation*}
  \sup_{\alpha \in I} \left| 
    {(\mathbb{P}_n^k-P){[\upsilon_{\alpha}(\hat{P}_n^{-k})-\upsilon_{\alpha}(P)]}}
  \right|
  \arrow{P} 0,
\end{equation*}
so hence also
\( \| \frac{1}{K} \sum_{k=1}^{K} H_{n,2}^k\|_{\infty} \arrow{P} 0
\). To control the term \( H_{n,3}^k \), and for later reference, we
establish the following lemma, where we use the norm
\( \| f \|_P = (P{[f^2]})^{1/2} \).
\begin{lemma}  
  \label{lemma:rem-bound}
  Assume that \( \mathcal{P} \) fulfills
  Assumption~\ref{assum1}. Then for any
  \( P, P' \in \mathcal{P} \),
  \begin{align*}
    \mathrm{Rem}_{\alpha}(P',P)
    &  \leq
      P{[|\eta_{\alpha}({P'})  - \eta_{\alpha}({P})| \cdot | \tau(P') - \tau({P})|]}
      +
      \sqrt{2}c^{-1}(\|\pi(P') - \pi(P)\|_P \|\mu(P') -
      \mu(P)\|_P).
  \end{align*}
\end{lemma}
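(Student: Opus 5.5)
We start from the explicit three-term expression for $\mathrm{Rem}_{\alpha}(P',P)$ given in Lemma~\ref{lemma:gateaux}, and bound the first two (propensity--outcome product) terms and the third (sublevel-indicator) term separately.

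\textbf{Third term.} We show pointwise that
\[
(\eta_{\alpha}(P')-\eta_{\alpha}(P))(\alpha-\tau(P)) \leq |\eta_{\alpha}(P')-\eta_{\alpha}(P)|\,|\tau(P')-\tau(P)|.
\]
The difference of indicators is nonzero only in two cases.
\begin{itemize}
\item If $\tau(P')\le\alpha<\tau(P)$, the summand equals $\alpha-\tau(P)<0$, so the bound is trivial.
\item If $\tau(P)\le\alpha<\tau(P')$, the summand equals $-(\alpha-\tau(P))\le 0$, again trivially bounded.
\end{itemize}
So the term is in fact nonpositive. For the absolute value version, note that in either case $\alpha$ lies between $\tau(P)$ and $\tau(P')$. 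Hence $|\alpha-\tau(P)|\le|\tau(P')-\tau(P)|$, and integrating against $P$ gives the first term of the bound.

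\textbf{First two terms.} Both are of the form $P[\eta_{\alpha}(P')\,h\,(\pi(P)-\pi(P'))(\mu(P)(a,\cdot)-\mu(P')(a,\cdot))]$ with $|h|\le c^{-1}$ by Assumption~\ref{assum1} applied to $P'$. Since $0\le\eta_{\alpha}\le1$, their sum is bounded by
\[
c^{-1}P\bigl[|\pi(P)-\pi(P')|\,\bigl(|\Delta_1|+|\Delta_0|\bigr)\bigr],
\qquad \Delta_a=\mu(P)(a,\cdot)-\mu(P')(a,\cdot).
\]
Cauchy--Schwarz gives
\[
\le c^{-1}\|\pi(P')-\pi(P)\|_P\,\bigl\||\Delta_1|+|\Delta_0|\bigr\|_P .
\]
Then $(|\Delta_1|+|\Delta_0|)^2\le 2(\Delta_1^2+\Delta_0^2)$, which yields the factor $\sqrt2$ times a norm of $\mu(P')-\mu(P)$.

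\textbf{Main obstacle.} The delicate point is how to interpret $\|\mu(P')-\mu(P)\|_P$ for a function of $(a,w)$. We take it as $\bigl(P[\Delta_1^2+\Delta_0^2]\bigr)^{1/2}$, i.e.\ integrating over $W$ and summing over $a$; with this convention the constant is exactly $\sqrt2 c^{-1}$. If instead the norm is meant under the joint law of $(A,W)$, one needs an extra factor from positivity, since $P[\Delta_a^2]\le c^{-1}\|\Delta\|_{P}^2$. In that case the constant must be adjusted, and we would state the convention explicitly.
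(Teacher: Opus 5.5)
Your proposal is correct and follows essentially the same route as the paper. For the two product terms, the paper likewise uses positivity of $\pi(P')$ and Cauchy--Schwarz, applying it to each term separately and then using $\|\Delta_1\|_P+\|\Delta_0\|_P\le\sqrt2\,\|\mu(P')-\mu(P)\|_P$; for the third term it uses that $\alpha$ lies between $\tau(P)$ and $\tau(P')$ wherever the indicators differ. Your remark on the norm is apt, since the paper's constant $\sqrt2\,c^{-1}$ tacitly assumes $\|\mu(P')-\mu(P)\|_P^2=\|\Delta_1\|_P^2+\|\Delta_0\|_P^2$, and your observation that the third term is nonpositive is a small sharpening the paper does not state.
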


\begin{proof}[Proof of Lemma~\ref{lemma:rem-bound}]
  From the uniform bound on $\pi(P')$ and the Cauchy-Schwarz
  inequality we obtain
  \begin{align*}
    & \left|
      P{
        \left[
          \frac{\eta_{\alpha}(P')}{\pi(P')}
          (\pi(P) - \pi(P'))
          (\mu(P)(1,\blank) - \mu(P')(1,\blank))
        \right]}
      \right|
    \\
    & 
    \leq \frac{1}{c} \|\pi(P') - \pi(P)\|_P \|\mu(P')(1,
    \blank) -  \mu(P)(1, \blank) \|_P      ,
  \end{align*}
  and similarly for the case where \( a=0 \). This gives that
  \begin{align*}
    &
      \bigg|  P{
      \left[
      \frac{\eta_{\alpha}(P')}{\pi(P')}
      (\pi(P) - \pi(P'))
      (\mu(P)(1,\blank) - \mu(P')(1,\blank))
      \right]}
    \\
    & \quad
      + P{
      \left[
      \frac{\eta_{\alpha}(P')}{1-\pi(P')}
      (\pi(P) - \pi(P'))
      (\mu(P)(0,\blank) - \mu(P')(0,\blank))
      \right]} \bigg|
    \\
    & \leq
      \frac{1}{c} \|\pi(P') - \pi(P)\|_P
      \big\{
      \|\mu(P')(1, \blank) -  \mu(P)(1, \blank) \|_P
      +
      \|\mu(P')(0, \blank) -  \mu(P)(0, \blank) \|_P 
      \big\}
    \\
    & \leq
      \frac{\sqrt{2}}{c} \|\pi(P') - \pi(P)\|_P
      \|\mu(P') -  \mu(P) \|_P
  \end{align*}
  To bound the remainder term, it remains to bound
  \( P{[(\eta_{\alpha}(P')-\eta_{\alpha}(P))(\alpha -
    \tau( P))]} \). To do so, note that
  \( |\eta_{\alpha}(P')-\eta_{\alpha}(P)| \) is non-zero if
  and only if either
  \( \tau( P') \leq \alpha < \tau( P) \) or
  \( \tau( P) \leq \alpha < \tau( P') \). Hence,
  on the set where
  \( |\eta_{\alpha}(P')-\eta_{\alpha}(P)| \) is not zero, we
  must have that
  \( |\alpha - \tau( P)| \leq |\tau( P') -
  \tau( P)| \). This shows that
  \begin{equation*}
    |P{[(\eta_{\alpha}(P')-\eta_{\alpha}(P))(\alpha -
      \tau( P))]}|
    \leq
    P{[|\eta_{\alpha}({P'})  - \eta_{\alpha}({P})| \cdot | \tau(
      P') - \tau( P)|]}.
  \end{equation*}
  The statement then follows from the expression for the remainder
  term given by Lemma~\ref{lemma:gateaux}.
\end{proof}

By Lemma~\ref{lemma:rem-bound} and Assumptions~\ref{assum1} and~\ref{assum3} it follows that
\begin{equation*}
  H_{n,3}^k  = \bigO_P{(\| \hat{\mu}_n^{-k} - \mu(P) \|_P)} = \smallO_P{(1)},
\end{equation*}
where the last equality is Assumption~\ref{assump:gren2}. We conclude
that \( \| H_n \|_{\infty} = \smallO_P{(1)} \) as wanted.

\paragraph{(WC.B1-2)} Define the class of functions
\( \mathcal{G}_{\alpha, R} = \{g_{\alpha, u} : |u| \leq R \} \), for
\( g_{\alpha, u} = \upsilon_{\alpha+u}(P) -\upsilon_{\alpha}(P) \),
where we suppress the dependence on \( P \) for notational
convenience. Conditions (WC.B1) and (WC.B2) are about controlling the
complexity of the class of functions \( \mathcal{G}_{\alpha, R} \). In
our case, the elements \( g_{\alpha, u} \) can be written on the form
\begin{align*}
  g_{\alpha, u}(O)
  & =
  \1{\{\tau(W) \leq \alpha + u\}}(\alpha - \phi(O))
  \\
  & \quad
  + \1{\{\tau(W) \leq \alpha + u\}} u
  \\
  & \quad  
    - \1{\{\tau(W) \leq \alpha \}}(\alpha - \phi(O))
  \\
  & \quad      
    - (\Gamma(\alpha+u) - \Gamma(\alpha)),
\end{align*}
where \( \phi \) was defined in
Theorem~\ref{theorem:anti-target-diff}. Define the function classes
\( \tilde{\mathcal{G}} = \{\1{\{\tau \leq \alpha + u\}} : |u| \leq R
\} \) and
\( \mathcal{F} = \{ \1{\{ \blank \leq \alpha + u \} } : |u| \leq R \}
\) and note that this can be written as
\( \tilde{\mathcal{G}} = \mathcal{F} \circ \tau \). As
\( \mathcal{F} \) is a VC class, it follows from Lemma~2.6.18~(vii) in
\citep{van1996weak} that \( \tilde{\mathcal{G}} \) is a VC class. As
\( g_{\alpha, u} \) is made up of products and sums of bounded
functions from VC classes, it follows from Lemma~5.1 in
\citep{van2006estimating} that the uniform entropy of
\( \mathcal{G}_{\alpha, R} \) is bounded up to a constant by
\( -\log(\epsilon) \). This shows (WC.B1). To show (WC.B2), we write
\begin{align*}
  & g_{\alpha, u}(O)
  \\
  & =
  (\1{\{\tau(W) \leq \alpha + u\}} - \1{\{\tau(W) \leq \alpha \}})(\alpha - \phi(O))
  + \1{\{\tau(W) \leq \alpha + u\}} u
    - (\Gamma(\alpha+u) - \Gamma(\alpha)),
\end{align*}
and note that this is bounded by for all
\( g \in \mathcal{G}_{\alpha, R} \), by
\begin{equation*}
  \1{\{\alpha < \tau(W) \leq \alpha + R\}}(\alpha + \| \phi
  \|_{\infty})
  + R
  + \sup_{z \in (\alpha, \alpha+ R)}|\gamma(z)| R,
\end{equation*}
where the used the mean value theorem to bound the last term. By
Assumption~\ref{assum1}, the function $\phi$ is uniformly bounded and
as $\gamma$ assumed is differentiable in a neighborhood around
$\alpha$, it follows that for small enough \( R \), we can use
\begin{equation*}
  G_{\alpha, R} = \1{\{\alpha < \tau(W) \leq \alpha + R\}} c_1 + R c_2
\end{equation*}
as envelope function for \( \mathcal{G}_{\alpha, R} \), for suitable
constants \( c_1 \) and \( c_2 \). It follows that
\begin{align*}
  P[G_{\alpha, R}^2]
  &\leq 2c_1^2P({\alpha < \tau(W) \leq \alpha + R})
    + 2 R^2 c_2^2
    \\
  &= 2c_1^2P({\alpha < \tau(W) \leq \alpha + R}) + \bigO{(R^2)}
        \\
  &= 2c_1^2 (\gamma(\alpha+ R) - \gamma(\alpha)) + \bigO{(R^2)}
  \\
  &= 2c_1^2 (\gamma'(\tilde{\alpha}_R)R) + \bigO{(R^2)},
\end{align*}
for some \( \tilde{\alpha}_R \in (\alpha, \alpha+R) \) by the mean
value theorem and the assumption that $\gamma$ is differentiable in a
neighbourhood around $\alpha$. Hence
\( P[G_{\alpha, R}^2] = \bigO{(R)} \), which shows the first part of
(WC.B2). The second part of (WC.B2) follows as \( G_{\alpha, R} \) is
uniformly bounded.

\paragraph{(WC.B3)} This condition concerns decomposing the asymptotic
variance term \( \Sigma(s,t) = P{[\upsilon_t \upsilon_s ]} \) into
differentiable and non-differentiable components, where we again
suppress the dependence on \( P \) in $\upsilon_{\alpha}$ for
notational convenience. Expanding \( \Sigma(s,t) \) gives
\begin{align*}
  \Sigma(s,t)
  & = P[\1{\{\tau \leq s\}} \1{\{\tau \leq
    t\}}(s-\phi)(t-\phi)] 
  \\
  & \quad
    +\Gamma(t) \Gamma(s)
  \\
  & \quad
    - \Gamma(s)P[\1{\{\tau \leq t\}}(t-\phi)]
  \\
  & \quad
    - \Gamma(t)P[\1{\{\tau \leq s\}}(s-\phi)].
\end{align*}
By the tower property and the definition of \( \Gamma \),
\( P[\1{\{\tau \leq s\}}(s-\phi)] = \Gamma(s) \), and so
\begin{equation*}
  \Sigma(s,t)
  =\Sigma^*(s,t) + P[\1{\{\tau \leq s\}} \1{\{\tau \leq
    t\}}(s-\phi)(t-\phi)],
  \quad \text{where} \quad
  \Sigma^*(s,t) = - \Gamma(t) \Gamma(s).
\end{equation*}
Define \( V = \tau(W) \) and \( A(s,t,v) = \E{[(s-\phi)(t-\phi)
  \mid V=v]} \). Then
\begin{align*}
  P[\1{\{\tau \leq s\}} \1{\{\tau \leq t\}}(s-\phi)(t-\phi)]
  & =\E[\1{\{V \leq t \wedge s\}}{\E{[(s-\phi)(t-\phi)\mid V]}}]
  \\
  &  = \E{[{ \1{\{V \leq t \wedge s\}} [A(s,t,V)]}]}
  \\
  &  = \int_{-\infty}^{t \wedge s} A(s,t,v) \gamma(\diff v),
\end{align*}
where we use that the cumulative distribution of \( V \) is $\gamma$
by definition. Thus
\begin{equation*}
    \Sigma(s,t)
    =\Sigma^*(s,t) + \int_{-\infty}^{t \wedge s} A(s,t,v) \gamma(\diff v),
\end{equation*}
From this representation and the assumption that $\gamma$ is
differentiability in neighborhood around \( \alpha \) it is easily
seen that (WC.B3) holds.

\paragraph{(WC.B4-5)} These last conditions concern the locally
uniform control of the error term \( H_n \), which contains both an
empirical process term and a remainder term. Define the term
\begin{equation}
  \label{eq:big-K-term}
  \mathcal{K}_n(\delta;P) = n^{2/3}\sup_{|u|\leq \delta n^{-1/3}} | H_n(\alpha +
  u) - H_n(\alpha) |,
\end{equation}
where \( H_n \) was defined in
equation~\eqref{eq:38}. Condition~(WC.B5) can be replaced with
condition~(WC.A3) which is trivially satisfied in our case our because
\( H_n \) is uniformly bounded by Assumptions~\ref{assum1}
and~\ref{assum3}. Finally, condition~(WC.B4) is explicitly assumed to
hold by Assumption~\ref{assump:gren1}.

\begin{proof}[Proof of Theorem~\ref{theorem:gren-est-rates}]
  By the preceding arguments, the conditions of Theorem~4 in
  \citep{westling2020unified} is fulfilled, and hence it follows from
  this theorem that
  \begin{equation*}
    n^{1/3}(\hat{\gamma}_n^{\mathrm{Gr}}(\alpha) - \gamma(\alpha))
    \rightsquigarrow c(\alpha) Z,
  \end{equation*}
  where \( Z \) follows the Chernoff distribution and
  \( c(\alpha) = [4 \gamma'(\alpha) \kappa(\alpha)]^{1/3} \), with
  \( \kappa(\alpha) = A(\alpha, \alpha, \alpha) \gamma'(\alpha)
  \). We see that in our case, the scaling factor
  \( c(\alpha) \) simplifies to
  \( c(\alpha) = [4 \gamma'(\alpha)^2 A(\alpha, \alpha, \alpha) ]^{1/3}\). Finally, 
  \begin{equation*}
    A(\alpha, \alpha, \alpha) = \E{\left[ (\alpha - \phi)(\alpha -
        \phi) \mid \tau(W) = \alpha \right]}
    = \E{\left[ 
        \frac{\left( Y- \mu(A,W)   \right)^2}{\pi(W)^A (1-\pi(W))^{1-A}}  
        \midd \tau(W) = \alpha \right]},
  \end{equation*}
  so \( A(\alpha, \alpha, \alpha) = \sigma^2(\alpha) \) which proves
  the theorem.
\end{proof}

\subsection{Proof of Theorem~\ref{theorem:spline-coef} and Lemma~\ref{lemma:rate-local-sup-norm}}
\label{sec:proof-theorem-spline}

\begin{proof}[Proof of Theorem~\ref{theorem:spline-coef}]
Define for any \( -1 \leq a < b \leq 1 \) the parameter
\begin{equation*}
  \theta_a^b(P) = \int_a^b \Gamma(P)(u) \diff u,
\end{equation*}
We start by showing that \( \theta_a^b \) is pathwise differentiable
at \( P \) if $\gamma(P)$ is continuous at \( b \), and show that, in
this case, its efficient influence function is
\begin{equation*}
  \dot{\theta}_a^b = \int_a^b \upsilon_u \diff u,
\end{equation*}
with \( \upsilon_u \) defined in
equation~\eqref{eq:anti-target-can-grad}. The argument proceeds in the
same way as the proof of Theorem~\ref{theorem:anti-target-diff}. By
the von~Mises expansion given by Lemma~\ref{lemma:gateaux}, we have
that
\begin{equation*}
  \theta_a^b(P) - \theta_a^b(P_{\epsilon})
  = (P- P_{\epsilon}){
    [
    \dot{\theta}_a^b(P)
      ]}
  + \int_a^b \mathrm{Rem}_{u}(P, P_{\epsilon}) \diff u,
\end{equation*}
where \( \mathrm{Rem}_{u} \) is defined in
equation~\eqref{eq:lemma-gateaux-rem}. By
Proposition~\ref{prop:expansion-gradient}, to show that $\theta_a^b$
is pathwise differentiable with \( \dot{\theta}_a^b \) as efficient
influence function, we just need to show that the derivative of
\( \epsilon \mapsto \int_a^b \mathrm{Rem}_{u}(P, P_{\epsilon}) \diff u
\) is zero a $\epsilon=0$. We write
\begin{equation}
  \label{eq:33}
  \int_a^b \mathrm{Rem}_{u}(P, P_{\epsilon}) \diff u
  = A_1(\epsilon) + A_0(\epsilon) + B(\epsilon)
\end{equation}
with
\begin{equation}
  \label{eq:34}
  \begin{split}
    A_1(\epsilon)
    & = \int_a^b P_{\epsilon}{ \left[
      \frac{\eta_{u}(P)}{\pi(P)}
      (\pi(P_{\epsilon}) - \pi(P))
      (\mu(P_{\epsilon})(1,\blank) - \mu(P)(1,\blank))
      \right]} \diff u
    \\
    A_0(\epsilon)
    & = \int_a^b P_{\epsilon}{ \left[
      \frac{\eta_{u}(P)}{1-\pi(P)}
      (\pi(P_{\epsilon}) - \pi(P))
      (\mu(P_{\epsilon})(0,\blank) - \mu(P)(0,\blank))
      \right]} \diff u
    \\
    B(\epsilon)
    & = \int_a^b
      P_{\epsilon}{[(\eta_{u}(P)-\eta_{u}(P_{\epsilon}))(u
      - \tau(P_\epsilon))]}
      \diff u.
  \end{split}
\end{equation}
We may further decompose
\begin{equation}
  \label{eq:35}
  \begin{split}
    B(\epsilon)
    &=
      P_{\epsilon}{
      \left[
      \int_a^b (\eta_{u}(P)-\eta_{u}(P_{\epsilon}))(u
      - \tau(P_\epsilon))
      \diff u
      \right]}   
    \\
    & =
      \frac{1}{2} P_{\epsilon}{
      \left[ \eta_{b}(P)
      \left\{
      (b - \tau(P_{\epsilon}))^2
      - (\tau(P) - \tau(P_{\epsilon}))^2
      \right\}
      -
      \eta_{b}(P_{\epsilon})
      \left\{
      (b - \tau(P_{\epsilon}))^2
      \right\}
      \right]}   
    \\
    & =
      \underbrace{\frac{1}{2} P_{\epsilon}{
      \left[ 
      \left\{
      \eta_{b}(P) - \eta_{b}(P_{\epsilon})
      \right\}
      \left\{  b - \tau(P_{\epsilon})
      \right\}^2 \right]}}_{B_1(\epsilon)}
      -\underbrace{\frac{1}{2} P_{\epsilon}{
      \left[ \eta_{b}(P)
      \left\{  \tau(P) - \tau(P_{\epsilon})
      \right\}^2
      \right]}}_{B_2(\epsilon)}.
  \end{split}
\end{equation}

Under Assumption~\ref{assum-path}~\ref{item:4}, the dominated
convergence theorem implies that the derivatives of \( A_0 \),
\( A_1 \), and \( B_2 \) at $\epsilon=0$ are zero.
Lemma~\ref{lemma:luedtke-laan}, applied with \( r=2 \), imply that the
derivative of \( B_1 \) at \( \epsilon=0 \) is zero. This concludes
the proof that $\theta_a^b$ is pathwise differentiable with
$\dot{\theta}_a^b$ the efficient influence function, when \( \gamma \)
is continuous at \( b \).

Proposition~3.3.1 in \citep{bickel1993efficient} states that any
asymptotically linear estimator with influence function equal to the
efficient influence function is regular and locally efficient. Hence
to prove Theorem~\ref{theorem:spline-coef} we show that, under the
assumptions of the theorem, \( \hat{\Gamma}_n^{\bullet}(a) \) is
asymptotically linear with influence function $\upsilon_a$ for
\( a \in \{a_0, a_{L+1}\} \), and that
\( \int_a^b \hat{\Gamma}_n^{\bullet}(u) \diff u \) is asymptotically
linear with influence function \( \dot{\theta}_a^b \), for each
\( a, b \in \{a_0, \dots, a_{L+1}\} \). The theorem then follows
because the components of \( \zeta \) are linear combinations of
\( \Gamma(a_l) \) and \( \theta_{a_l}^{a_{l+1}} \),
\( l = 0, \dots, L \).

We recall that we defined the one-step estimator as
\begin{equation*}
  \hat{\Gamma}_n^{\bullet}(a) = \frac{1}{K}\sum_{k=1}^{K}
  \hat{\Gamma}_n^{k}(a),
  \quad \text{with} \quad
  \hat{\Gamma}_n^{k}(a)
  =
  \mathbb{P}_n^k{[\upsilon_a(\hat{P}_n^{-k})]},
\end{equation*}
so
\begin{equation*}
  \hat{\Gamma}_n^{\bullet}(a) - \Gamma(P) =
  \frac{1}{K}\sum_{k=1}^{K}
  \left\{
    \hat{\Gamma}_n^{k}(a) - \Gamma(P)
  \right\},
\end{equation*}
and
\begin{equation*}
  \int_a^b \hat{\Gamma}_n^{\bullet}(u) \diff u - \theta_a^b(P) =
  \frac{1}{K}\sum_{k=1}^{K}
  \left\{
    \int_a^b \hat{\Gamma}_n^{k}(u) \diff u - \theta_a^b(P)
  \right\}.
\end{equation*}
Hence asymptotic linearity of the averaged estimators
\( \hat{\Gamma}_n^{\bullet}(a) \) and
\( \int_a^b \hat{\Gamma}_n^{\bullet}(u) \diff u \) will follow if we
can show
\begin{align}
  \label{eq:31}
  \hat{\Gamma}_n^{k}(a) - \Gamma(P)(a)
  & =
    (\mathbb{P}_n^k-P){[\upsilon_a(P)]} + \smallO_P{(n_k^{-1/2})},
    \intertext{and}
  \label{eq:32}
  \int_a^b \hat{\Gamma}_n^{k}(u) \diff u - \theta_a^b(P)
  &=
    (\mathbb{P}_n^k-P){[\dot{\theta}_a(P)]} + \smallO_P{(n_k^{-1/2})},
\end{align}
for any \( k=1, \dots, K \), where \( n_k \) is the number of
observations in the \( k \)'th fold.

\paragraph{Asymptotic linearity of $\hat{\Gamma}_n^{k}(a)$ for \( a
  \in \{a_0, a_{L+1}\} \).}
We first use the expansion from equation~\eqref{eq:37}, which states
that
\begin{equation*}
  \hat{\Gamma}_n^{k}(a) - \Gamma(P)(a) =
  (\mathbb{P}_n^k-P){[\upsilon_a(P)]}
      +
      (\mathbb{P}_n^k-P){[\upsilon_a(\hat{P}_n^{-k})-\upsilon_a(P)]}
      +
      \mathrm{Rem}_{a}(\hat{P}_n^{-k}, P).
    \end{equation*}
It is a standard result
\cite[e.g.,][Lemma~1]{kennedy2022semiparametric} that if
\( \| \upsilon_a(\hat{P}_n^{-k}) - \upsilon_a(P) \|_P \arrowP 0 \)
then
\( (\mathbb{P}_n^k-P){[\upsilon_a(\hat{P}_n^{-k})-\upsilon_a(P)]} =
\smallO_P{(n_k^{-1/2})} \). Assumption~\ref{assump:spline1} imply that
\( \| \upsilon_a(\hat{P}_n^{-k}) - \upsilon_a(P) \|_P \arrowP 0 \),
and so it follows that
\begin{equation*}
  \hat{\Gamma}_n^{k}(a) - \Gamma(P)(a) =
  (\mathbb{P}_n^k-P){[\upsilon_a(P)]}
  +
  \mathrm{Rem}_{a}(\hat{P}_n^{-k}, P)
  + \smallO_P{(n_k^{-1/2})}.
\end{equation*}
This equation implies equation~\eqref{eq:31}, because
Assumptions~\ref{assump:spline2} and~\ref{assump:spline3} together
with Lemma~\ref{lemma:rem-bound} and Assumptions~\ref{assum1}
and~\ref{assum3} imply that
\( \mathrm{Rem}_{a}(\hat{P}_n^{-k}, P) = \smallO_P{(n_k^{-1/2})}\)

\paragraph{Asymptotic linearity of
  $\int_a^b \hat{\Gamma}_n^{k}(u) \diff u$ for \( a, b \in \{a_0, \dots, a_{L+1}\} \).}

Using arguments similar to the ones above, we obtain the expansion
\begin{equation}
  \label{eq:36}
  \int_a^b \hat{\Gamma}_n^{k}(u) \diff u - \theta_a^b(P)
    = (\mathbb{P}_n^k-P){[\dot{\theta}_a^b(P)]}
    +
    (\mathbb{P}_n^k-P){[\dot{\theta}_a^b(\hat{P}_n^{-k}) - \dot{\theta}_a^b(P)]}
    +
    \int_a^b \mathrm{Rem}_{u}(\hat{P}_n^{-k}, P) \diff u,
\end{equation}
and we now argue, as above, that the two right-most terms on the
right-hand side are \( \smallO_P{(n^{-1/2})} \). We have that
\begin{equation*}
  \dot{\theta}_a^b(P) = \tilde{\theta}_a^b(P) -
  P{[\tilde{\theta}_a^b(P)]},
  \quad \text{where} \quad
  \tilde{\theta}_a^b(P) = \int_a^b \eta_u(P)(u - \phi(P)) \diff u,
\end{equation*}
where $\phi$ was defined in Theorem~\ref{theorem:anti-target-diff}.
Hence, to show that
\( \| \dot{\theta}_a^b(\hat{P}_n^{-k}) - \dot{\theta}_a^b(P) \|_P
\arrowP 0 \), it is enough to show that
\( \| \tilde{\theta}_a^b(\hat{P}_n^{-k}) - \tilde{\theta}_a^b(P) \|_P
\arrowP 0 \). We have that
\begin{equation*}
  \tilde{\theta}_a^b
  = \int_a^b \eta_u(u - \phi) \diff u
  = \eta_b \int_{a \vee \tau}^b (u - \phi) \diff u
  = \frac{\eta_b}{2}
  \left\{
    (b- \phi)^2  - ({a \vee \tau} - \phi)^2 
  \right\}  
\end{equation*}
so Assumption~\ref{assump:spline1} imply that
\( \| \tilde{\theta}_a^b(\hat{P}_n^{-k}) - \tilde{\theta}_a^b(P) \|_P
\arrowP 0 \). To show asymptotic linearity of
$\int_a^b \hat{\Gamma}_n^{k}(u) \diff u$, it now only remains to show
that
\( \int_a^b \mathrm{Rem}_{u}(\hat{P}_n^{-k}, P) \diff u \arrowP 0
\). For this we use the expansions given in
equations~\eqref{eq:33}-\eqref{eq:35} to see that
\begin{align*}
  \left| \int_a^b \mathrm{Rem}_{u}(\hat{P}_n^{-k}, P) \diff u
  \right|
  &  = \bigO_P{(\|\hat{\mu}_n - \mu(P) \|_P \|\hat{\pi}_n - \pi(P) \|_P
    )}
  \\
  & \quad
    +
    \bigO_P{(\|\hat{\tau}_n - \tau(P) \|_P^2)}
  \\
  & \quad
    +
    \bigO_P{(P{
    \left[ 
    \left\{
    \hat{\eta}_{b,n} - \eta_{b}(P)
    \right\}
    \left\{  b - \tau(P)
    \right\}^2 \right]})}.
\end{align*}
When \( \hat{\eta}_{b,n} - \eta_{b}(P) \) is non-zero we must have
that either \( \hat{\tau}_n \leq b < \tau(P) \) or \(  \tau(P) \leq b
< \hat{\tau}_n \), so almost surely
\begin{equation*}
  \left\{
    \hat{\eta}_{b,n} - \eta_{b}(P)
  \right\}
  \left\{  b - \tau(P)
  \right\}^2
  \leq
  \left\{
    \hat{\eta}_{b,n} - \eta_{b}(P)
  \right\}
  \left\{  \hat{\tau}_n - \tau(P)
  \right\}^2,
\end{equation*}
and thus
\begin{align*}
  \left| \int_a^b \mathrm{Rem}_{u}(\hat{P}_n^{-k}, P) \diff u
  \right|
  &  = \bigO_P{(\|\hat{\mu}_n - \mu(P) \|_P \|\hat{\pi}_n - \pi(P) \|_P
    )} 
    +
    \bigO_P{(\|\hat{\tau}_n - \tau(P) \|_P^2)}
  \\
    &  = \bigO_P{(\|\hat{\mu}_n - \mu(P) \|_P \|\hat{\pi}_n - \pi(P) \|_P
    )} 
    +
      \bigO_P{(\|\hat{\mu}_n - \mu(P) \|_P^2)}.
\end{align*}
Hence \( \int_a^b \mathrm{Rem}_{u}(\hat{P}_n^{-k}, P) \arrowP 0 \) by
Assumption~\ref{assump:spline2}, and so asymptotic linearity of
$\int_a^b \hat{\Gamma}_n^{k}(u) \diff u$ follows from
equation~\eqref{eq:36} and the previous arguments.
\end{proof}

\begin{proof}[Proof of Lemma~\ref{lemma:rate-local-sup-norm}]

  When \( \hat{\eta}_{\alpha,n} \not = \eta_{\alpha} \), $\hat{\tau}_n$ and $\tau$ are
  on different sides of $\alpha$ and thus in this case \( | \alpha -
  \tau | \leq |\hat{\tau}_n
  - \tau | \), i.e.,
  \begin{equation}
    \label{eq:inclusion}
    \{ \hat{\eta}_{\alpha,n} \not = \eta_{\alpha}  \}
    \subseteq \{  | \alpha - \tau|
    \leq | \tau - \hat{\tau}_n| \}
    \subseteq
    \{  | \alpha - \tau|
    \leq d_n \},
  \end{equation}
  with \( d_n = \| \tau - \hat{\tau}_n\|_{\infty} \).  This implies
  that
  \begin{align*}
    P{[\1{\{ \hat{\eta}_{\alpha,n} \not = \eta_{\alpha}  \}}]}
    \leq
      P{[\1{\{  | \alpha - \tau| \leq d_n \}}]}
    & =
      P( \tau(W) \in [d_n - \alpha, d_n+\alpha] \mid d_n)
      \\
    & = \gamma(d_n + \alpha) - \gamma((d_n-\alpha) -).
  \end{align*}
  The assumption that $\gamma$ is assumed differentiable in a
  neighborhood around $\alpha$ implies that there exists a constant
  finite constant \( C > 0 \) such that
  \begin{equation*}
    \gamma(d_n + \alpha) - \gamma((d_n-\alpha) -) \leq C d_n,
  \end{equation*}
  and so
  \begin{equation*}
    P{[\1{\{ \hat{\eta}_{\alpha,n} \not = \eta_{\alpha}  \}}]} \leq C d_n =  C \| \hat{\tau}_n - \tau \|_{\infty}
  \end{equation*}
  Hölder's inequality then implies that
  \begin{equation*}
    |P{[(\hat{\eta}_{\alpha,n} - \eta_{\alpha}) (\hat{\tau}_n - \tau) ]}|
    \leq
    P{[ \1{\{\hat{\eta}_{\alpha,n} \not = \eta_{\alpha}\}} |\hat{\tau}_n - \tau| ]}
    \leq
    \1{\{\hat{\eta}_{\alpha,n} \not = \eta_{\alpha}\}} \| \hat{\tau}_n - \tau
    \|_{\infty}
    = \bigO{(\| \hat{\tau}_n - \tau \|_{\infty}^2)}.
  \end{equation*}
\end{proof}

\section{Confidence bands for splines}
\label{sec:conf-bands-splin}

We here provide a detailed formula for a simultaneously valid
confidence band. The constructing is standard and sometimes referred
to as a \( t \)-sup confidence band
\citep[e.g.,][]{montiel2019simultaneous}, see also section 5.7 in
\citep{wasserman2006all}. We use the notation and definitions from
Section~\ref{sec:appr-estim}. Let
\( H(\alpha) = (H_0(\alpha), \dots, H_{L+1}(\alpha)) \) denote the
vector of basis functions evaluated at $\alpha$. We may then write
\begin{equation*}
  \hat{\gamma}^{\#}_n(\alpha) = H(\alpha)^{\top} M^{-1} \zeta(\hat{\Gamma}_n^{\bullet}).
\end{equation*}
By Theorem~\ref{theorem:spline-coef}, the asymptotic variance of \(
\sqrt{n}(\zeta(\hat{\Gamma}_n^{\bullet}) -\zeta) \)  is given by
\begin{equation*}
  \Sigma, \quad \text{where} \quad
  \Sigma_{j,r} = P{[\dot{\zeta}_j^{\top} \dot{\zeta}_r]},
  \quad j = 0, \dots L+1, r = 0, \dots L+1, 
\end{equation*}
and by the delta method, the asymptotic distribution of the
coefficient vector estimate
\( M^{-1} \zeta(\hat{\Gamma}_n^{\bullet}) \) is given by
\begin{equation*}
  \Theta=M^{-1} \Sigma M^{-\top}.
\end{equation*}
Define the estimator
\begin{equation*}
  \hat{\Theta}_n = M^{-1} \hat{\Sigma}_n M^{-\top},
  \quad \text{where} \quad
  \hat{\Sigma}_{j,r,n} = \empmeas{[\hat{\dot{\zeta}}_j^{\top} \hat{\dot{\zeta}}_r]},
\end{equation*}
and the \( \hat{\dot{\zeta}} \) denotes the plug-in estimator of
efficient influence function $\dot{\zeta}$, obtained by replacing the
nuisance parameters by their estimated counterparts. A pointwise
asymptotic variance estimate is then
\begin{equation*}
  \hat{\sigma}_n^2(\alpha)
  = H(\alpha)^{\top} \hat{\Theta}_n H(\alpha).
\end{equation*}
To obtain a asymptotically valid confidence band over
\( [\alpha_0, \alpha_{L+1}] \), we first sample a
large number \( B \) of centered multivariate Gaussians with variance
equal to the estimated asymptotic variance of the coefficient vector,
i.e., we obtain the samples
\begin{equation*}
  Z_b \sim \mathcal{N}(0, \hat{\Theta}_n), \quad b=1, \dots, B.
\end{equation*}
To construct a band, in practice we use a very fine equidistant grid
\( \{u_g\}_{g=1}^G \subset [\alpha_0, \alpha_{L+1}] \), define
\begin{equation*}
  T_b = \max_{g = 1, \dots, G} \frac{|H(u_g)^{\top} Z_b|}{\hat{\sigma}_n(u_g)},
\end{equation*}
and define the empirical \( (1-\alpha) \)-quantile of
\( \{T_b : b=1, \dots, B\} \) as \( \hat{c}_{1-\alpha} \). The
suggested \( t \)-sup confidence band is then
\begin{equation}
  \label{eq:spline-ci}
  \hat{\gamma}^{\#}_n(u)  \pm \hat{c}_{1-\alpha} \hat{\sigma}_n(u) ,
  \quad u \in \{u_1, \dots, u_G\},
\end{equation}
where \( \hat{\gamma}^{\#}_n(\alpha) \) was defined in equation~\eqref{eq:spline-approx-est}.

\section{Additional information about the numerical studies}
\label{sec:addit-inform-about}

\begin{figure}[h]
  \centerline{\includegraphics[width=1\linewidth]{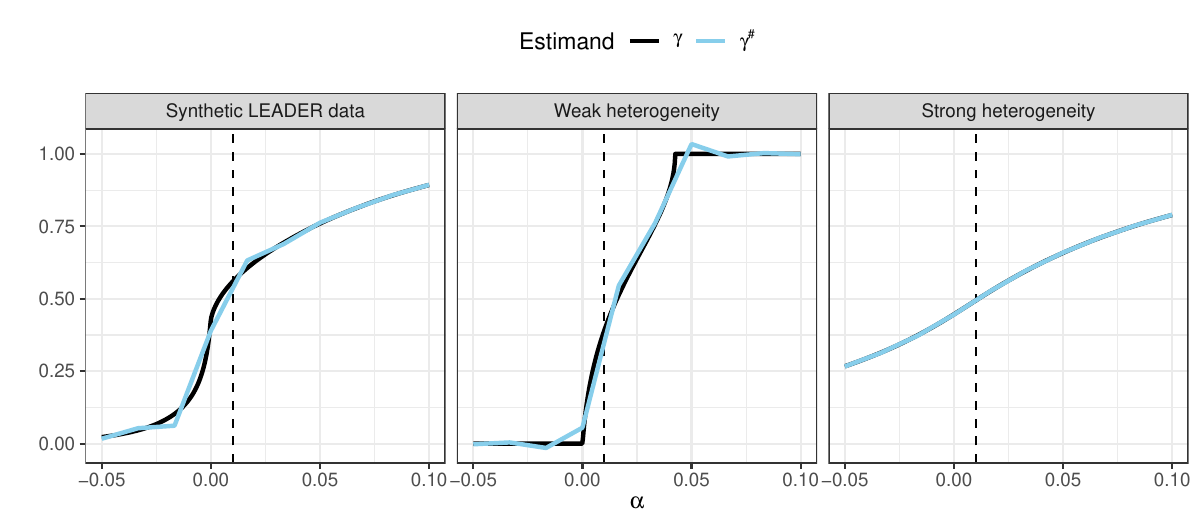}}
  \caption[]{The true sublevel function ($\gamma$) and its best
    piece-wise linear approximation ($\gamma^{\#}$) based on a fixed
    set of knot points in the interval \( [-0.05, .1] \) for each of
    the three data-generating mechanism described in
    Section~\ref{sec:numer-exper}. The dashed vertical line at
    $\alpha=0.01$ denotes the value at which we evaluated the
    pointwise performance of the suggested estimators.}
  \label{fig:sim-true-curves}
\end{figure}

\bibliography{ref.bib}

\end{document}